\newtheorem{theorem}{Theorem}[]
\newtheorem{remark}{Remark}[]
\renewcommand{\section}{
	\@startsection
	{section}
	{1}
	{0pt}
	{1.1\baselineskip}
	{0.2\baselineskip}
	{\sc \centering}
}
\renewcommand{\subsection}{
	\@startsection
	{subsection}
	{1}
	{0pt}
	{1.1\baselineskip}
	{0.2\baselineskip}
	{\sc \centering}
}
\renewcommand{\subsubsection}{
	\@startsection
	{subsubsection}
	{1}
	{0pt}
	{1.1\baselineskip}
	{0.2\baselineskip}
	{\sc \centering}
}
\renewcommand{\thesection}{\arabic{section}}
\begin{document}
	
\title{\large\sc Strategic Control of Drug-Resistant HIV: Multi-Strain Modeling with Diagnosis, Adherence, and Treatment Switching}
\normalsize
\author{
\sc{Ashish Poonia} \thanks{Department of Mathematics, Indian Institute of Technology Guwahati, Guwahati-781039, India, e-mail: apoonia@iitg.ac.in}
\and 
\sc{Siddhartha P. Chakrabarty} \thanks{Department of Mathematics, Indian Institute of Technology Guwahati, Guwahati-781039, India, e-mail: pratim@iitg.ac.in, Phone: +91-361-2582606}}

\date{}
\maketitle

\begin{abstract}
A central challenge in Human Immunodeficiency Virus (HIV) public health policy lies in determining whether to universally expand treatment access, despite the risk of sub-optimal adherence and consequent drug resistance, or to adopt a more strategic allocation of resources that balances treatment coverage with adherence support. This dilemma is further complicated by the need for timely switching to second-line therapy, which is critical for managing treatment failure but imposes additional burdens on limited healthcare resources. In this study, we develop and analyze a compartmental model of HIV transmission that incorporates both drug-sensitive and drug-resistant strains, diagnosis status, and treatment progression, including switching to second-line therapy upon detection of resistance. Basic reproduction numbers for both strains are derived, and equilibrium analysis reveals the existence of a disease-free state and two endemic states, where the drug-sensitive strain may be eliminated while the drug-resistant strain persists. Local and global sensitivity analyses are performed, using partial rank correlation coefficient (PRCC) and Sobol methods, to identify key parameters influencing different model outcomes. We extend the model using optimal control theory to assess multiple intervention strategies targeting diagnosis, treatment initiation, and adherence. A novel dynamic control framework is proposed to achieve the UNAIDS 95-95-95 targets through efficient resource allocation. Numerical simulations validate the analytical results and compare the effectiveness and cost-efficiency of control strategies. Our findings highlight that long-term HIV epidemic control depends critically on prioritizing adherence-focused interventions alongside efforts to expand first-line treatment coverage.

{\it Keywords: HIV/AIDS; Drug Resistance; Drug Adherence; Treatment Switching; Sensitivity Analysis; Optimal Control; Dynamic Optimization }

\end{abstract}

\section{Introduction}
Acquired Immunodeficiency Syndrome (AIDS) has persisted as a significant contributor to global mortality and morbidity, predominantly affecting low and middle-income countries \cite{verm2014}. AIDS is caused by the Human Immunodeficiency Virus (HIV), an RNA retrovirus that primarily targets the immune system of the host and weakens its ability to defend against other opportunistic infections. The World Health Organization (WHO) classifies HIV/AIDS as a global epidemic, with a total of 88.4 million infections and 42.3 million deaths from HIV-related illnesses since the start of the epidemic. In 2023, an estimated 39.9 million individuals were living with HIV worldwide, including 1.3 million new infections, with 0.63 million people having succumbed to AIDS-related complications \cite{fact2024}. These figures highlight the substantial burden that HIV/AIDS continues to put on the public health systems, worldwide.

The progression of HIV infection is typically divided into three stages: the acute phase, the chronic phase, and the final progression to AIDS, which is identified by a significant depletion of CD4+ T-cells \cite{hern2013}. The acute phase, which generally lasts for a few weeks, is characterized by a rapid increase in viral load, making individuals highly infectious during this short window \cite{mill2010}. Following this, the immune system initiates a response, leading to a decline in viral load and a transition to the chronic stage, where individuals often remain asymptomatic. Although the acute phase is highly infectious, most new HIV transmissions occur during the chronic stage, particularly in long-term partnerships \cite{gurs2023}. Without antiretroviral treatment, CD4+ T-cell counts gradually declines from healthy levels (around $800 \text{ to }1000\text{ cells/mm}^{3}$) to below $200\text{ cells/mm}^{3}$, eventually leading to the onset of AIDS, over a span of 5 to 10 years \cite{cai2009}.

Despite global efforts to reduce HIV transmission, late diagnosis continues to be a critical area of concern, as it is associated with higher risk of onward transmission, increased morbidity and mortality, and higher treatment costs as compared to timely diagnosis \cite{late2020}. Late diagnosis is characterized by a CD4+ T-cell count $\leq 350\text{ cells/mm}^{3}$ within 91 days of diagnosis \cite{crox2022}. According to an estimate, approximately 5.4 million people were living with HIV without the knowledge of their disease status in 2023, which is a concerning figure, given its critical impact on the further transmission of the disease within their communities \cite{fact2024}. The leading cause of late HIV diagnosis is insufficient testing, which is influenced by a complex network of societal, systemic, and individual factors. Structural or interpersonal discrimination based on sexual orientation, ethnicity, or age creates barriers to routine testing, health-seeking behaviours, and testing in response to symptoms of the disease \cite{karv2022}. Implementation of effective interventions to increase timely HIV testing can be achieved by identifying barriers in our local communities. These interventions include public information campaigns, wider availability of free at-home testing, and healthcare initiatives aimed at reducing HIV-related stigma and discrimination \cite{boar2024}. Self-imposed behavioral modifications, driven by psychological fear and increased awareness through media and public health campaigns, play a critical role in reducing HIV transmission globally. \cite{ghos2018, xue2022}.

Ongoing efforts towards diagnosis, treatment, and prevention have progressed significantly, but HIV is still not a fully curable disease. However, along with early diagnosis of HIV infection, timely introduction of treatment like Antiretroviral Therapy (ART) have been pivotal in reducing mortality rates among individuals living with HIV. With the increasing accessibility to medical facilities, global ART coverage was around $77\%$ by the end of 2023 \cite{fact2024}. ART suppress viral replication significantly, which strengthens the immune system of the infected population. The rate of virologic failure has significantly decreased with the modern ART regimens, which has helped in reducing the life expectancy gap between people living with HIV (PLWH) and the general population. Virologic failure of ART is characterized by two consecutive assessments of viral load above 200 copies/ml \cite{cutr2020}. Virologic failure may result from one or more of three following factors: patient-related factors, viral factors, and treatment-related factors. Sub-optimal adherence to ART is the most common cause of virologic failure, which is related to the behaviour of patient \cite{cutr2020}. Psychosocial factors, including housing instability, limited access to good healthcare system, issues with drug side effects, treatment costs, and high burden of medicines, are major contributors to poor adherence to ART. Depending on the treatment regimen, research suggests that maintaining an adherence rate of 80 to 85\% or higher is typically sufficient for effective HIV viral suppression \cite{byrd2019}.  With the increasing accessibility of ART, maintaining optimal drug adherence is essential for controlling the development of drug resistance.

HIV drug resistance (HIVDR) remains a persistent clinical and public health concern, as it can develop in newly infected patients (who access treatment) and be transmitted from infected individuals to others, posing a serious threat to treatment efficacy and disease control \cite{litt2002, cutr2020}. Drug-resistant mutations typically develop during treatment due to ongoing viral replication, which can result from either sub-optimal drug concentrations caused by poor adherence (patient-related factor) or from an inappropriate combination of ART regimens that fail to effectively suppress viral levels, even with good adherence (treatment-related factor). A study on drug-resistant mutations found a positive correlation between poor adherence to treatment and the number of drug-resistant HIV-infected individuals in the Leningrad region \cite{shch2022}. Risk factors associated with virologic failure due to drug resistance are higher viral loads or lower CD4+ T-cell concentration, before initiation of treatment, particularly with less potent ART regimens. Also, failure of viral suppression can often be attributed to pharmacological factors, such as drug-drug or drug-food interactions, which result in sub-optimal pharmacokinetics and prevent sufficient serum concentrations of the antiretroviral agent \cite{calc2017}. 

To address virologic failure associated with first-line introductory ART regimens, it is recommended to switch to second-line ART regimens as alternative therapeutic options, particularly in the presence of drug resistance \cite{pane2024}. After successful identification of the factors contributing to virologic failure, the next step is to determine the most effective ART strategy for the future. Key factors in selecting subsequent ART regimen(s) include  a thorough assessment of the adherence status, investigation of previous ART or pre-treatment drug history, and evaluation of past and current HIV drug resistance tests \cite{pane2024}. Multiple large randomized controlled trials, particularly in resource-constrained settings where NNRTI-based regimens have been used as first-line therapy, have examined various options for second-line regimen combinations. These studies recommend prescribing at least two or three fully active ART drugs, with the potential addition of partially active drugs to leverage their immunologic and virologic benefits \cite{seco2013, abou2019, pane2024}. These studies further emphasized that second-line regimens should include agents with a high resistance barrier, such as boosted protease inhibitors (PIs) or dolutegravir (DTG), to minimize the risk of further resistance development. While second-line regimens can restore viral suppression and prevent disease progression, they present challenges, including potential toxicity, increased complexity and cost of the treatment \cite{cutr2020}. Despite these limitations, second-line therapy remains essential for maintaining viral load and optimizing long-term treatment outcomes in patients with first-line ART failure.

To effectively evaluate epidemic progression and the impact of biomedical or other intervention strategies, it is crucial to understand the evolutionary behavior of viruses and their mutant strains within an epidemiological framework. The interplay of ecological and evolutionary mechanisms during the course of infection plays a central role in shaping viral evolution at the population level \cite{lyth2013}. Mathematical modeling has proven to be a powerful tool for integrating various biological and epidemiological processes to identify the key determinants of disease dynamics. Several modeling frameworks have been employed in this context, including compartmental models \cite{kerm1927, tuck1998, ren2018}, data-driven phenomenological approaches \cite{chow2016, pell2018}, and statistical inference techniques involving parameter estimation through simulation-based methods \cite{shre2011,stoc2017}. Many researchers have adopted complex network models to capture disease transmission patterns arising from interactions among individuals \cite{liu2013, gupt2021}. In this study, we use a compartmental modeling framework to capture the dynamics of HIV by dividing the total population into distinct classes that represent various stages of infection. Compartmental models have been extensively used in the literature to investigate HIV transmission dynamics at the population level \cite{ande1986, may1987, shar2008, cai2009, silv2017, ghos2018, akud2018, poon2022, xue2022, gurs2023, poon2024}. 

Anderson and May \cite{ande1986, may1987} were among the first to explore HIV transmission within communities, highlighting the influence of different biological processes in shaping the early dynamics of the epidemic following the infection. In \cite{cai2009}, the authors studied an HIV/AIDS treatment model that distinguishes between two stages of infection: asymptomatic and symptomatic. They further compared this framework with a time-delay model, where the initiation of treatment is assumed to take effect after a delay. A comprehensive global analysis of a Susceptible-Infected-Chronic-AIDS (SICA) compartmental model was carried out in \cite{silv2017}. The study concluded that substantial increase in treatment coverage and transmission reduction were essential to meet the UNAIDS 2030 target of ending AIDS. A simplified SI‑type model to assess how both media-driven awareness and self‑imposed psychological fear influence HIV/AIDS dynamics was proposed in \cite{ghos2018}. In this study, the authors have concluded that while fear reduces transmission, media-based awareness is significantly more effective in reducing the disease burden than fear alone. Further, Gurski et al. \cite{gurs2023} developed a dynamic compartmental model that captured staged HIV transmission within both casual and long-term partnerships, and incorporated treatment processes alongside partnership dynamics. Their analysis concluded that infection rates vary substantially by partner type and disease stage, indicating that long-term partnerships contribute uniquely to the overall transmission dynamics.

Recent studies have applied multi-strain mathematical models to examine how ecological interactions and evolutionary mechanisms influence the transmission dynamics of infectious diseases \cite{shar2008, fudo2020, kudd2021, maka2022, poon2022, poon2024}. Interactions among multiple pathogen strains, driven by immune responses, fitness trade-offs, and transmission dynamics, play a crucial role in shaping pathogen population behavior across spatial and temporal scales. These interactions influence strain persistence and genetic diversity \cite{maka2022}. In \cite{shar2008}, authors analyzed a multi-strain HIV model for both wild-type and drug-resistant strains under ART, and found that resistant strains can outcompete wild-type strains or coexist depending on their relative fitness. Various recent studies have demonstrated that, in the long term, drug-sensitive strains often face competitive exclusion from the population in the presence of emerging drug-resistant strains \cite{fudo2020, kudd2021, poon2022, poon2024}. Recently, Poonia and Chakrabarty \cite{poon2022, poon2024} developed two-strain HIV models to study drug-sensitive and drug-resistant infection mechanism by explicitly incorporating treatment adherence. Their study examined how varying ART coverage and adherence levels to ART influence the transmission dynamics of multiple HIV strains. The findings highlight that both treatment adherence and transmission rates play critical roles in determining strain coexistence or competitive exclusion. Also, enhancing adherence was shown to significantly reduce the emergence of drug-resistant infections and the overall disease burden.

Effective disease management requires minimizing the disease burden while keeping control costs low. The optimal control theory is a widely used framework to achieve this balance \cite{lenh2007, okos2013, akud2018, agus2019, peni2020, yusu2023}. Okosun et al.  \cite{okos2013} developed an optimal control model to evaluate the combined impact of condom use, screening of unaware HIV-positive individuals, and treatment in a population with ongoing susceptible immigration. They concluded that undiagnosed HIV infections impose substantial burden on community-level costs. Moreover, implementing a combination of all three interventions was identified as the most cost-effective strategy for reducing HIV transmission. In \cite{akud2018}, the authors analyzed an optimal control model incorporating time-dependent ART allocation and demonstrated that prioritizing treatment for individuals in the early stages of infection is most effective in reducing new infections and total infection-years. In contrast, allocating resources to later-stage patients resulted in minimizing overall costs and HIV-related mortality. Peni et al. \cite{peni2020} formulated a nonlinear model predictive control framework with temporal-logic constraints to manage COVID-19, using a discrete-time compartmental model. Their findings emphasized the importance of early intervention, continuous monitoring of the susceptible population, and the flexibility to adapt control measures in response to the evolving dynamics of the outbreak. 

At the $20^{th}$ International AIDS Conference in 2014, UNAIDS introduced the 90-90-90 global targets as a strategic framework for HIV epidemic control \cite{unai2014}. These goals aimed for 90\% of all individuals living with HIV to be diagnosed, 90\% of those diagnosed to receive ART, and 90\% of individuals on ART to achieve viral suppression by the end of 2020.  Furthermore, UNAIDS has proposed the more ambitious 95-95-95 targets to be achieved by 2030, which are expected to result in approximately 86\% overall viral suppression. Achieving these goals in an optimal manner remains a significant challenge, particularly for low- and middle-income countries with a high prevalence of HIV. Xue et al. \cite{xue2022} incorporated the 90-90-90 framework by mathematically quantifying each step-diagnosis, treatment uptake, and viral suppression, within their compartmental model, and formulated the model with behavioral fear effects. They concluded that, while implementing the 90-90-90 targets significantly reduces new infections, it would still require approximately 26 years to eliminate new HIV cases entirely. In addition, the optimal control to reduce the development and transmission of drug-resistant HIV strains introduces additional complexities, especially in the context of treatment strategies involving first- and second-line treatments. 

Despite extensive research on HIV modeling, a comprehensive mathematical investigation that captures the multi-strain dynamics of HIV, including the emergence of drug-resistant strains due to sub-optimal adherence to first-line treatment and the subsequent switch to second-line treatment following the diagnosis of resistance, remains limited. This study addresses this gap by incorporating distinct compartments for undiagnosed and diagnosed infected individuals, as well as for drug-sensitive and drug-resistant strains. Also, we propose a structured mechanism for treatment switching, once the development of drug resistance is identified. Furthermore, we integrate this framework within an optimal control setting. We propose a range of control strategies aimed at enhancing HIV diagnosis rates, expanding treatment accessibility, improving drug adherence among individuals receiving ART, and reducing the overall infected population, with particular focus on the control of drug-resistant HIV infections. Along with the equilibrium point analysis for understanding the long-term dynamics of the model, a comprehensive local and global sensitivity analysis is conducted to identify the most influential parameters governing the disease dynamics. In addition, a detailed cost-effectiveness analysis offers critical insights for public health experts, helping them to allocate resources efficiently and focus on interventions that can optimally reduce the HIV transmission. To the best of our knowledge, no existing compartmental model simultaneously captures these critical aspects, making our study a significant step toward better understanding and controlling the spread of multiple strains of HIV in a community.

The main objective of this study is to develop and analyze a comprehensive multi-strain HIV transmission model that captures the emergence of drug-resistant strains due to sub-optimal adherence to first-line treatment, incorporates treatment switching to second-line therapy upon diagnosis of resistance, and evaluates the effectiveness and cost-efficiency of various intervention strategies. To achieve this, we introduce a compartmental model that captures the transmission dynamics of drug-sensitive and drug-resistant HIV strains in Section \ref{03:sec:mathematica_model}. Section \ref{03:sec:equilibrium_points} presents a detailed theoretical analysis of the model, including its biological feasibility and the existence and stability of various equilibrium points. Parameter values and initial conditions for the state variables are computed in Section \ref{03:sec:parameter_estimation}, drawing upon data from existing literature. Section \ref{03:sec:sensitivity_analysis} provides an extensive sensitivity analysis for different model outcomes using Partial Rank Correlation Coefficients (PRCC) and Sobol's method to identify the most influential parameters in the short-term and long-term dynamics. In Section \ref{03:sec:optimal_control_analysis}, we formulate and analyze an optimal control problem aimed at minimizing the infection burden while optimizing the cost of multiple interventions.  Section \ref{03:sec:numerical_simulation} presents a comprehensive numerical simulation study, illustrating the analytical findings and exploring a set of time-dependent control strategies designed to reduce disease spread. This section also includes an adjoint-based sensitivity analysis to evaluate the optimal allocation of additional public health resources. To further understand the role of each control variable, a control contribution analysis is conducted using Shapley values from cooperative game theory, highlighting both individual and synergistic effects. Finally, we conclude by summarizing the key findings and discussing their biological and public health implications in Section \ref{03:sec:conclusion}.

\section{The mathematical modeling framework}{\label{03:sec:mathematica_model}}
To construct a multi-strain mathematical model that represents the transmission dynamics of HIV with multiple treatment options and drug adherence, we use the compartmental modeling framework based on different stages of the infection. The total sexually active population is categorized into eight mutually exclusive compartments: susceptible individuals $(S)$, undiagnosed and infected individuals with the drug-sensitive strain $(I_{SU})$, diagnosed and infected individuals with the drug-sensitive strain $(I_{SD})$, individuals receiving first-line treatment $(T_{1})$, undiagnosed and infected individuals with the drug-resistant strain $(I_{RU})$, diagnosed and infected individuals with the drug-resistant strain $(I_{RD})$, individuals receiving second-line treatment $(T_{2})$, and individuals who have progressed to AIDS $(A)$. 
The transitions of population among these compartments with time are governed by a system of coupled nonlinear ordinary differential equations presented as follows:
\begin{eqnarray}
&S^{\prime}&=\lambda-\alpha_S S(I_{SU}+c I_{SD})-\alpha_R S (I_{RU}+ c I_{RD}) -\mu S, \nonumber \\
&I_{SU}^{\prime}&=\alpha_S S (I_{SU}+c I_{SD})-k_S I_{SU}-\theta_1 I_{SU}-\mu I_{SU},
\nonumber \\
&I_{SD}^{\prime}&= k_{S} I_{SU}-\beta_{S} \eta_{1} I_{SD}-\theta_{1} (1-\eta_{1}) I_{SD}
-\mu I_{SD}, 
\nonumber \\
&T_{1}^{\prime}&=\beta_{S} \eta_{1} I_{SD}-\gamma (1-\epsilon) T_{1}-\theta_{2} \epsilon T_{1}-\mu T_{1},
\nonumber \\
&I_{RU}^{\prime}&=\alpha_R S (I_{RU}+ c I_{RD})+\gamma (1-\epsilon) T_{1}-k_{R} I_{RU}-\theta_{1} I_{RU}-\mu I_{RU},
\nonumber \\
&I_{RD}^{\prime}&=k_{R} I_{RU}-\beta_{R} \eta_{2} I_{RD}-\theta_{1}(1-\eta_2) I_{RD}-\mu I_{RD},
\nonumber\\
&T_{2}^{\prime}&=\beta_{R} \eta_{2} I_{RD}-\theta_{2} T_{2}-\mu T_{2},
\nonumber \\
&A^{\prime}&=\theta_1 I_{SU}+\theta_{1} (1-\eta_{1}) I_{SD}+\theta_{2} \epsilon T_{1}+\theta_{1} I_{RU}+\theta_{1}(1-\eta_2) I_{RD}+\theta_{2} T_{2}-(\mu+\mu_d) A 
\label{3_system_main}
\end{eqnarray}
with the initial condition $\displaystyle{\left(S(0), I_{SU}(0), I_{SD}(0), T_{1}(0), I_{RU}(0), I_{RD}(0), T_{2}(0), A(0)\right)}$ to lie within the biologically feasible region $\mathcal{R}$, defined as
$$\mathcal{R}=\left\{(S,I_{SU},I_{SD},T_{1},I_{RU},I_{RD},T_{2},A)\in \mathbb{R}^{8}_{+} : 0\leq S+I_{SU}+I_{SD}+T_{1}+I_{RU}+I_{RD}+T_{2}+A\leq \frac{\lambda}{\mu}\right\},$$
where $\displaystyle{\mathbb{R}^{8}_{+}}$ refers to the non-negative orthant along with its lower-dimensional faces. 

The susceptible individuals are recruited into the population at a constant rate $\lambda$ as new individuals become sexually active. These individuals can become infected through effective contact with both drug-sensitive and drug-resistant infected individuals. We assume that effective contact results in the transmission of the same strain to newly infected individuals. The force of infection, expressed as a function of both susceptible and infected individuals, is a critical component of epidemiological modeling. Based on the principle of mass action, we define the force of infection from drug-sensitive individuals as proportional to $\alpha_S (I_{SU}+c I_{SD})$, and from drug-resistant individuals to $\alpha_R (I_{RU}+ c I_{RD})$. The parameter $c\in [0,1]$ captures the reduction in transmission rate of diagnosed individuals compared to undiagnosed ones. This reduction reflects the behavioral change following diagnosis, as individuals who are aware of their HIV positive status are less likely to engage in risk-prone behavior. Once infected, susceptible individuals move into undiagnosed compartments $(I_{SU} \text{ or } I_{RU})$ based on whether the infection is acquired from a drug-sensitive or drug-resistant individual. Undiagnosed individuals may get tested and become diagnosed at rates $k_{S}$ $(\text{for } I_{SU})$ and $k_{R}$ $(\text{for } I_{RU})$, or may progress directly to the AIDS stage at rate $\theta_{1}$ if not tested in time, where $\theta_{1}$ is the conversion rate of untreated infected population to the population in AIDS stage. While diagnosis helps in earlier detection of infection, its impact on transmission remains limited unless it is promptly followed by an effective treatment initiation. We assume that first-line treatment coverage is limited to a fraction $(\eta_{1})$ of the diagnosed drug-sensitive infected population, due to resource constraints. This fraction initiates first-line treatment at rate $\beta_{S}$, while the remaining $(1-\eta_{1})$ progresses to the AIDS stage at rate $\theta_{1}$. 

Further, we assume that only a fraction $\epsilon$ of individuals receiving first-line treatment are adherent, while the remaining fraction $(1-\epsilon)$ are non-adherent. Individuals under treatment are considered non-infectious, as effective treatment suppresses viral load to levels that prevent onward transmission. However, the non-adherent individuals are assumed to develop drug resistance and, unlike adherent individuals, do not progress directly to the AIDS stage. Instead, they transition to the undiagnosed drug-resistant infected compartment $(I_{RU})$ at a rate $\gamma$. In contrast, adherent individuals progress to the AIDS stage at a reduced rate $\theta_{2}$, as treatment suppresses but does not cure the infection. Note that the progression rate to AIDS for untreated infected individuals $(\theta_{1})$ is significantly higher than that of those receiving optimal treatment $(\theta_{2})$. After the diagnosis of drug-resistant infection, the second-line treatment is given to only a fraction $\eta_2$, reflecting constraints in resource availability, at a rate $\beta_{R}$. The remaining fraction of diagnosed drug-resistant infected individuals progresses to the AIDS stage at a rate $\theta_{1}$. For those receiving second-line treatment, we assume full adherence due to the advanced stage of infection and enhanced medical supervision. Consequently, these individuals progress to the AIDS stage at a rate $\theta_{2}$. Individuals in the AIDS stage are assumed to be sufficiently aware of their condition to not contribute to further transmission. A natural death rate $\mu$ applies uniformly across all compartments. Since these treatments cannot cure the infection but only delay disease progression, all infected individuals are assumed to eventually progress to the AIDS stage unless they die naturally beforehand. Consequently, we consider disease-induced mortality to occur exclusively in the AIDS compartment, with a rate $\mu_d$. All model parameters and their biological interpretations are summarized in Table \ref{03:tab:parameter_description}, and a schematic representation of the system \eqref{3_system_main} is presented in Figure \ref{03:fig:flow_diagram}.

\begin{figure}[htbp]
\begin{center}
\tikzstyle{block} = [rectangle, minimum width=1.5cm, minimum height=1.1cm, draw, thick, text centered, font=\large, rounded corners]
\tikzstyle{arrow} = [thick,->,>=latex]

\colorlet{susceptiblecolor}{green!50}
\colorlet{sensitiveundiagnosedcolor}{red!80}
\colorlet{resistantundiagnosedcolor}{red!80}
\colorlet{sensitivediagnosedcolor}{red!40}
\colorlet{resistantdiagnosedcolor}{red!40}
\colorlet{treatment1color}{yellow!60}
\colorlet{treatment2color}{yellow!60}
\colorlet{aidscolor}{gray!25}

    \begin{tikzpicture}[thick,->,>=latex,node distance=3.5cm]
        \node (box1) [block, fill=susceptiblecolor] {$S$};
        \node (box2) [block, left of=box1, xshift=-1cm, yshift=-3.5cm, fill=sensitiveundiagnosedcolor]  {$I_{SU}$};
        \node (box5) [block, right of=box1, xshift=1cm, yshift=-3.5cm, fill=resistantundiagnosedcolor] {$I_{RU}$};
        \node (box3) [block, below of=box2, fill=sensitivediagnosedcolor] {$I_{SD}$};
        \node (box4) [block, right of=box3,  xshift=1cm, fill=treatment1color] {$T_{1}$};
        \node (box6) [block, below of=box5, fill=resistantdiagnosedcolor] {$I_{RD}$};
        \node (box7) [block, below of=box6, yshift=-0.2cm, fill=treatment2color] {$T_2$};
        \node (box8) [block, below of=box4, yshift=-0.2cm, fill=aidscolor] {$A$};
        
        \draw[dashed,-] (box1) -- node[anchor=south, xshift = -0.0cm, rotate=19] {} (box3);
        \draw[dashed,-] (box1.north) node[anchor=south, xshift = -5.0cm, rotate=100] {} to[bend left=90](box6.east);
        \draw[arrow] (box1) -- node[anchor=south, xshift = -0.0cm, rotate=38] {$\alpha_S (I_{SU}+c I_{SD})$} (box2);
        \draw[->] (box1) -- node[anchor=south, rotate=-38] {$\alpha_R (I_{RU}+c I_{RD})$} (box5);
        \draw[->] (box2) -- node[anchor=south, xshift=-0.3cm] {$k_S$} (box3);
        \draw[->] (box2.west) node[anchor=south, xshift=-0.6cm,yshift=-4.3cm] {$\theta_1$} to[out=215, in=210] (box8.west);
        \draw[->] (box3) -- node[anchor=south] {$\beta_S \eta_1$} (box4);
        \draw[->] (box3) -- node[anchor=south, rotate=-38] {$\theta_1 (1-\eta_1)$} (box8);
        \draw[->] (box4) -- node[anchor=south, rotate=36] {$\gamma (1-\epsilon)$} (box5);
        \draw[->] (box4) -- node[anchor=south, xshift=0.3cm] {$\theta_2 \epsilon$} (box8);
        \draw[->] (box5) -- node[anchor=south, xshift=0.3cm] {$k_R$} (box6);
        \draw[->] (box5) -- node[anchor=south, xshift=0.3cm, yshift=-0.3cm] {$\theta_1$} (box8);
        \draw[->] (box6) -- node[anchor=south, xshift=0.5cm] {$\beta_R \eta_2$} (box7);
        \draw[->] (box6) -- node[anchor=south, rotate=40] {$\theta_1 (1-\eta_2)$} (box8);
        \draw[->] (box7) -- node[anchor=south] {$\theta_2$} (box8);
          
        \draw[->] (box1.south) -- node[anchor=north, yshift=-0.5cm] {$\mu$} ++(0,-1) ;
        \draw[<-] (box1.north) -- node[anchor=south, yshift=0.5cm] {$\lambda$} ++(0,1);
        \draw[->] (box2.north) -- node[anchor=south, yshift=0.5cm] {$\mu$} ++(0,1);
        \draw[->] (box3.south) -- node[anchor=north, yshift=-0.5cm] {$\mu$} ++(0,-1);
        \draw[->] (box4.north) -- node[anchor=south, yshift=0.5cm] {$\mu$} ++(0,1);
        \draw[->] (box5.north) -- node[anchor=south, yshift=0.5cm] {$\mu$} ++(0,1);
        \draw[->] (box6.east) -- node[anchor=east, xshift=1cm] {$\mu$} ++(1,0);
        \draw[->] (box7.south) -- node[anchor=north, yshift=-0.5cm] {$\mu$} ++(0,-1);
        \draw[->] (box8.south) -- node[anchor=north, yshift=-0.5cm] {$\mu+\mu_d$} ++(0,-1);   
        
    \end{tikzpicture}
    \caption{Schematic representation of the system (\ref{3_system_main}). Solid arrows represent direct transitions between compartments resulting from effective contact, while dashed arrows indicate the presence of effective contacts that contribute to infection risk, without resulting in transitions to the contacted compartment.}
    \label{03:fig:flow_diagram}
\end{center}
\end{figure}
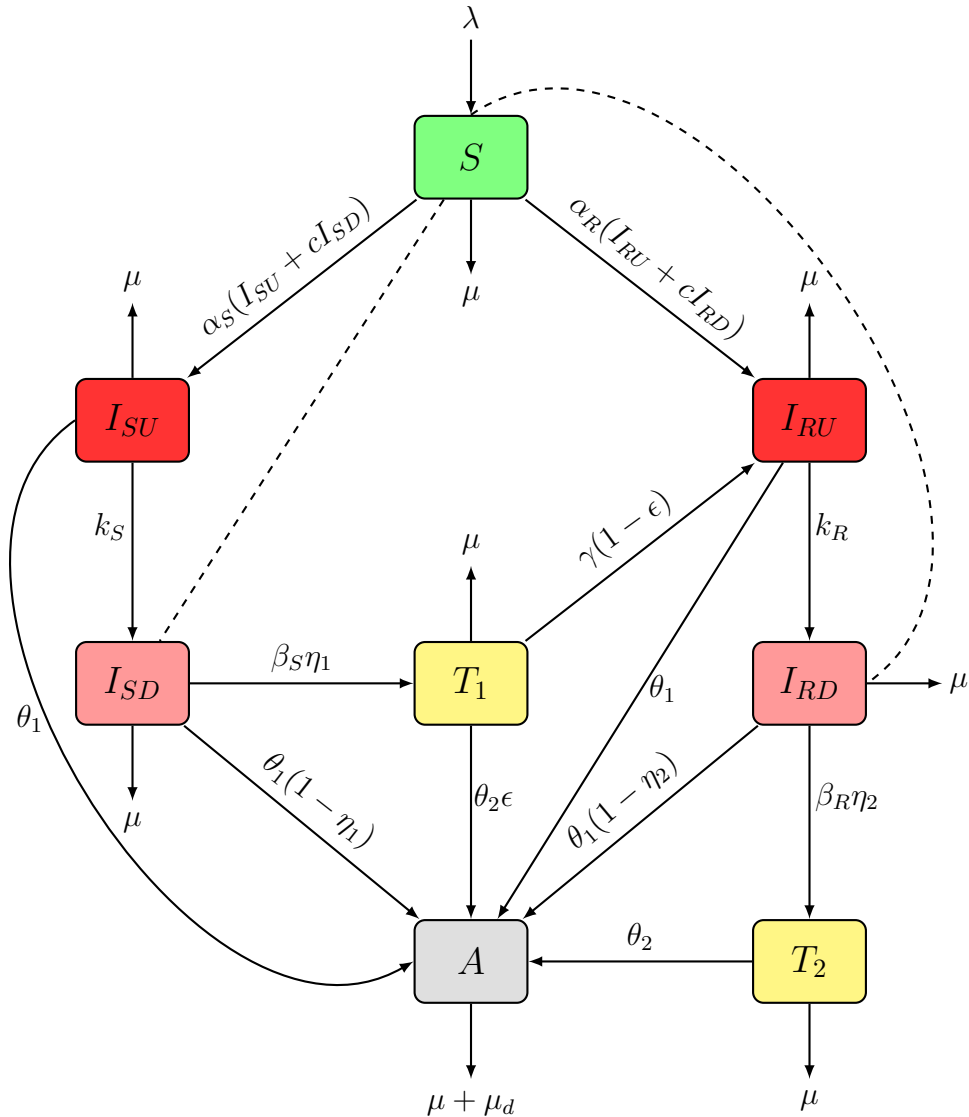

\begin{table}[htbp]
\centering
\begin{tabular}{>{\centering\arraybackslash}m{2cm} p{15.5cm}}
\toprule
\textbf{Parameter} & \textbf{Biological Description} \\
\midrule
$\lambda$     & Recruitment rate of susceptible individuals \\
$\alpha_S$    & Transmission rate per effective contact between a susceptible individual and a drug-sensitive infected individual \\
$\alpha_R$    & Transmission rate per effective contact between a susceptible individual and a drug-resistant infected individual \\
$k_S$         & Diagnosis rate of infection among drug-sensitive infected individuals \\
$k_R$         & Diagnosis rate of infection among drug-resistant infected individuals \\
$c$           & Relative infectiousness of diagnosed individuals compared to undiagnosed individuals \\
$\theta_1$    & Progression rate of infection in untreated infected individuals to AIDS    stage \\
$\theta_2$    & Progression rate of infection in optimally adherent treated infected individuals to AIDS stage \\
$\beta_S$     & First-line treatment initiation rate for diagnosed drug-sensitive infected individuals \\
$\beta_R$     & Second-line treatment initiation rate for diagnosed drug-resistant infected individuals \\
$\gamma$      & Rate at which non-adherent individuals develop drug resistance \\
$\eta_1$      & Proportion of diagnosed drug-sensitive infected individuals receiving first-line treatment \\
$\eta_2$      & Proportion of diagnosed drug-resistant infected individuals receiving second-line treatment \\
$\epsilon$    & Proportion of individuals receiving first-line treatment who are adherent \\
$\mu$         & Natural death rate \\
$\mu_d$       & Disease-induced death rate \\
\bottomrule
\end{tabular}
\caption{Model parameters and their biological descriptions for system (\ref{3_system_main}).}
\label{03:tab:parameter_description}
\end{table}

\section{Mathematical analysis of the system}{\label{03:sec:equilibrium_points}}
To understand the long-term behavior of the system and its sensitivity to changes in epidemiological parameters, we identify the equilibrium points and analyze their local stability. This section explores the parametric conditions under which the disease either dies out or persists in the population. In addition, we examine whether the coexistence of both drug-sensitive and drug-resistant strains is possible, and if so, determine the conditions that enable such coexistence. But first we establish a preliminary result regarding the non-negativity and boundedness of the model solutions, which ensures that the system is biologically well-posed. Since each state variable in the model represents a population density at a given time, it is required that the solutions remain non-negative and bounded for all time, provided the initial conditions are non-negative. To guarantee this biological feasibility, we present the following theorem.
\begin{theorem}
    The biologically feasible region $\mathcal{R}$ is a positively invariant set for the system \eqref{3_system_main}. This implies that any solution $\left(S(t),I_{SU}(t),I_{SD}(t),T_{1}(t),I_{RU}(t),I_{RD}(t),T_{2}(t),A(t)\right)$ of the system, starting from an initial condition $(S(0), I_{SU}(0), \allowbreak I_{SD}(0), T_{1}(0), I_{RU}(0), I_{RD}(0), T_{2}(0), A(0)) \in \mathcal{R}$, will remain within the region $\mathcal{R}$ for all $t > 0$. 
\end{theorem}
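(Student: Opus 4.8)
The plan is to verify, in turn, the two properties built into the definition of $\mathcal{R}$: (i) non-negativity of every component, and (ii) the bound $N(t):=S+I_{SU}+I_{SD}+T_{1}+I_{RU}+I_{RD}+T_{2}+A\le\lambda/\mu$; together with local existence and uniqueness, these yield forward invariance of $\mathcal{R}$.

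First I would observe that the right-hand side of system \eqref{3_system_main} is polynomial, hence locally Lipschitz on $\mathbb{R}^{8}$, so a unique maximal solution exists through each initial datum in $\mathcal{R}$. For non-negativity, I would check quasi-positivity of the vector field on $\mathbb{R}^{8}_{+}$: whenever one coordinate vanishes while the others are non-negative, its derivative is non-negative. Running through the equations: on $S=0$, $S'=\lambda>0$; on $I_{SU}=0$, $I_{SU}'=c\,\alpha_{S} S\,I_{SD}\ge0$; on $I_{SD}=0$, $I_{SD}'=k_{S} I_{SU}\ge0$; on $T_{1}=0$, $T_{1}'=\beta_{S}\eta_{1} I_{SD}\ge0$; on $I_{RU}=0$, $I_{RU}'=c\,\alpha_{R} S\,I_{RD}+\gamma(1-\epsilon)T_{1}\ge0$; on $I_{RD}=0$, $I_{RD}'=k_{R} I_{RU}\ge0$; on $T_{2}=0$, $T_{2}'=\beta_{R}\eta_{2} I_{RD}\ge0$; and on $A=0$, $A'$ is a non-negative linear combination of the infected compartments. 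By the standard invariance criterion this forces $\mathbb{R}^{8}_{+}$ to be positively invariant. Equivalently and more explicitly, each equation can be written as $x_{i}'=\varphi_{i}(x)-\psi_{i}(x)\,x_{i}$ with $\varphi_{i}\ge0$ on $\mathbb{R}^{8}_{+}$, so that on the interval where the solution is non-negative the positive integrating factor $\exp\!\big(\int_{0}^{t}\psi_{i}\big)$ gives $x_{i}(t)\ge x_{i}(0)\exp\!\big(-\int_{0}^{t}\psi_{i}\big)\ge0$; a routine continuity argument then propagates this to the whole maximal interval.

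For the bound, I would sum all eight equations; every internal transfer term cancels in pairs, leaving $N'=\lambda-\mu N-\mu_{d} A$. Since $A\ge0$ by step (i), we get $N'\le\lambda-\mu N$, and the comparison (Gr\"onwall) argument yields $N(t)\le\frac{\lambda}{\mu}+\big(N(0)-\frac{\lambda}{\mu}\big)e^{-\mu t}$. Hence $N(0)\le\lambda/\mu$ implies $N(t)\le\lambda/\mu$ for all $t\ge0$; this a priori bound also rules out finite-time blow-up, so the maximal solution is global in forward time. Combined with non-negativity, this places the trajectory in $\mathcal{R}$ for all $t>0$, proving positive invariance.

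I do not anticipate a substantive obstacle: the only delicate points are making the boundary (quasi-positivity) step fully rigorous rather than appealing to the heuristic that ``the field points inward'' — which is why I would carry the integrating-factor formulation — and applying the comparison inequality in the correct direction, while noting that the bound on $N$ itself precludes blow-up before the estimate takes effect.
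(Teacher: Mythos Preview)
Your proposal is correct and follows exactly the standard route---quasi-positivity of the vector field on the boundary faces of $\mathbb{R}^{8}_{+}$ for non-negativity, followed by summing the equations to obtain $N'=\lambda-\mu N-\mu_{d}A\le\lambda-\mu N$ and a Gr\"onwall comparison for the upper bound---which is precisely the approach the paper invokes by citing Theorem~1 of \cite{poon2024}. Your write-up is in fact more explicit than the paper itself, which merely defers to that reference; the integrating-factor remark and the note on global existence via the a~priori bound are useful rigor that the standard argument often leaves implicit.
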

\begin{proof}
    The proof of this theorem follows a similar approach to that of Theorem 1 in Section 3 of \cite{poon2024}.
\end{proof}

In the system \eqref{3_system_main}, the first seven equations are independent of the AIDS compartment ($A$). The eighth equation, corresponding to the dynamics of $A$, can be solved separately and yields a positive value for $A$ whenever the other state variables are positive. Therefore, we focus on the following reduced system for further analysis.
\begin{eqnarray}
&S^{\prime}&=\lambda-\alpha_S S(I_{SU}+c I_{SD})-\alpha_R S (I_{RU}+ c I_{RD}) -\mu S \nonumber \\
&I_{SU}^{\prime}&=\alpha_S S (I_{SU}+c I_{SD})-A_{1} I_{SU}
\nonumber \\
&I_{SD}^{\prime}&=k_{S} I_{SU}-A_{2} I_{SD}
\nonumber \\
&T_{1}^{\prime}&=\beta_{S} \eta_{1} I_{SD}- A_{3} T_{1}
\nonumber \\
&I_{RU}^{\prime}&=\alpha_R S (I_{RU}+ c I_{RD})+\gamma (1-\epsilon) T_{1}-A_{4} I_{RU}
\nonumber \\
&I_{RD}^{\prime}&=k_{R} I_{RU}-A_{5} I_{RD}
\nonumber\\
&T_{2}^{\prime}&=\beta_{R} \eta_{2} I_{RD}-A_{6} T_{2}
\label{3_system_reduced}
\end{eqnarray}
where $A_{1}=k_S + \theta_1 + \mu,~ A_{2}=\beta_{S} \eta_{1}+\theta_{1} (1-\eta_{1})+\mu,~ A_{3}=\gamma (1-\epsilon)+\theta_{2} \epsilon+\mu,~\allowbreak A_{4}=k_R + \theta_1 + \mu,~ \allowbreak A_{5}=\beta_{R} \eta_{2}+\theta_{1}(1-\eta_2)+\mu,~ \allowbreak A_{6}=\theta_2 + \mu.$

\subsection{Disease elimination and basic reproduction numbers}{\label{03:subsec:equilibrium_point_0}}
The elimination of the infection from the system is contingent on the existence of a disease free equilibrium state and its stability. In order to obtain the disease free equilibrium (DFE) point of the system (\ref{3_system_reduced}), we set all variables to zero which are either infectious $(I_{SU},~ I_{SD},~ I_{RU},~ I_{RD})$ or infected $(T_{1},~T_{2})$ and find the steady state of non-infected variable $(S)$. As a result, the DFE point is given by,
\begin{eqnarray}
E^{(0)}&=&\left(S^{(0)},I_{SU}^{(0)}, I_{SD}^{(0)}, T_{1}^{(0)}, I_{RU}^{(0)}, I_{RD}^{(0)}, T_{2}^{(0)}\right) \nonumber \\
&=&\left(\frac{\lambda}{\mu},0, 0, 0, 0, 0, 0\right)\nonumber
\end{eqnarray}
This indicates that the whole population becomes susceptible to HIV infection once DFE point is achieved. Note that, although, this is an ideal situation which is most likely to be unachievable in short term, we still need further mathematical investigation to derive some conditions under which the solutions of the system (\ref{3_system_reduced}) converge to the DFE point. For that, first we determine the basic reproduction number for the system as well as individual viral strains, which will provide crucial insights for the dynamics of the presented model. Since we have four types of infectious individuals in the system, the reduced infection subsystem linearized about the DFE point is as follows
$$\dot{x}=(F+V)x,$$ 
where $\displaystyle{x=\left(I_{SU}, I_{SD}, I_{RU}, I_{RD} \right)^{T} \in \mathbb{R}^4_{+} }$. Also, $F$ and $V$ are Jacobian matrices at point $\displaystyle{E^{(0)}}$, given by
$$F=\begin{pmatrix}
\frac{\lambda \alpha_{S}}{\mu} & \frac{\lambda c \alpha_{S}}{\mu} & 0 & 0\\
0 & 0 & 0 & 0 \\
0 & 0 & \frac{\lambda \alpha_{R}}{\mu} & \frac{\lambda c \alpha_{R}}{\mu} \\
0 & 0 & 0 & 0
\end{pmatrix}, \quad 
V=\begin{pmatrix}
A_{1} & 0 & 0 & 0\\
-k_{S} & A_{2} & 0 & 0 \\
0 & 0 & A_{4} & 0 \\
0 & 0 & -k_{R} & A_{5}
\end{pmatrix}$$
which correspond to the transmission of the virus, describing the production of newly infected individuals and the transition of infectiousness, describing the change in status of infected individuals, respectively. Therefore, the next generation matrix $(FV^{-1})$ is obtained as,
$$FV^{-1}=\begin{pmatrix}
\frac{\lambda \alpha_{S}\left(A_{2}+c k_{S}\right)}{\mu A_{1} A_{2}} & \frac{\lambda c \alpha_{S}}{\mu A_{2}} & 0 &0\\
0 & 0 & 0 & 0 \\
0 & 0 & \frac{\lambda \alpha_{R}\left(A_{5}+c k_{R}\right)}{\mu A_{4} A_{5}} & \frac{\lambda c \alpha_{R}}{\mu A_{5}}\\
 0 & 0 & 0 & 0 
\end{pmatrix}.$$
The basic reproduction number $\displaystyle{(R_{0})}$ for system (\ref{3_system_reduced}) is the spectral radius or dominant eigenvalue of the next generation matrix $FV^{-1}$. Therefore, $\displaystyle{R_{0}=\max\{R_{0}^{(S)},R_{0}^{(R)}\}}$, where
$$R_{0}^{(S)}=\frac{\lambda \alpha_{S}\left(A_{2}+c k_{S}\right)}{\mu A_{1} A_{2}} =\frac{\lambda \alpha_{S}\left(\beta_{S} \eta_{1}+\theta_{1} (1-\eta_{1})+\mu+c k_{S}\right)}{\mu (k_S + \theta_1 + \mu) (\beta_{S} \eta_{1}+\theta_{1} (1-\eta_{1})+\mu)},$$
and 
$$R_{0}^{(R)}=\frac{\lambda \alpha_{R}\left(A_{5}+c k_{R}\right)}{\mu A_{4} A_{5}}=\frac{\lambda \alpha_{R}\left(\beta_{R} \eta_{2}+\theta_{1}(1-\eta_2)+\mu+c k_{R}\right)}{\mu (k_{R}+\theta_{1}+\mu) (\beta_{R} \eta_{2}+\theta_{1}(1-\eta_2)+\mu)}.$$
Additionally, we define the ``relative basic reproduction number'' as the ratio of $R_{0}^{(S)}$ and $R_{0}^{(R)}$. This quantity signifies the strength of the dominance of the sensitive strain over the resistant strain in terms of transmission potential. The relative basic reproduction number is denoted as $R_{0}^{(SR)}$, which is given by:
$$R_{0}^{(SR)}=\frac{\alpha_{S} A_{4} A_{5} \left(A_{2}+c k_{S}\right)}{\alpha_{R} A_{1} A_{2} \left(A_{5}+c k_{R}\right)} =\frac{\alpha_{S}\left(\beta_{S} \eta_{1}+\theta_{1} (1-\eta_{1})+\mu+c k_{S}\right) (k_{R}+\theta_{1}+\mu) (\beta_{R} \eta_{2}+\theta_{1}(1-\eta_2)+\mu)}{\alpha_{R} \left(\beta_{R} \eta_{2}+\theta_{1}(1-\eta_2)+\mu+c k_{R}\right) (k_S + \theta_1 + \mu) (\beta_{S} \eta_{1}+\theta_{1} (1-\eta_{1})+\mu) }.$$

The basic reproduction number $\displaystyle{R_{0}^{(S)} (\text{or } R_{0}^{(R)})}$ quantifies the average number of secondary infections caused by a single individual infected with the drug-sensitive (or drug-resistant) strain in a fully susceptible population. Each basic reproduction number can be decomposed into two additive components corresponding to transmission from undiagnosed and diagnosed infected individuals. In $\displaystyle{R_{0}^{(S)}}$, the term $\displaystyle{\frac{\lambda \alpha_{S}}{\mu A_{1}}}$ captures the contribution from drug-sensitive undiagnosed individuals $(I_{SU})$, where $\lambda \alpha_{S}$ reflects the inflow due to new infections and $\displaystyle{\frac{1}{\mu A_{1}}}$ shows the effective infectious period, influenced by the removal through progression to other compartments or natural death. The additional term $\displaystyle{\frac{\lambda \alpha_{S} c k_{S}}{\mu A_{1} A_{2}}}$ quantifies the contribution from drug-sensitive diagnosed individuals $(I_{SD})$. Similarly, $\displaystyle{R_{0}^{(R)}}$ includes contributions from the undiagnosed $\displaystyle{\left(\frac{\lambda \alpha_{R}}{\mu A_{4}}\right)}$ and diagnosed  $\displaystyle{\left(\frac{\lambda \alpha_{R} c k_{R}}{\mu A_{4} A_{5}}\right)}$ individuals infected with the drug-resistant strain, representing their respective transmission and removal dynamics. 

Biologically, these basic reproduction numbers are shaped by various model parameters related to epidemiological, demographic, and behavioral factors. A higher recruitment rate of the susceptible population increases the pool of individuals available for infection, contributing to higher infection levels in the population. The transmission rates $\alpha_{S}$ and $\alpha_{R}$, along with the reduction factor $c$, directly linked to the disease burden. Diagnosis rates $k_{S}$ and $k_{R}$ play a mixed role as they help identify infected individuals early, thereby reducing transmission, but also increase the diagnosed population, which may continue to transmit the infection at a reduced level. As a result, diagnosis contributes both to limiting and sustaining transmission; however, its overall impact remains beneficial in reducing the spread of the disease. The natural death rate $\mu$  and the the progression rate to AIDS $\theta_{1}$ shorten the infectious period, indirectly reducing the average number of secondary cases. Similarly, parameters $\beta_{S}, \beta_{R}, \eta_{1}$ and $\eta_2$ influence the initiation of effective treatment of diagnosed individuals, which in turn reduces overall infectiousness and slows disease progression. In the subsequent analysis, we examine the significance of the basic reproduction numbers in determining the long-term dynamics of the system (\ref{3_system_main}), as the existence and stability of equilibrium points depend on these thresholds.

To determine the local behavior of the system (\ref{3_system_reduced}) around an equilibrium point, we write the corresponding linearly approximated system as,
$$\dot{X}=J\left(E^{(*)}\right) X,$$
where $X=\left( S,~ I_{SU},~ I_{SD},~ T_{1},~ I_{RU},~ I_{RD},~ T_{2}\right)^T$,  $E^{(*)}=\left(S^{(*)},I_{SU}^{(*)}, I_{SD}^{(*)}, T_{1}^{(*)}, I_{RU}^{(*)}, I_{RD}^{(*)}, T_{2}^{(*)}\right)$, and $J\left(E^{(*)}\right)$ is the Jacobian matrix of the system (\ref{3_system_reduced}) at the equilibrium point $E^{(*)}$, which is given by,
\begin{equation} \label{03:eqn:jacobian_matrix}
\begin{adjustbox}{max width=0.9\textwidth,center}
$J\left(E^{(*)}\right) =
\begin{pmatrix}
-\alpha_S \left(I_{SU}^{(*)} + c I_{SD}^{(*)}\right) - \alpha_R \left(I_{RU}^{(*)} + c I_{RD}^{(*)}\right) - \mu & -\alpha_S S^{(*)} & -c \alpha_S S^{(*)} & 0 & -\alpha_R S^{(*)} & -c \alpha_R S^{(*)} & 0 \\
\alpha_S \left(I_{SU}^{(*)} + c I_{SD}^{(*)}\right) & \alpha_S S^{(*)} - A_1 & c \alpha_S S^{(*)} & 0 & 0 & 0 & 0 \\
0 & k_S & -A_2 & 0 & 0 & 0 & 0 \\
0 & 0 & \beta_S \eta_1 & -A_3 & 0 & 0 & 0 \\
\alpha_R \left(I_{RU}^{(*)} + c I_{RD}^{(*)}\right) & 0 & 0 & \gamma (1 - \epsilon) & \alpha_R S^{(*)} - A_4 & c \alpha_R S^{(*)} & 0 \\
0 & 0 & 0 & 0 & k_R & -A_5 & 0 \\
0 & 0 & 0 & 0 & 0 & \beta_R \eta_2 & -A_6
\end{pmatrix}$
\end{adjustbox}
\end{equation} 

In order to determine the local stability of the DFE, we calculate the Jacobian matrix $J\left(E^{(0)}\right)$, which has the following characteristic equation,
\begin{align*}
   p_{0}(x)= (x+\mu)(x+A_{3})(x+A_{6})&\left(\frac{\mu x^2 +(-\lambda \alpha_{S} + \mu A_{1} + \mu A_{2})x+(\mu A_{1} A_{2}-\lambda \alpha_{S} A_{2} - c \lambda k_{S} \alpha_{S})}{\mu }\right)\\
    &\left(\frac{\mu x^2 +(-\lambda \alpha_{R} + \mu A_{4} + \mu A_{5})x+(\mu A_{4} A_{5}-\lambda \alpha_{R} A_{5} - c \lambda k_{R} \alpha_{R})}{\mu }\right)=0.
\end{align*}
Clearly, this equation always has three roots with a negative real part $(\mu,  A_{3},  A_{6})$. In addition, the remaining four roots will have a negative real part if $R_{0}^{(S)}<1$ and $R_{0}^{(R)}<1$. From the above discussion, we can propose the following result about the existence and local stability of $E^{(0)}$.

\begin{theorem}{\label{03:thm:eqilibrium_point_0}}
For the system (\ref{3_system_reduced}), the disease free equilibrium point $(E^{(0)})$ exists trivially. Further, $E^{(0)}$ is locally asymptotically stable if $R_{0}<1$, otherwise it is unstable.
\end{theorem}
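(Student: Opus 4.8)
The plan is to handle existence and local stability separately. Existence is immediate: substituting $I_{SU}=I_{SD}=T_1=I_{RU}=I_{RD}=T_2=0$ into \eqref{3_system_reduced} satisfies the last six equations identically and reduces the first to $\lambda-\mu S=0$, so $S^{(0)}=\lambda/\mu$ and $E^{(0)}$ exists for every admissible parameter choice; this is the ``trivial existence'' assertion. For local stability I would argue directly from the characteristic equation $p_0(x)=0$ already displayed. Its factored form reflects the block lower-triangular structure of $J(E^{(0)})$: at the DFE all force-of-infection entries in the $S$-column vanish, so the $S$-variable decouples and contributes the eigenvalue $-\mu$, while the remaining $6\times6$ block splits into the two scalar blocks $-A_3$ (the $T_1$ equation) and $-A_6$ (the $T_2$ equation) together with two $2\times2$ blocks coupling $(I_{SU},I_{SD})$ and $(I_{RU},I_{RD})$. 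The three roots $-\mu,\,-A_3,\,-A_6$ are real and strictly negative because $\mu>0$, $A_3=\gamma(1-\epsilon)+\theta_2\epsilon+\mu>0$ and $A_6=\theta_2+\mu>0$.

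The main work is then to control the two quadratic cofactors
\[
q_S(x)=\mu x^2+(\mu A_1+\mu A_2-\lambda\alpha_S)\,x+\bigl(\mu A_1A_2-\lambda\alpha_S(A_2+ck_S)\bigr),
\]
\[
q_R(x)=\mu x^2+(\mu A_4+\mu A_5-\lambda\alpha_R)\,x+\bigl(\mu A_4A_5-\lambda\alpha_R(A_5+ck_R)\bigr).
\]
The key step is the degree-two Routh--Hurwitz criterion: for $\mu>0$, a real quadratic $\mu x^2+bx+d$ has both roots in the open left half-plane iff $b>0$ and $d>0$. For $q_S$ the constant term equals $\mu A_1A_2\bigl(1-R_0^{(S)}\bigr)$, positive exactly when $R_0^{(S)}<1$; and since $c\ge0$ and $k_S\ge0$ give $A_2+ck_S\ge A_2>0$, the hypothesis $R_0^{(S)}<1$ (that is, $\lambda\alpha_S(A_2+ck_S)<\mu A_1A_2$) implies $\lambda\alpha_S<\mu A_1$, whence the linear coefficient satisfies $\mu A_1+\mu A_2-\lambda\alpha_S>\mu A_2>0$. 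Thus $R_0^{(S)}<1$ places both roots of $q_S$ in the left half-plane, and the identical computation with $(A_1,A_2,k_S,\alpha_S)$ replaced by $(A_4,A_5,k_R,\alpha_R)$ shows $R_0^{(R)}<1$ does the same for $q_R$. Since $R_0=\max\{R_0^{(S)},R_0^{(R)}\}$, the condition $R_0<1$ is equivalent to $R_0^{(S)}<1$ and $R_0^{(R)}<1$ holding simultaneously, and in that case all seven eigenvalues of $J(E^{(0)})$ have negative real part, so $E^{(0)}$ is locally asymptotically stable.

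For the converse I would argue at the level of a single quadratic: if $R_0>1$, then $R_0^{(S)}>1$ or $R_0^{(R)}>1$; say the former, so the constant term of $q_S$, namely $\mu A_1A_2(1-R_0^{(S)})$, is negative, forcing $q_S$ to have two real roots of opposite sign and hence a strictly positive one, which is an eigenvalue of $J(E^{(0)})$ with positive real part, so $E^{(0)}$ is unstable. (Equivalently, the whole statement follows from the van den Driessche--Watmough next-generation theorem applied to the pair $(F,V)$ constructed above, since $R_0=\rho(FV^{-1})$ and the non-infection eigenvalues $-\mu,-A_3,-A_6$ already lie in the left half-plane; I would remark on this but prefer the explicit route, the characteristic polynomial being in hand.) I do not foresee any genuine obstacle here: the argument is routine degree-two Routh--Hurwitz bookkeeping, and the only spot requiring a line of care is the sign of the linear coefficients of $q_S$ and $q_R$, which I handle above via $A_2+ck_S\ge A_2$ and $A_5+ck_R\ge A_5$.
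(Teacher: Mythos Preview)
Your proposal is correct and follows essentially the same approach as the paper: both work directly from the factored characteristic polynomial $p_0(x)$, identify the three trivially negative roots $-\mu,-A_3,-A_6$, and reduce the stability question to the two quadratic cofactors governed by $R_0^{(S)}$ and $R_0^{(R)}$. In fact you are more thorough than the paper, which simply asserts that the remaining four roots have negative real part when $R_0^{(S)}<1$ and $R_0^{(R)}<1$; your explicit verification that the linear coefficients of $q_S$ and $q_R$ are positive (via $A_2+ck_S\ge A_2$ and $A_5+ck_R\ge A_5$) fills a gap the paper leaves implicit.
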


\subsection{Disease persistence}{\label{03:subsec:equilibrium_point_1_2}}
The disease persists in the system if at least one of the drug-sensitive or drug-resistant infected populations maintains a positive level for a long time, which is also referred to as an endemic state. In this subsection, we will discuss all possible endemic states and analyze the conditions necessary for their existence and stability. 
Based on these conditions, we can determine which parameter to focus on to reduce or eliminate the infection from the system.

The system (\ref{3_system_reduced}) has two possible scenarios for disease persistence: one in which only the drug-resistant infected population persists and another in which both drug-sensitive and drug-resistant infected populations coexist. 
\subsubsection{Drug-resistant strain endemic equilibrium point}{\label{03:subsubsec:equilibrium_point_1}}
To obtain the drug-resistant strain endemic equilibrium point, we set $I_{SU}=0$ and solve the resulting system of equations derived from the system (\ref{3_system_reduced}). The corresponding equilibrium point is given by,
$$E^{(1)} = \left(S^{(1)},I_{SU}^{(1)}, I_{SD}^{(1)}, T_{1}^{(1)}, I_{RU}^{(1)}, I_{RD}^{(1)}, T_{2}^{(1)}, A^{(1)}\right)$$
where, $S^{(1)}= \frac{A_4 A_5}{(A_5 + c k_{R}) \alpha_{R}},~ I_{SU}^{(1)} = I_{SD}^{(1)} = T_{1}^{(1)} = 0, ~ I_{RU}^{(1)} = \frac{\mu A_{5} \left(R_{0}^{(R)}-1 \right)}{\alpha_{R} \left(A_{5}+ck_{R}\right)}, ~ I_{RD}^{(1)} = \frac{\mu k_{R} \left(R_{0}^{(R)}-1 \right)}{\alpha_{R} \left(A_{5}+ck_{R}\right)}, ~ T_{2}^{(1)} = \frac{\mu k_{R} \beta_{R} \eta_{2} \left(R_{0}^{(R)}-1 \right)}{\alpha_{R} A_{6} \left(A_{5}+ck_{R}\right)}.$

Note that at the equilibrium state, $I_{SD} = T_{1} =0$, since $I_{SU}$ is the only source compartment for these populations. Clearly, the drug-resistant strain endemic equilibrium point exists if and only if $R_{0}^{(R)}>1$. This condition suggests that the transmission and diagnosis rates of the resistant strain, and the availability of second-line therapy play a critical role in determining the existence of $E^{(1)}$. However, the local stability of $E^{(1)}$ may still be influenced by parameters associated with the sensitive strain, even though these compartments are absent in the equilibrium state. To assess the local stability of $E^{(1)}$, we compute the Jacobian matrix  $J\left(E^{(1)}\right)$ using (\ref{03:eqn:jacobian_matrix}). The corresponding characteristic equation is then given by,
$$p_{1}(x)=(x+A_{3})(x+A_{6})(x^2+B_{1} x+B_{2})(x^3+C_1 x^2 + C_2 x+C_3)=0,$$
where $\displaystyle{B_{1}=A_{1} + A_{2}-\frac{\alpha_{S} A_{4} A_{5}}{\alpha_{R} (A_{5}+c k_{R})}}$, $\displaystyle{B_{2}=A_{1} A_{2}-\frac{\alpha_{S} A_{4} A_{5} (A_{2}+c k_{S})}{\alpha_{R} (A_{5}+c k_{R})}}$, $\displaystyle{C_{1}=A_{5} + \frac{c k_{R} A_{4}}{A_{5}+c k_{R}}} + \frac{\lambda \alpha_{R} (A_{5} + c k_{R})}{A_{4} A_{5}},$ $\displaystyle{C_{2}=\frac{-\mu A_{4}^{2} A_{5}^{2} + \lambda \alpha_{R} (A_{4}+A_{5}) (A_{5} + c k_{R})^{2}}{A_{4} A_{5} (A_{5} + c k_{R})}},$ and $\displaystyle{C_{3}=-\mu A_{4} A_{5} + \lambda \alpha_{R} (A_{5}+c k_{R})}.$ Clearly, $p_{1}(x)$ has two real and negative roots as $A_{3}$ and $A_{6}$. According to the Routh–Hurwitz criterion \cite{yang2002}, $E^{(1)}$  is locally asymptotically stable if the following conditions are satisfied: $B_{i}>0$ for $i=1,2;$ $C_{i}>0$ for $i=1,2,3;$ and $C_{1} C_{2}>C_{3}.$\\
Let's consider $\displaystyle{R_{0}^{(SR)} < 1}$, then we have
\begin{align*}
    B_{1} &= A_{1} + A_{2} - \frac{A_{1} A_{2}}{A_{2} + c k_{S}} \frac{R_{0}^{(S)}}{R_{0}^{(R)}} \\
    &= A_{1}\left(1-\frac{A_{2}}{A_{2}+c k_{S}}  R_{0}^{(SR)} \right) + A_{2} > 0, \\
    B_{2} &= A_{1} A_{2} \left(1 - R_{0}^{(SR)}\right)>0.
\end{align*}
Similarly, 
\begin{align*}
    C_{1} &= >0, \\
    C_{2} &= -\frac{\mu A_{4} A_{5}}{A_{5}+c k_{R}} + \frac{\lambda \alpha_{R} (A_{5}+c k_{R})}{A_{4}} + \frac{\lambda \alpha_{R} (A_{5}+c k_{R})}{A_{5}}\\
          &= \frac{-\mu A_{4} A_{5}^{2} + \lambda \alpha_{R} (A_{5}+c k_{R})^2}{A_{5}(A_{5} + c k_{R})} + \frac{\lambda \alpha_{R} (A_{5}+c k_{R})}{A_{4}}\\
          &>\frac{\mu A_{4} A_{5} \left(R_{0}^{(R)}-1 \right)}{A_{5} + c k_{R}} > 0, \qquad \left(\because R_{0}^{(R)}>1 \right)\\
    C_{3} &= \mu A_{4} A_{5} \left(R_{0}^{(R)}-1 \right) > 0.\\
\end{align*}
Also, 
\begin{align*}
    C_{1} C_{2}-C_{3} &= \lambda \alpha_{R} c k_{R} -\lambda \alpha_{R} \mu -\frac{c k_{R} \mu A_{4}^{2} A_{5}}{(A_{5} + c k_{R})^{2}} + \frac{\lambda \alpha_{R} c k_{R} A_{4}}{A_{5}} + \frac{\lambda^{2} \alpha_{R}^{2} (A_{5} + 2 c k_{R})}{A_{4} A_{5}} + \frac{c k_{R} \mu A_{4} A_{5}}{A_{5} + c k_{R}} \\
    &~~~~+ \frac{\lambda^{2} \alpha_{R}^{2} (A_{5} + c k_{R})^{2}}{A_{4}^{2} A_{5}} + \frac{\lambda \alpha_{R} (A_{5}^{4} + c k_{R} A_{5}^{3} + \lambda \alpha_{R} c^{2} k_{R}^{2})}{A_{4} A_{5}^{2}}\\
    & > -\lambda \alpha_{R} \mu -\frac{c k_{R} \mu A_{4}^{2} A_{5}}{(A_{5} + c k_{R})^{2}} + \frac{\lambda \alpha_{R} c k_{R} A_{4}}{A_{5}} + \frac{\lambda^{2} \alpha_{R}^{2} (A_{5} + 2 c k_{R})}{A_{4} A_{5}}\\
    & > \lambda \alpha_{R} \mu \left(R_{0}^{(R)} - 1 \right) + \frac{c k_{R} \mu A_{4}^{2} A_{5}}{(A_{5} + c k_{R})^{2}} \left(R_{0}^{(R)} - 1 \right) > 0.
\end{align*}

Based on the above discussion, we propose the following result about the existence and local stability of the drug-resistant strain endemic equilibrium point $E^{(1)}$. 

\begin{theorem}{\label{03:thm:eqilibrium_point_1}}
For the system (\ref{3_system_reduced}), the drug-resistant strain endemic equilibrium point $E^{(1)}$ exists if and only if $\displaystyle{R_{0}^{(R)}>1}$. Further, $E^{(1)}$ is locally asymptotically stable if $\displaystyle{R_{0}^{(SR)} < 1}$, otherwise it is unstable.
\end{theorem}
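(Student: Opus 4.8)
The plan is to handle existence and stability separately, leaning on the explicit formulas for $E^{(1)}$ and the factored characteristic polynomial $p_1$ assembled above. For existence, I would impose $I_{SU}=0$ in the equilibrium equations of~\eqref{3_system_reduced}: the $I_{SD}'=0$ equation then forces $I_{SD}=0$, the $T_1'=0$ equation forces $T_1=0$, the $I_{SU}'=0$ equation is satisfied automatically, and the remaining four equations in $(S,I_{RU},I_{RD},T_2)$ reduce to a single-strain steady-state problem that solves uniquely for the stated components, provided $I_{RU}\ne 0$ (otherwise one recovers the DFE). Since $S^{(1)}>0$ unconditionally and each of $I_{RU}^{(1)},I_{RD}^{(1)},T_2^{(1)}$ is a positive multiple of $R_0^{(R)}-1$, all components are positive iff $R_0^{(R)}>1$; if $R_0^{(R)}\le 1$ the candidate either leaves $\mathcal{R}$ or coincides with $E^{(0)}$, so no genuine drug-resistant endemic equilibrium exists. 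This yields the ``if and only if'' statement.

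For stability, I would evaluate the Jacobian~\eqref{03:eqn:jacobian_matrix} at $E^{(1)}$ and exploit its block structure: because $I_{SU}^{(1)}=I_{SD}^{(1)}=0$, the $I_{SU}$- and $I_{SD}$-rows couple only to the $(I_{SU},I_{SD})$ columns, while the $T_1$- and $T_2$-rows are triangular in the remaining coordinates. This produces the factorization $p_1(x)=(x+A_3)(x+A_6)(x^2+B_1x+B_2)(x^3+C_1x^2+C_2x+C_3)$ recorded above, with two manifestly negative roots $-A_3,-A_6$. By the Routh--Hurwitz criterion, $E^{(1)}$ is locally asymptotically stable precisely when $B_1,B_2>0$ and $C_1,C_3>0$ with $C_1C_2>C_3$. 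The computations displayed before the theorem show that $R_0^{(R)}>1$ (which holds whenever $E^{(1)}$ exists) already forces $C_1,C_2,C_3>0$ and $C_1C_2>C_3$, purely in terms of resistant-strain parameters, whereas $R_0^{(SR)}<1$ forces $B_2=A_1A_2\bigl(1-R_0^{(SR)}\bigr)>0$ and $B_1=A_1\bigl(1-\tfrac{A_2}{A_2+ck_S}R_0^{(SR)}\bigr)+A_2>0$. Conversely, if $R_0^{(SR)}>1$ then $B_2<0$, so the quadratic factor (whose root product equals $B_2$) has a root with positive real part and $E^{(1)}$ is unstable; $R_0^{(SR)}=1$ is the associated stability threshold.

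The only genuinely delicate step is the inequality $C_1C_2>C_3$: the expanded quantity $C_1C_2-C_3$ is a lengthy rational expression in $\lambda,\alpha_R,\mu,A_4,A_5,c,k_R$ with no obvious sign, and the argument hinges on regrouping it into a sum of terms each of the form (positive factor)$\times(R_0^{(R)}-1)$ together with manifestly positive remainders, as sketched above. Everything else --- the cascade $I_{SU}=0\Rightarrow I_{SD}=T_1=0$, the block decomposition of the Jacobian, the signs of $B_1$ and $B_2$, and the instability argument --- is immediate once $E^{(1)}$ and $p_1$ are written down.
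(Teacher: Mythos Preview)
Your proposal is correct and follows essentially the same route as the paper: set $I_{SU}=0$ and use the cascade $I_{SD}=T_1=0$ to reduce to a single-strain steady state for existence, then factor the characteristic polynomial as $(x+A_3)(x+A_6)(x^2+B_1x+B_2)(x^3+C_1x^2+C_2x+C_3)$ and apply Routh--Hurwitz separately to the quadratic (governed by $R_0^{(SR)}$) and the cubic (governed by $R_0^{(R)}$). Your treatment is in fact slightly more complete than the paper's, which verifies the stability conditions under $R_0^{(SR)}<1$ but does not spell out the instability direction; your observation that $R_0^{(SR)}>1$ forces $B_2<0$ and hence a positive real root of the quadratic factor is exactly the missing piece.
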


\subsubsection{Co-existence endemic equilibrium point}{\label{03:subsubsec:equilibrium_point_2}}
Here, we investigate the simultaneous persistence of drug-sensitive and drug-resistant infected populations by analyzing the co-existence endemic equilibrium of the system (\ref{3_system_reduced}). To identify this equilibrium, we set the right-hand side of the system (\ref{3_system_reduced}) to zero and solve the resulting system of algebraic equations, which yields:

$$E^{(*)} = \left(S^{(*)},I_{SU}^{(*)}, I_{SD}^{(*)}, T_{1}^{(*)}, I_{RU}^{(*)}, I_{RD}^{(*)}, T_{2}^{(*)}\right),$$
where 
\begin{align*}
    S^{(*)} &= \frac{A_1 A_2}{(A_2 + c k_{S}) \alpha_{S}},\\
    I_{SU}^{(*)} &= \frac{\mu A_{1}^{2} A_{2}^{2} A_{3} \left(R_{0}^{(S)}-1 \right) \left(R_{0}^{(SR)}-1 \right)}{\alpha_{S} A_{1} (A_{2}+c k_{S})\left( A_{1} A_{2} A_{3} \left(R_{0}^{(SR)}-1 \right)+\gamma k_{S} \beta_{S} \eta_{1} (1-\epsilon)\right)},\\
    I_{SD}^{(*)} &= \frac{\mu  k_{S} A_{1}^{2} A_{2}^{2} A_{3} \left(R_{0}^{(S)}-1 \right) \left(R_{0}^{(SR)}-1 \right)}{\alpha_{S} A_{1} A_{2} (A_{2}+c k_{S})\left( A_{1} A_{2} A_{3} \left(R_{0}^{(SR)}-1 \right)+\gamma k_{S} \beta_{S} \eta_{1} (1-\epsilon)\right)},\\
    T_{1}^{(*)} &=  \frac{\mu  k_{S} \beta_{S} \eta_{1} A_{1}^{2} A_{2}^{2} \left(R_{0}^{(S)}-1 \right) \left(R_{0}^{(SR)}-1 \right)}{\alpha_{S} A_{1} A_{2} (A_{2}+c k_{S})\left( A_{1} A_{2} A_{3} \left(R_{0}^{(SR)}-1 \right)+\gamma k_{S} \beta_{S} \eta_{1} (1-\epsilon)\right)},\\
    I_{RU}^{(*)} &= \frac{\gamma k_{S} \beta_{S} \eta_{1} \mu A_{5} (1-\epsilon) \left(R_{0}^{(S)}-1 \right)}{\alpha_{R} (A_{5}+c k_{R})\left( A_{1} A_{2} A_{3} \left(R_{0}^{(SR)}-1 \right)+\gamma k_{S} \beta_{S} \eta_{1} (1-\epsilon)\right)},\\
    I_{RD}^{(*)} &= \frac{\gamma k_{S} k_{R} \beta_{S} \eta_{1} \mu (1-\epsilon) \left(R_{0}^{(S)}-1 \right)}{\alpha_{R} (A_{5}+c k_{R})\left( A_{1} A_{2} A_{3} \left(R_{0}^{(SR)}-1 \right)+\gamma k_{S} \beta_{S} \eta_{1} (1-\epsilon)\right)},\\
    T_{2}^{(*)} &=  \frac{\gamma k_{S} k_{R} \beta_{S} \beta_{R} \eta_{1} \eta_{2} \mu (1-\epsilon) \left(R_{0}^{(S)}-1 \right)}{A_{6} \alpha_{R} (A_{5}+c k_{R})\left( A_{1} A_{2} A_{3} \left(R_{0}^{(SR)}-1 \right)+\gamma k_{S} \beta_{S} \eta_{1} (1-\epsilon)\right)}.
\end{align*}
The co-existence endemic equilibrium point remains in the biologically feasible region if and only if $R_{0}^{(S)}>1$ and $R_{0}^{(SR)}>1$. These existence conditions suggest that coexistence is feasible only when the drug-sensitive strain maintains dominance in overall incidence relative to the drug-resistant strain.

Now, in order to determine the local asymptotic stability of the equilibrium point $\displaystyle{E^{(*)}}$, we compute the Jacobian matrix $\displaystyle{J\left(E^{(*)}\right)}$. The corresponding characteristic equation of this Jacobian takes the form
\begin{equation}{\label{03:eqn:characteristic_equation_2}}
    p_{*}(x)=(x+A_{6})q_{*}(x)=0,
\end{equation}
where $q_{*}(x)$ is a sixth-degree polynomial given by:
$$q_{*}(x)=x^6 + a_5 x^5 + a_4 x^4 + a_3 x^3 + a_2 x^2 + a_1 x + a_0.$$
The factor $(x+A_{6})$ indicates an eigenvalue as $-A_6 ~(<0)$, contributing to stability of $E^{(*)}$. The remaining six eigenvalues of $\displaystyle{J\left(E^{(*)}\right)}$ correspond to the roots of the polynomial $q_{*}(x)$. However, analytical investigation to determine these roots and derive explicit conditions for local stability is intractable, as the extraction of the coefficients $a_0, \ldots, a_5$ is infeasible due to the highly complex and nested structure of the characteristic polynomial. Therefore, we will investigate the stability of the equilibrium point $E^{(*)}$ numerically for a specific parameter set using the Routh-Hurwitz criterion \cite{yang2002} in Section \ref{03:sec:numerical_simulation}. 

The Routh array for the polynomial \( q_{*}(x) = x^6 + a_5 x^5 + a_4 x^4 + a_3 x^3 + a_2 x^2 + a_1 x + a_0 \) is:
\[
\begin{array}{c|cccc}
s^6 & 1 & a_4 & a_2 & a_0 \\
s^5 & a_5 & a_3 & a_1 & 0 \\
s^4 & b_1 & b_2 & b_3 & 0 \\
s^3 & c_1 & c_2 & 0 & 0 \\
s^2 & d_1 & d_2 & 0 & 0 \\
s^1 & e_1 & 0 & 0 & 0 \\
s^0 & f_1 & 0 & 0 & 0 \\
\end{array}
\]
where the elements are defined as:
\begin{align*}
b_1 &= \frac{ a_5 a_4 - a_3 }{ a_5 }, \quad
b_2 = \frac{ a_5 a_2 - a_1 }{ a_5 }, \quad
b_3 = a_0,  \quad
c_1 = \frac{ b_1 a_3 - a_5 b_2 }{ b_1 },  \quad
c_2 = \frac{ b_1 a_1 - a_5 b_3 }{ b_1 }, \\
d_1 &= \frac{ c_1 b_2 - b_1 c_2 }{ c_1 }, \quad
d_2 = a_0, \quad
e_1 = \frac{ d_1 c_2 - c_1 a_0 }{ d_1 },\quad
f_1 = a_0.
\end{align*}
According to the Routh-Hurwitz criterion, the equilibrium point is locally asymptotically stable if and only if all elements in the first column of the Routh array are positive. Therefore, the local stability of $E^{(*)}$ holds under the condition:
\begin{equation}{\label{03:eqn:stability_condition_2}}
    a_5 > 0, \quad b_1 > 0, \quad c_1 > 0, \quad d_1 > 0, \quad e_1 > 0, \quad a_0 > 0
\end{equation}
Based on these results, we we propose the following result about the existence and local stability of the co-existence endemic equilibrium point $E^{(*)}$.
\begin{theorem}{\label{03:thm:eqilibrium_point_2}}
For the system (\ref{3_system_reduced}), the co-existence endemic equilibrium point $E^{(*)}$ exists if and only if $\displaystyle{R_{0}^{(R)}>1}$ and $\displaystyle{R_{0}^{(SR)} > 1}$. Further, $E^{(*)}$ is locally asymptotically stable under the parametric conditions provided in (\ref{03:eqn:stability_condition_2}), otherwise it is unstable.
\end{theorem}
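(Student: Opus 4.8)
The plan is to split the argument into two independent parts: (i) existence of $E^{(*)}$ with all seven coordinates strictly positive and lying in $\mathcal{R}$, and (ii) local asymptotic stability via the Routh--Hurwitz criterion applied to the reduced characteristic polynomial $q_{*}(x)$.

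For existence, I would solve the steady-state version of \eqref{3_system_reduced} by back-substitution in the order in which the compartments chain together. Setting $I_{SU}'=0$ with $I_{SU}\neq 0$, combined with $I_{SD}'=0$ (so $I_{SD}=(k_S/A_2)I_{SU}$), forces $\alpha_S S^{(*)}(1+ck_S/A_2)=A_1$, giving $S^{(*)}=A_1A_2/((A_2+ck_S)\alpha_S)$ — notably independent of the resistant-strain variables. Then $T_1'=0$ gives $T_1$ as a multiple of $I_{SU}$, and $I_{RD}'=0$, $T_2'=0$ give $I_{RD},T_2$ as multiples of $I_{RU}$. Substituting into $I_{RU}'=0$ yields one linear relation between $I_{RU}$ and $I_{SU}$, while the $S$-equation, rewritten using the identities above as $\lambda=A_1I_{SU}+A_4I_{RU}-\gamma(1-\epsilon)T_1+\mu S^{(*)}$, yields a second; solving this $2\times 2$ linear system produces the displayed closed forms for $I_{SU}^{(*)},I_{RU}^{(*)}$ and hence for all remaining components. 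I would then read off the sign of each coordinate: every denominator is positive exactly when $A_1A_2A_3(R_{0}^{(SR)}-1)+\gamma k_S\beta_S\eta_1(1-\epsilon)>0$, which holds once $R_{0}^{(SR)}>1$, while the numerators carry the factors $(R_{0}^{(S)}-1)$ and $(R_{0}^{(S)}-1)(R_{0}^{(SR)}-1)$; using $R_{0}^{(S)}=R_{0}^{(R)}R_{0}^{(SR)}$ one checks that simultaneous positivity of all seven components is equivalent to the stated threshold conditions. Membership in $\mathcal{R}$ then follows either from the positive-invariance theorem (Theorem~1) or directly by summing the equations to bound the total population by $\lambda/\mu$.

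For stability, I would form $J(E^{(*)})$ from \eqref{03:eqn:jacobian_matrix} evaluated at $E^{(*)}$. Since the $T_2$-compartment is a pure sink — its column has only the inflow entry $\beta_R\eta_2$ and the diagonal $-A_6$, with no feedback to any other compartment — the characteristic polynomial factors as $p_{*}(x)=(x+A_6)q_{*}(x)$ as in \eqref{03:eqn:characteristic_equation_2}, exhibiting the guaranteed stable eigenvalue $-A_6<0$. The remaining six eigenvalues are the roots of $q_{*}(x)=x^6+a_5x^5+\cdots+a_0$, and I would then invoke the Routh--Hurwitz criterion: these roots all lie in the open left half-plane if and only if every entry of the first column of the associated Routh array is positive, i.e. $a_5>0$, $b_1>0$, $c_1>0$, $d_1>0$, $e_1>0$, $a_0>0$, which is precisely \eqref{03:eqn:stability_condition_2}. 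Together with $-A_6<0$ this gives local asymptotic stability of $E^{(*)}$, and failure of any one of these inequalities forces a first-column sign change and hence a root with positive real part, giving instability.

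The main obstacle is genuinely symbolic rather than conceptual: the coefficients $a_0,\dots,a_5$ of $q_{*}(x)$ — and therefore the Routh-array entries $b_i,c_i,d_i,e_i$ — are rational functions of the model parameters with an unmanageably nested structure, so there is no realistic prospect of reducing \eqref{03:eqn:stability_condition_2} to transparent inequalities in the original parameters or in $R_{0}^{(S)},R_{0}^{(R)},R_{0}^{(SR)}$ alone (as was feasible for $E^{(1)}$ in Theorem~\ref{03:thm:eqilibrium_point_1}). This is exactly why the stability half of the statement is left in implicit Routh--Hurwitz form and is to be verified numerically for the calibrated parameter set in Section~\ref{03:sec:numerical_simulation}; the substantive content of the theorem is the existence/positivity analysis, with the stability clause amounting to a packaging of the Routh--Hurwitz test for the symbolically opaque polynomial $q_{*}$.
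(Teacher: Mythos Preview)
Your approach matches the paper's essentially line for line: the paper also obtains $E^{(*)}$ by setting the right-hand side of \eqref{3_system_reduced} to zero and back-solving to the displayed closed forms, then reads off positivity; for stability it factors the characteristic polynomial as $(x+A_6)q_{*}(x)$ and packages the Routh--Hurwitz test for the sextic $q_{*}$ into the implicit conditions \eqref{03:eqn:stability_condition_2}, explicitly conceding that the coefficients $a_0,\dots,a_5$ are analytically intractable and deferring verification to Section~\ref{03:sec:numerical_simulation}. Your remark that the $T_2$-column is a pure sink is exactly the structural reason the factor $(x+A_6)$ splits off.

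One small gap: your claim that ``simultaneous positivity of all seven components is equivalent to the stated threshold conditions'' via $R_{0}^{(S)}=R_{0}^{(R)}R_{0}^{(SR)}$ does not go through as an equivalence. The positivity analysis (as the paper's own text immediately preceding the theorem states) yields $R_{0}^{(S)}>1$ and $R_{0}^{(SR)}>1$ as the exact condition. The theorem's hypotheses $R_{0}^{(R)}>1$ and $R_{0}^{(SR)}>1$ certainly \emph{imply} $R_{0}^{(S)}>1$, but the converse fails (e.g.\ $R_{0}^{(S)}=2$, $R_{0}^{(SR)}=4$, $R_{0}^{(R)}=1/2$). Indeed the paper's own numerical illustration in Section~\ref{03:subsec:numerical_simulation:existence_stability_EP} exhibits a stable $E^{(*)}$ with $R_{0}^{(R)}=0.08<1$. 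So the $R_{0}^{(R)}>1$ in the theorem statement is evidently a typo for $R_{0}^{(S)}>1$; your argument is correct for the intended statement, and you should not try to force the equivalence with the printed one.
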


The existence and stability conditions of these equilibrium points highlight the critical role of certain parameters in deciding the long-term dynamics of the system. For example,  a lower recruitment rate $\lambda$ reduces the influx of susceptible individuals, leading to a decline in disease burden or even complete elimination over the long time horizon. Similarly, increasing awareness among diagnosed individuals (i.e., reducing $c$) significantly contributes to disease control. The relative transmission dynamics of the two strains also plays a key role. Relatively higher transmission rates of the drug-sensitive strain or lower rates for the drug-resistant strain can create a possibility for the coexistence of sensitive and resistant strains, even at a higher treatment coverage. The effects of other parameters involve complex inter-dependencies, which will be explored in detail through a comprehensive sensitivity analysis in Section \ref{03:sec:sensitivity_analysis}.

\begin{remark}
The existence and stability conditions of the equilibrium points $E^{(0)}, E^{(1)}$ and $E^{(*)}$ indicate that the system does not admit multiple regions of asymptotic stability, and no two stable equilibrium states can coexist. This implies the absence of bistability in the system.
\end{remark}

\section{Epidemiological parameters and initial state values}{\label{03:sec:parameter_estimation}}
In this Section, we compute the epidemiological parameters of the system (\ref{3_system_main}), using various studies, including meta-analysis, reviews, and  reports. To increase the applicability of the proposed model across different scenarios, we incorporate studies that account for diverse population groups based on factors such as risk behavior, geographic location, and economic status etc. Using these studies, we determine the baseline parameter values along with their ranges of variation. Further, we estimate the initial conditions for each state variable in the system (\ref{3_system_main}), which will be used for most of numerical simulations. 

In \cite{poon2022}, the recruitment rate for India is estimated at 25.31 million per year, calculated based on factors such as births, immigration, emigration, and other processes contributing to new addition of individuals into the sexually active population. The transmission rate $\alpha_{S}$ $(\alpha_{R})$ is the rate of infection per effective contact between susceptible and drug-sensitive (drug-resistant) infected populations. Typically, the infection rate is defined by the change in the infected population resulting from interactions between one unit of both susceptible and infected individuals over a unit of time. For the drug-sensitive infected population, we consider the transmission rate $\alpha_{S}=0.000047134$ million$^{-1}$ year$^{-1}$ \cite{poon2022}. A mutation in the wild strain of a virus can either increase or decrease its virulence \cite{ande1991}, making it essential to consider both scenarios where the transmission rate of the drug-resistant infected population is either higher or lower than that of the drug-sensitive infected population. Therefore, we vary the transmission rate associated with drug-resistant strain within an appropriate range of 0.00001 to 0.00007 million$^{-1}$ year$^{-1}$, setting the baseline value at $\alpha_{R}=0.00002$ million$^{-1}$ year$^{-1}$. A recent systematic review and meta-analysis study of high- and upper-middle-income countries estimated that people live with undiagnosed HIV infection for an average of 3 years before receiving a diagnosis \cite{gbad2022}. However, despite improved HIV testing facilities, the average time from infection to diagnosis can still range widely, from 0.69 to 10.15 years, depending on different interacting groups and regional factors \cite{wand2009, hall2015, dail2017}. Therefore, we set the baseline value for the diagnosis rate of infection in drug-sensitive infected population as $k_{S}=0.33$ year$^{-1}$. To assess various scenarios influenced by $k_{S}$, we can adjust this parameter within a range of 0.01 to 1.45 per year. For the undiagnosed drug-resistant infected population, we assume that the diagnosis of their resistant status takes less time as compared to the diagnosis of the drug-sensitive infected population, as they are already under medical supervision. Hence, we assume that the diagnosis rate of infection in drug-sensitive infected population as $k_{R}=$5 year$^{-1}$, with a range of variation as 1 to 15 per year. This corresponds to the undiagnosed period for drug-resistant strain, which varies from approximately 3 weeks to 1 year. Undiagnosed individuals are estimated to transmit HIV at a rate three to seven times higher than diagnosed individuals \cite{hall2012}. This variation depends on factors such as the number of at-risk sexual partners, retention in treatment, and viral suppression among diagnosed individuals. Thus, we choose $c=0.25$ as the baseline value, indicating a 75\% reduction in HIV transmissibility after diagnosis. To account for all possible scenarios, we allow $c$ to vary between 0 and 1.

Further, in the absence of treatment, the natural progression of HIV infection to AIDS usually takes about $8 \text{ to }10$ years \cite{poor2016}. Accordingly, we assume a mean conversion time of $1 / \theta_{1}=10$ years (i.e., $\theta_{1}=0.1$ year$^{-1}$) for an untreated HIV infection case into the AIDS stage. In contrast, when treatment is provided, an adherent patient advances to the AIDS stage far more slowly than a non-adherent or untreated patient. ART significantly increases the life expectancy of AIDS patients, with estimates ranging from 29 to 37.3 years, depending on the age of the patient at which treatment begins \cite{teer2017}. Therefore, we choose a suitable parameter range of 0.025 to 0.05 per year for $\theta_{2}$, representing an increase of 20 to 40 years in life expectancy for HIV-infected patients as a result of ART. The parameters $\beta_{S}$ and $\beta_{R}$ are associated with the expected time taken to initiate appropriate treatment after the diagnosis of infection for drug-sensitive and drug-resistant individuals, respectively. A systematic review and meta-analysis suggest that this duration can vary from 5 to 40 days, influenced by factors such as age and gender of patient, and the economic conditions of the country \cite{tao2023}. Another study indicates a broader range, from 20 to 108 days \cite{nico2021}. Consequently, we set these parameters between 4 and 20 per year. A study has shown that the average time for the development of drug resistance in non-adherent patients ($1/\gamma$) is around 6 months after the initiation of treatment \cite{stad2013}. Based on this, we vary the parameter $\gamma$ between 0.5 and 2 per year, considering different levels of sub-optimality in adherence to the treatment. The United Nations World Population Prospects 2024 \cite{unit2024} reports that the crude death rate (deaths per 1,000 people per year) ranges from 6.9 to 9.3 across different geographic regions, depending on economic status. Based on this, we assume the natural death rate $(\mu)$ can vary between 0.0069 and 0.0093 per year. Regarding the disease induced death rate, the survival probability of a patient declines from 0.48 to 0.26 and further to 0.18 over 2, 4, and 6 years, respectively, following the onset of AIDS \cite{poor2016}. Accordingly, we set the parameter $\mu_{d}$ to lie within the range of 0.16 and 1 per year, representing a potential time frame of 1 to 6 years for patient mortality after AIDS onset. 

By the end of 2023, ART coverage among the HIV infected population ranged from $49\%$ in the Middle East and North Africa to $83\%$ in Eastern and Southern Africa, with a global average of $77\%$ \cite{fact2024}. Therefore, we consider the baseline values for the parameters $\eta_{1}$ and $\eta_{2}$ as 0.77 and 0.9, respectively.
In a recent study, authors have found that only $53\%$ population is optimally adherent to ART, based on the assumption that the proportion of days covered (PDC) for the study group is greater than 0.95 \cite{li2024}. Furthermore, if we assume a PDC of 0.90 as the threshold for developing drug resistance in non-adherent patients, approximately $34\%$ of patients are non-adherent to ART. Thus, we set the parameter $\epsilon=0.66$. For the initial values of each state variable, 

We set the initial population sizes as follows: 625 million for the susceptible group, 2.2 million for drug-sensitive infected individuals, 1.5 million for those under treatment, and 1 million for drug-resistant infected individuals, following \cite{poon2022}. A recent study indicates that approximately one-third of infected individuals are unaware of their infection status in underdeveloped and developing countries \cite{mahm2024}. Based on this, we allocate 0.7 million to the drug-sensitive undiagnosed group, 1.5 million to the drug-sensitive diagnosed group, 0.3 million to the drug-resistant undiagnosed group, and 0.7 million to the drug-resistant diagnosed group. Additionally, among the 1.5 million individuals under treatment, we assume that 1.2 million are receiving first-line treatment, while 0.3 million are receiving the second-line treatment. Since individuals who progress to the AIDS stage have a short survival period, the population in this stage remains relatively small, estimated at around 0.25 million, as considered in \cite{poon2024}. However, we also vary these initial conditions in our numerical simulations to explore different possible scenarios in dynamics of the proposed model. For each parameter, Table \ref{03:tab:parameter_values} provides the unit, baseline value, range of variation, and corresponding references.

\begin{table}[htbp]
\centering
\begin{tabular}{@{}lllll@{}}
\toprule
\textbf{Parameter} & \textbf{Unit} & \textbf{Baseline Value} & \textbf{Range of Variation} & \textbf{Reference(s)} \\
\midrule
$\lambda$      & million year$^{-1}$ & 25.31 &    [15,\ 30]             & \cite{poon2022} \\
$\alpha_S$     & million$^{-1}$ year$^{-1}$ & $4.7134 \times 10^{-5}$ & $[4 \times 10^{-5},\ 3 \times 10^{-4}]$ & \cite{poon2022} \\
$\alpha_R$     & million$^{-1}$ year$^{-1}$ & $2.0 \times 10^{-5}$ & $[1 \times 10^{-5},\ 3 \times 10^{-3}]$ & \cite{poon2022} \\
$k_{S}$        & year$^{-1}$ & 0.33 & [0.01,\ 1.45]           & \cite{hall2015, dail2017, gbad2022} \\
$k_{R}$        & year$^{-1}$ & 5.00 & [1,\ 15]                & \cite{hall2015, dail2017, gbad2022} \\
$c$            & unitless     & 0.25 & [0,\ 1]                 & \cite{hall2012} \\
$\theta_{1}$   & year$^{-1}$ & 0.10 & [0.05,\ 0.5]            & \cite{poor2016} \\
$\theta_{2}$   & year$^{-1}$ & 0.03 & [0.025,\ 0.05]          & \cite{teer2017} \\
$\beta_S$      & year$^{-1}$ & 10.00 & [4,\ 20]               & \cite{nico2021, tao2023} \\
$\beta_R$      & year$^{-1}$ & 15.00 & [4,\ 20]               & \cite{nico2021, tao2023} \\
$\gamma$       & year$^{-1}$ & 1.00 & [0.5,\ 2]               & \cite{stad2013} \\
$\eta_1$       & unitless     & 0.77 & [0,\ 1]                 & \cite{fact2024} \\
$\eta_2$       & unitless     & 0.90 & [0,\ 1]                 & Assumed \\
$\epsilon$     & unitless     & 0.66 & [0,\ 1]                 & \cite{li2024} \\
$\mu$          & year$^{-1}$ & 0.007 & [0.0069,\ 0.0093]      & \cite{unit2024} \\
$\mu_d$        & year$^{-1}$ & 0.50 & [0.16,\ 1]              & \cite{poor2016} \\
\bottomrule
\end{tabular}
\caption{Parameter values used in the system (\ref{3_system_main}) along with their units, baseline values, range of variation, and references.}
\label{03:tab:parameter_values}
\end{table}

\section{Sensitivity analysis}{\label{03:sec:sensitivity_analysis}}
In the model (\ref{3_system_main}), the state variables have a complex and non-linear relationship with the epidemiological parameters. The uncertainty in the model inputs and outputs may often attributes to the ambiguous behavior of the model. In such situations, the sensitivity analysis is a tool that identifies the role of uncertainties in the input factors of the model on the variations in different model outputs \cite{pian2016}. The input factors of the model are quantities such as initial values of state variables, model parameters, model parametrization, etc. which can be controlled before the model execution. On the other hand, model outputs, often known as quantities of interest (QoI), refer to variables dependent on model inputs. In epidemiological models, the QoI can be time dependent, such as growth rate of infection, values of state variables or cummulative infections over time, and time independent, such as the basic reproduction numbers, peak infection time and magnitude. Since the model (\ref{3_system_main}) does not exhibit bi-stable behavior, initial conditions have an insignificant influence on time-dependent QoIs in the longer horizon. However, a small variation might be observed in such model outputs at an early phase for different initial conditions. Therefore, we consider epidemiological parameters as uncertain input quantity for sensitivity analysis. 

The sensitivity analysis methods are generally divided into two categories based on the approach used to explore the input space: Local methods, which evaluate the influence of inputs at a single point, and Global methods, which examine sensitivity at multiple points simultaneously. In this section, we aim to address the following questions: (A). What is the local as well as global impact of each epidemiological parameter on various QoIs? (B). Is there any difference in the contribution of these parameters in the uncertainties of the short- and long-term dynamics of the model (\ref{3_system_main})? To thoroughly answer these questions, we conducted both local and global sensitivity analysis, allowing for a detailed evaluation of parameter influence on the dynamics of the system.

\subsection{Local sensitivity analysis}{\label{03:subsec:local_sensitivity_analysis}}
The initial transmission of drug-sensitive and drug-resistant strains of HIV is largely governed by their respective basic reproduction numbers \cite{chit2008}. In addition, we have observed the importance of the ratio of these basic reproduction numbers in the existence and stability of different equilibrium points in section \ref{03:sec:equilibrium_points}. Therefore, we choose the basic reproduction number of both strains and their ratio as our QoIs to assess the local impact of each parameter on the short-term dynamics of the model (\ref{3_system_main}). Although local sensitivity analysis (LSA) does not provide any information about the interactions between parameters and their combined effect on model output, it works as an initial screening tool to identify the most influential parameters. The normalized forward sensitivity index is a commonly used local sensitivity measure that quantifies the relationship between relative changes in a parameter and the corresponding variations in a variable. If the model output $(u)$ is a differentiable function of the model parameter $(\alpha)$, the normalized forward sensitivity index $SI_{\alpha}^{u}$ is defined as:
$$SI_{\alpha}^{u}:=\frac{\alpha}{u}\times \dfrac{\partial u}{\partial \alpha}$$

We calculated the local sensitivity index of $R_{0}^{(S)}$, $R_{0}^{(R)}$ and $R_{0}^{(SR)}$ corresponding to each parameter using their baseline values from Table \ref{03:tab:parameter_values}. These indices are presented in Table \ref{03:tab:local_sensitivity}. Each index represents the ratio of the percentage change in the model output in response to a given percentage change in the model input. The negative sign shows an inverse relationship between the model input and the output. The highlighted values in bold indicate the significant parameters with a sensitivity index magnitude greater than 0.5. Note that the parameters not listed in Table \ref{03:tab:local_sensitivity} have no impact on any of the given QoIs. From this local sensitivity analysis, we observe that $\lambda$ and $\alpha_{S}$ are the most influential parameters in positively driving $R_{0}^{(S)}$, while $\lambda$ and $\alpha_{R}$ play a similar role in influencing $R_{0}^{(R)}$. In contrast, $R_{0}^{(S)}$ is most negatively sensitive to $\mu$ and $k_{S}$, whereas $\mu$ and $k_{R}$ have the most negative impact on $R_{0}^{(R)}$.
Except for $\lambda$, every listed parameter influences $R_{0}^{(SR)}$. Among them, $\alpha_{S}$ and $k_{R}$ have the strongest positive impact, while $\alpha_{R}$ and $k_{S}$ show a negative influence, highlighting a competitive interaction between these two strains. 

\begin{table}[htbp]
    \centering
    \begin{adjustbox}{max width=\textwidth}
    \begin{tabular}{@{}lccccccccccccc@{}}
        \toprule
        \textbf{QoI} & $\lambda$ & $\alpha_S$ & $\alpha_R$ & $c$ & $k_S$ & $k_R$ & $\theta_{1}$ & $\beta_S$ & $\beta_R$ & $\eta_{1}$ & $\eta_{2}$ & $\mu$ \\
        \midrule
        $R_{0}^{(S)}$  & \textbf{1.000} & \textbf{1.000} & --     & 0.011   & \textbf{-0.745} & --     & -0.229   & -0.011   & --     & -0.010   & --     & \textbf{-1.016} \\
        $R_{0}^{(R)}$  & \textbf{1.000} & --     & \textbf{1.000} & 0.085   & --     & \textbf{-0.894} & -0.020   & --     & -0.085   & --     & -0.084   & \textbf{-1.001} \\
        $R_{0}^{(SR)}$ & --     & \textbf{1.000} & \textbf{-0.952} & -0.074  & \textbf{-0.745} & \textbf{0.894} & -0.209   & -0.011   & 0.085   & -0.010   & 0.084   & -0.015 \\
        \bottomrule
    \end{tabular}
    \end{adjustbox}
    \caption{Normalized forward sensitivity indices of $R_{0}^{(S)}$, $R_{0}^{(R)}$, and $R_{0}^{(SR)}$ with respect to key model parameters. Bold values indicate indices with absolute values greater than 0.5, highlighting parameters with substantial influence on the respective QoIs.}
    \label{03:tab:local_sensitivity}
\end{table}

Although LSA gives an initial screening of some most influential parameters at their baseline values and has low computational cost, it is best suited for linear models where these sensitivity results can be extrapolated across the entire parameter space. For a non-linear model, where state variables and model parameters may interact with each other, LSA may produce misleading conclusions \cite{qian2020}. Therefore, we further extend our analysis by conducting a detailed global sensitivity analysis of the proposed nonlinear model (\ref{3_system_main}) to assess the impact of interactions among state variables and model parameters on different model outputs. 

\subsection{Global sensitivity analysis}{\label{03:subsec:global_sensitivity_analysis}}
Global sensitivity analysis (GSA) methods typically determine sensitivity indices by applying an appropriate averaging method to model output values for multiple input points in the parameter space. This single averaged value quantifies the overall impact of the input factor on output uncertainty. These indices are often not computable analytically and are instead estimated using numerical approximations based on sampling-based methods. Various sampling procedures are available in the literature for generating input samples across the parameter space, including random sampling \cite{fish2013}, Latin hypercube sampling (LHS) \cite{helt2003}, and quasi-random sampling \cite{cafl1998} etc. The selection of an appropriate sampling technique depends on the specific requirements of the GSA method. In this study, we apply two different GSA methods based on their respective use cases for different QoIs.

\subsubsection{Partial rank correlation coefficient method}{\label{03:subsubsec:PRCC}}
The partial rank correlation coefficient (PRCC) is a correlation and regression based GSA method that quantifies the monotonic relationship between the model parameters and outputs. PRCC efficiently captures monotonic dependencies, including underlying nonlinear relationships, by transforming data into rank-ordered form and removing linear effects of other parameters. A rank-based transformation ensures the robustness of the results, particularly when nonlinear relationships exist between input parameters and outputs. The PRCC gives the sensitivity measure as the linear correlation between the residuals $(\mathbf{X_{j}}-\hat{X}_{j})$ and $(\mathbf{Y}-\hat{Y})$. Here, $\mathbf{X_{j}}$ is the rank-transformed $j^{th}$ input parameter and $\mathbf{Y}$ denotes the rank-transformed output state variable. The terms $\hat{X}_{j}$ and $\hat{Y}$ are derived from following linear regression models based on $k$ samples:
$$\hat{X}_{j}=C_{0}+\sum_{i=1,i\neq j}^{k}C_i \mathbf{X_i}, \qquad \text{and} \qquad  \hat{Y}=B_{0}+\sum_{i=1,i\neq j}^{k}B_i \mathbf{Y_i}.$$ 
A PRCC index of $1$ or $-1$ indicates that the output is a perfectly monotonic increasing or decreasing function of the inputs, respectively. In contrast, an index of 0 suggests that the output fluctuates between increasing and decreasing at successive data points.

Similar to Section \ref{03:subsec:local_sensitivity_analysis}, we select $R_{0}^{(S)}$, $R_{0}^{(R)}$ and $R_{0}^{(SR)}$ as QoIs to access the impact of parameters on early dynamics of the model. Each of these functions is strictly monotonic with respect to all parameters within the specified range of variation given in Table \ref{03:tab:parameter_values}, a necessary condition to ensure the robustness of the PRCC method. To generate the sample space, we use LHS with $N=5000$, which is more efficient than random sampling (see \ref{03:appendix:sec:convergence_analysis}). Here, $N$ represents the number of samples from the feasible range of parameters. LHS requires fewer simulations because it achieves a faster convergence rate. As we do not have any information on the distribution of parameters, we assume a uniform distribution for each parameter. For the generated sample space, the QoIs are first computed and then converted into their rank values along with the parameters. The PRCC values for each parameter are subsequently determined using the procedure described in \cite{mari2008}. These values, corresponding to different QoIs, are displayed in Figure \ref{03:fig:gsa_prcc_heatmap} using a heatmap representation. In Figure \ref{03:fig:gsa_prcc_heatmap}, the color intensity of each block quantifies the influence of a parameter on the corresponding output. The parameters $\alpha_{S} \text{ and } \lambda$ exhibit the strongest positive influence on $R_{0}^{(S)}$, whereas $k_{S}, \theta_{1}, \text{ and } \mu$ show a significant negative impact on $R_{0}^{(S)}$. Similarly, variations in parameters $\alpha_{R}, c \text{ and } \lambda$ lead to a directly proportional change in $R_{0}^{(R)}$, while parameters $k_{R}$ and $\eta_{2}$ influence it indirectly. Also, $R_{0}^{(SR)}$ exhibits a direct dependence on $\alpha_{S}, k_{R} \text{ and } \eta_{2}$, and $\alpha_{R}, k_{S} \text{ and } c$ contribute inversely to its variations.

The sensitivity analysis is meaningful only with an assessment of its sampling variability. Quantifying the uncertainty in output values over a given parameter range is a computationally expensive task, as it involves multiple iterations of the entire process. Therefore, we applied the bootstrapping technique for resampling to access the confidence interval (CI) for the PRCC values efficiently. We used the percentile method for constructing CIs to eliminate the effect of skewness in bootstrap distributions, as moment-based methods may lead to poor estimates in such cases \cite{arch1997}. However,  the percentile method requires a larger number of bootstrap resamples to achieve reliable estimates. As suggested in \cite{arch1997}, we performed 2000 bootstrap iterations by random resampling with replacement from the original dataset of 5000 points. For each bootstrap sample, we computed the PRCC value for all three outputs and constructed 95\% CIs using the percentile approach. Figure \ref{03:fig:gsa_prcc_barplot} presents the mean PRCC values for each parameter along with their confidence intervals for different outputs. Parameters with confidence intervals containing $0$ are considered to be non-significant. To justify the selection of number of sample points and bootstrap resamples, a detailed convergence test for $3$ parameters with highest PRCC values (magnitude) for each output is presented in Figure \ref{03:fig:gsa_prcc_convergence} (see \ref{03:appendix:sec:convergence_analysis}). 

\begin{remark}
PRCC method eliminates the effect of any hidden correlation between parameters by partialling out their shared variance that might arise due to randomness in the sampling procedure. However, certain parameters that do not appear in the expressions for these reproduction numbers still have nonzero PRCC values, though their influence remains negligible.
\end{remark}

\begin{figure}
\centering
\includegraphics[width=1\linewidth]{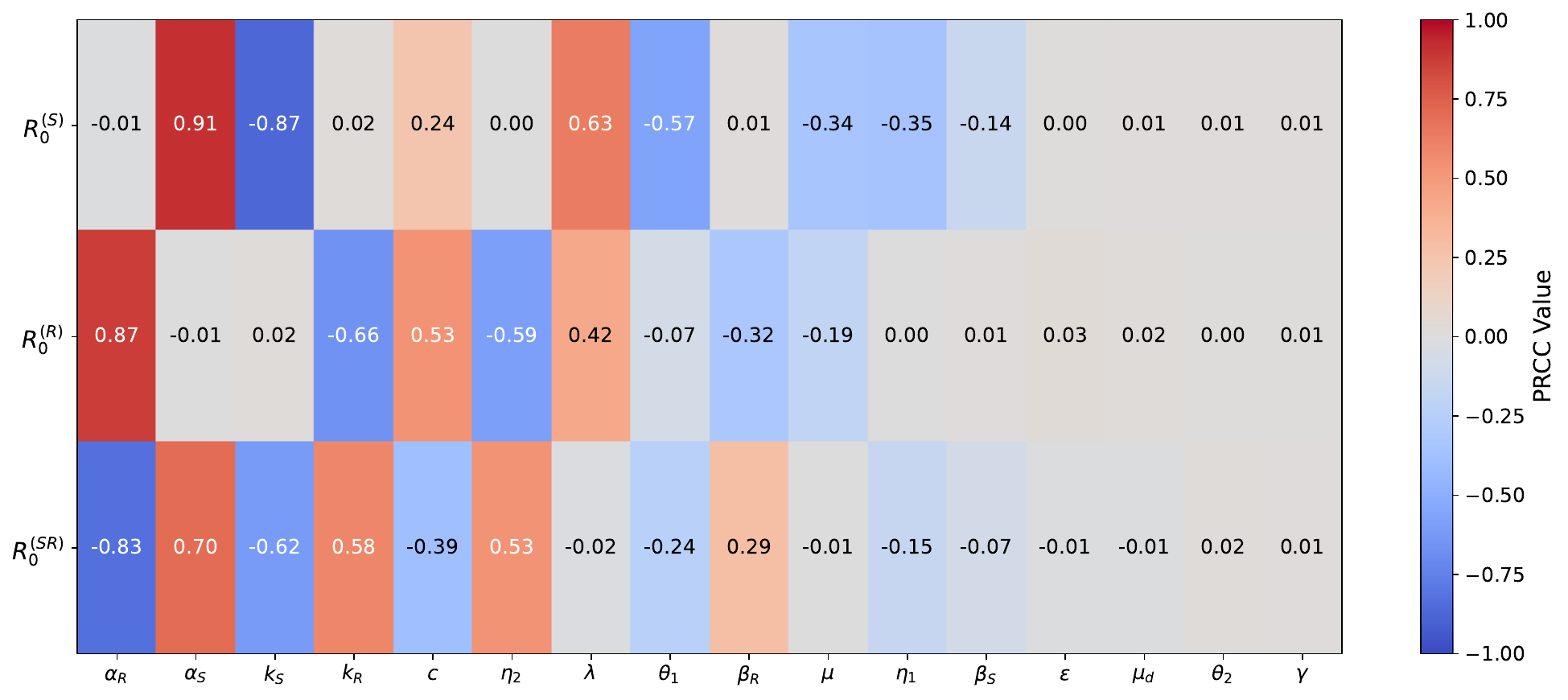}
\caption{Heatmap illustrating  PRCC values computed using 5,000 parameter sets generated using LHS for parameters influencing the basic reproduction numbers of sensitive strain $(R_{0}^{(S)})$, resistant strain $(R_{0}^{(R)})$, and their ratio $(R_{0}^{(SR)})$. The color intensity represents the strength of sensitivity, with red indicating positive correlation  and blue indicating negative correlation. Parameters are arranged in descending order of mean absolute PRCC values to highlight their relative contribution to early model dynamics.}
\label{03:fig:gsa_prcc_heatmap}
\end{figure}

\begin{figure}
\centering
\includegraphics[width=1\linewidth]{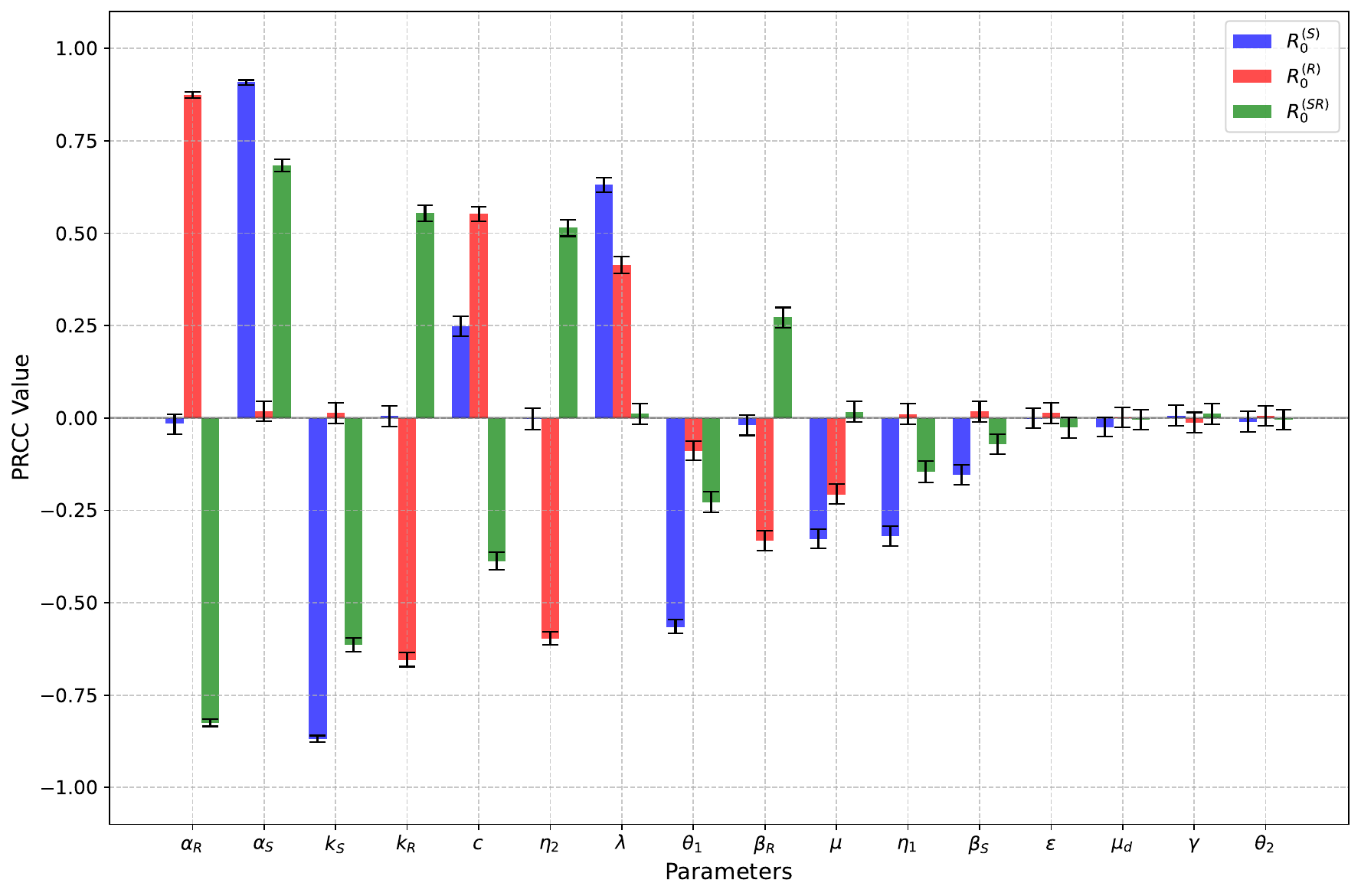}
\caption{Bar plot showing mean PRCC values for parameters influencing the basic reproduction numbers of sensitive strain $(R_{0}^{(S)})$ in blue, resistant strain $(R_{0}^{(R)})$ in red, and their ration $(R_{0}^{(SR)})$ in green, derived from 2000 bootstrap resamples of 5000 original Latin Hypercube samples. Error bars indicate 95\% confidence intervals. Parameters are arranged in descending order of mean absolute PRCC values to highlight their relative contribution to early model dynamics.}
\label{03:fig:gsa_prcc_barplot}
\end{figure}

To examine the long-term impact of these parameters on the model dynamics, we consider the state variable values at a specific time as the quantity of interest. However, directly applying the PRCC method for global sensitivity analysis may lead to inaccurate results, as the state variables do not necessarily exhibit a monotonic relationship with these parameters. In such cases, variance-based sensitivity analysis methods are more appropriate, providing more reliable results with better interpretability. We apply the Sobol method for GSA, which characterizes the total variance in output values into contributions from individual parameters and their interactions.

\subsubsection{The Sobol method}{\label{03:subsubsec:Sobol}}
The Sobol method is a variance-based approach for GSA that quantifies the contribution of individual parameters or a group of parameters to the overall variability of output variables in a non-linear model \cite{sobo2001}. However, this method does not provide any insight into the direction of this impact of the parameters. The Sobol indices are derived from the conditional variance. Given $Y$ as the model output,  the first-order Sobol index for an input parameter $X_{i}$, measuring the direct contribution of $i^{th}$ input to the total variance, is defined as:
$$S_{i}=\frac{\mathbb{V}_{X_{i}}[\mathbb{E}_{X_{\sim i}}[Y|X_{i}]]}{\mathbb{V}[Y]},$$
where $\mathbb{E}$ and $\mathbb{V}$ denote the expectation and variance operators, respectively, and $X_{\sim i}$ represents all parameters except $X_{i}$ \cite{andr2010}. In the numerator, the expectation of $Y$ is evaluated considering all possible values of $X_{\sim i}$ while keeping $X_{i}$ fixed, and the outer variance is computed over all possible values of $X_{i}$.
Another key sensitivity measure is the total-order Sobol index, which captures the overall influence of a parameter $X_{i}$ by incorporating both its direct contribution and indirect effects arising from interactions with other parameters. The total order Sobol index for input parameter $X_{i}$ is defined as \cite{andr2010}: 
$$S_{T_{i}}=\frac{\mathbb{E}_{X_{\sim i}}[\mathbb{V}_{X_{i}}[Y|X_{\sim i}]]}{\mathbb{V}[Y]}.$$
In this study, we computed only the first-order and total-order sensitivity indices, while higher-order indices were not considered. The total-order index is minimal for non-influential parameters, making it an effective tool for parameter screening, whereas the first-order index serves as a good ranking measure, particularly when parameter interactions are non-significant . 

We use the `SALib' library in Python \cite{herm2017} to generate the sample space and compute Sobol' sensitivity indices for different parameters. The sample space is generated using the Saltelli sampler  \cite{andr2010}, which is an extension of the Sobol' sequence \cite{sobo1976}. Sobol' sequence uses a quasi-random sampling algorithm that strategically distributes new samples to the sequence away from the earlier sampled points. A notable advantage of this sampling method is its superior uniformity compared to random or pseudo-random sampling, making it more suitable for accurately representing a uniform distribution of input parameters. To estimate the first- and total-order Sobol' indices, the Saltelli sampler generates $N \times (P+2)$ samples \cite{andr2010}, where $N$ represents the base sample size and $P$ denotes the number of input parameters. Also, it is recommended to choose the base sample size as an integer power of two to ensure optimal uniformity and better convergence \cite{owen2020}.

We define a time-dependent QoI as the total number of infected individuals ($I_{SU}+I_{SD}+I_{RU}+I_{RD}$) at a given time. The time interval is set from 0 to 100, capturing both the short-term and long-term dynamics of the model. To quantify the uncertainty in these time-varying sensitivity indices, the model must be evaluated $N \times (P+2)$ times per experiment, making this approach computationally expensive. On the other hand, applying standard bootstrap resampling to a single parameter set may produce inaccurate results due to initial sampling bias. This bias can significantly affect outcomes, particularly if the system (\ref{3_system_main}) has some bifurcation point within this single parameter set. Therefore, we adopted a two-stage approach that integrates ensemble averaging with bootstrap resampling to quantify uncertainty arising from both parameter variability and sampling bias. This approach addresses these challenges by first conducting multiple independent model evaluations $(n=50)$ with different Saltelli's sample sets (with a base sample size of 1024). This produces ensemble average values at different time points, representing more accurate sensitivity indices by eliminating the effects of single sampling bias. Further, we applied bootstrap resampling (1000 resamples) across these independent runs to generate 95\% confidence interval. Figure \ref{03:fig:gsa_sobol} presents a detailed illustration of the mean first- and total-order Sobol' sensitivity indices for each parameter, with the shaded region representing the 95\% confidence interval, corresponding to the total number of infections at a given time point. 

In Figure \ref{03:fig:gsa_sobol}, we observe that the influence of most parameters increases over time. The initially low sensitivity indices are expected, given the chronic nature of HIV and its prolonged infectious period, resulting in delayed parameter effects on model dynamics. Parameters such as $\theta_{2}, \beta_{S}, \gamma, \epsilon, \mu$ and $\mu_{d}$ have non-significant influence on the output. For most parameters, the large gap between first-order and total-order indices shows that their impact on the total number of infections is not just through their direct effect but significantly through their interactions with other parameters. Also, the direct impact of individual parameters becomes minimal after a certain time, making interactions between parameters the primary source of variance in the model output. Overall, a substantial shift in sensitivity is observed in the later phase compared to the early phase for most parameters.

\begin{figure}
\centering
\includegraphics[width=1\linewidth]{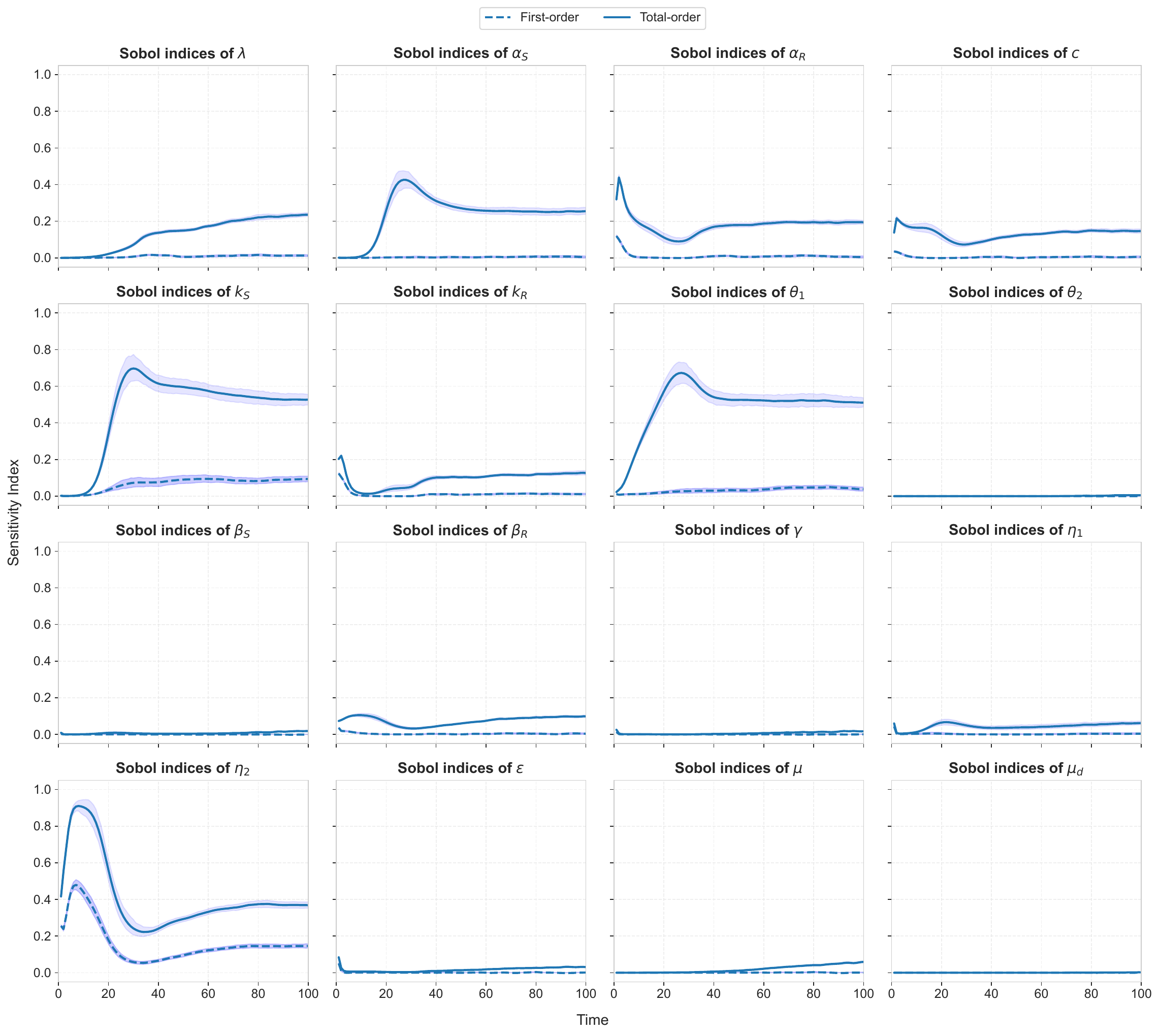}
\caption{Time-dependent global sensitivity analysis of the model (\ref{3_system_main}) using Sobol' indices. Each sub figure shows the first-order (dashed) and total-order (solid) sensitivity indices for each parameter corresponding to the total number of infected individuals. The shaded region represents the 95\% confidence interval.}
\label{03:fig:gsa_sobol}
\end{figure}

\section{Optimal control analysis}{\label{03:sec:optimal_control_analysis}}
In this Section, we explore time-dependent control mechanisms that focus on minimizing the infection burden while optimizing the cost of pharmaceutical interventions such as treatment, drug-adherence, and related factors.  In Section \ref{03:sec:sensitivity_analysis}, we observed that the dynamics of model (\ref{3_system_main}) is highly sensitive to the parameters $k_{S}$ and $k_{R}$, which are related to the diagnosis of infection. However, HIV diagnosis is not typically classified as a pharmaceutical intervention, it serves as a critical foundation for various interventions, both pharmaceutical and non-pharmaceutical. Therefore, we consider these diagnosis rates for drug-sensitive and drug-resistant infected populations, respectively, as controllable parameters. Furthermore, we consider the parameters $\eta_1$, $\eta_2$ and $\epsilon$ as treatment-related controllable measures. 

To construct the optimal control problem, we define the control variables $u_{1}(t), u_{2}(t), u_{3}(t), u_{4}(t)$ and $u_{5}(t)$ as bounded and Lebesgue-measurable functions over the interval $[0,t_{f}]$, where $t_{f}$ is the final time of interventions. The description of these control variables is as follows:\\
$u_{1}(t):$ Intensity of intervention efforts allocated to diagnosing drug-sensitive infected population at time $t$,\\
$u_{2}(t):$ Intensity of intervention efforts allocated to diagnosing drug-resistant infected population at time $t$,\\
$u_{3}(t):$ Intensity of the treatment intervention efforts to initialize first-line therapy among diagnosed drug-sensitive infected individuals at time $t$,\\
$u_{4}(t):$  Intensity of the treatment intervention efforts to initialize first-line therapy among diagnosed drug-sensitive infected individuals at time $t$,\\
$u_{5}(t):$ Intensity of supportive interventions implemented to improve drug-adherence among individuals receiving first-line therapy at time $t$.

Our primary objective is to determine the optimal control profile $(u^*_{1}(t), u^*_{2}(t), u^*_{3}(t), u^*_{4}(t), u^*_{5}(t))$ that minimizes the total associated cost. Note that the optimal allocation of resources is determined by the formulation of the cost function. To achieve the objective described above, we define the following cost functional to be minimized:
\begin{eqnarray}{\label{03:eqn:control_functional}}
J[\mathcal{U}(t)] &=& \int_{0}^{t_{f}} \Big[ W_{1} I_{SU}(t) + W_{2} I_{SD}(t) + W_{3} I_{RU}(t) + W_{4} I_{RD}(t) - W_{5} T_{1}(t) - W_{6} T_{2}(t) \nonumber \\ 
    && \quad + \frac{1}{2} \big( w_{1} u_{1}^{2}(t) + w_{2} u_{2}^{2}(t) + w_{3} u_{3}^{2}(t) + w_{4} u_{4}^{2}(t) + w_{5} u_{5}^{2}(t) \big) \Big] dt
\end{eqnarray}
where $\mathcal{U}(t)=\{u_{1}(t), u_{2}(t), u_{3}(t), u_{4}(t), u_{5}(t) : 0\leq u_{i}(t) \leq 1 \text{ for } i=1,2,3, 4,5; t\in [0,t_{f}]\}$ is the control set. The integrand $L(I_{SU}, I_{SD}, I_{RU}, I_{RD}, T_{1}, T_{2}, u_{1}(t), u_{2}(t), u_{3}(t), u_{4}(t), u_{5}(t)) $ represents the current cost at time $t$. The total infected population is defined as the sum of individuals in compartments $I_{SU}, I_{SD}, I_{RU}$ and $I_{RD}$. While individuals in compartments $T_1, T_2$ and $A$ are also infected, they do not contribute to the overall disease burden since they are not involved in further transmission. However, treatment-related terms are included in the objective functional to ensure a balanced optimization approach that simultaneously minimizes the untreated infected population while incentivizing treatment enrollment. The positive weight constants $W_{i}~ (i=1,2,3,4)$ represent the relative importance of reducing specific infected classes in controlling disease spread and minimizing its burden, $W_{5}$ and $W_{6}$ represent the relative importance of maximizing treatment coverage, particularly for vulnerable population with drug-resistant strain, and $w_{j}~(j=1,2,3,4,5)$ denote the relative costs of each control intervention, ensuring proper scaling and balance within the cost functional. The impact of efforts to improve HIV diagnosis, treatment availability, and drug adherence is expected to have a nonlinear relationship with the outcomes. Therefore, we use quadratic cost terms for each control variable to appropriately capture these effects, as in \cite{lenh2007, okos2013, agus2019}. We seek an optimal control solution $\mathcal{U}^{*}(u^*_{1}(t), u^*_{2}(t), u^*_{3}(t), u^*_{4}(t), u^*_{5}(t))$ such that $J(\mathcal{U}^*(t))=\min\{J(\mathcal{U}(t)\}$, subject to the system 

\begin{eqnarray}
S^{\prime}&=&\lambda-\alpha_S S(I_{SU}+c I_{SD})-\alpha_R S (I_{RU}+ c I_{RD}) -\mu S, \nonumber \\
I_{SU}^{\prime}&=&\alpha_S S (I_{SU}+c I_{SD})-k_{S_{max}}\mathbf{u_{1}(t)} I_{SU}-\theta_1 I_{SU}-\mu I_{SU},
\nonumber \\
I_{SD}^{\prime}&=& k_{S_{max}} \mathbf{u_{1}(t)} I_{SU}-\beta_{S} \mathbf{u_{3}(t)} I_{SD}-\theta_{1} (1-\mathbf{u_{3}(t)}) I_{SD}-\mu I_{SD}, 
\nonumber \\
T_{1}^{\prime}&=&\beta_{S} \mathbf{u_{3}(t)} I_{SD}-\gamma (1-\mathbf{u_{5}(t)}) T_{1}-\theta_{2} \mathbf{u_{5}(t)} T_{1}-\mu T_{1},
\nonumber \\
I_{RU}^{\prime}&=&\alpha_R S (I_{RU}+ c I_{RD})+\gamma (1-\mathbf{u_{5}(t)}) T_{1}-k_{R_{max}}\mathbf{u_{2}(t)} I_{RU}-\theta_{1} I_{RU}-\mu I_{RU},
\nonumber \\
I_{RD}^{\prime}&=& k_{R_{max}} \mathbf{u_{2}(t)} I_{RU}-\beta_{R} \mathbf{u_{4}(t)} I_{RD}-\theta_{1}(1-\mathbf{u_{4}(t)}) I_{RD}-\mu I_{RD},
\nonumber\\
T_{2}^{\prime}&=& \beta_{R} \mathbf{u_{4}(t)} I_{RD}-\theta_{2} T_{2}-\mu T_{2},
\nonumber \\
A^{\prime}&=& \theta_1 I_{SU}+\theta_{1} (1-\mathbf{u_{3}(t)}) I_{SD}+\theta_{2} \mathbf{u_{5}(t)} T_{1}+\theta_{1} I_{RU}+\theta_{1}(1-\mathbf{u_{4}(t)}) I_{RD}+\theta_{2} T_{2}-(\mu+\mu_d) A 
\label{03:system:control}
\end{eqnarray}
with non-negative initial conditions. Note that we have substituted the parameters $k_{S}, k_{R}, \eta_{1}, \eta_{2}$ and $\epsilon$ with their respective maximum values within the specified ranges to normalize the newly introduced control variables within the interval $[0,1].$

\subsection{Existence of an optimal control}{\label{03:subsec:existence_optimal_control}}
In this subsection, we establish the presence of optimal control functions that effectively minimize the cost functional within a finite time frame.

\begin{theorem}{\label{03:thm:existence_optimal_control}}
For the controlled state system (\ref{03:system:control}) with non-negative initial conditions, there exists an optimal control set $(u^*_{1}, u^*_{2}, u^*_{3}, u^*_{4}, u^*_{5})$ in $\mathcal{U}$ such that $J(u^*_{1}, u^*_{2}, u^*_{3}, u^*_{4}, u^*_{5})=\min\{J(u_{1}, u_{2}, u_{3}, u_{4}, u_{5}): (u_{1}, u_{2}, u_{3}, u_{4}, u_{5})\in \mathcal{U}\}.$
\end{theorem}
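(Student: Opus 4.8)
The plan is to apply the classical existence theorem for optimal controls (see \cite{lenh2007}; equivalently, the Fleming--Rishel or Filippov--Cesari theorem), whose hypotheses reduce to: (i) the set of admissible state--control pairs is non-empty; (ii) the control set $\mathcal{U}$ is closed and convex; (iii) the right-hand side of the controlled state system is bounded by a linear function of the state and control; (iv) the integrand $L$ of the cost functional is convex in the control variables; and (v) $L$ satisfies a superlinear growth (coercivity) bound in the control. I would verify these in turn.

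\emph{Non-emptiness and a priori bounds.} For each fixed admissible $\mathcal{U}\in\mathcal{U}$, the right-hand side of (\ref{03:system:control}) is polynomial in the state variables with bounded, Lebesgue-measurable coefficients in $t$, hence locally Lipschitz in the state, so Picard--Lindel\"of yields a unique local solution. Repeating the argument of Theorem 1 with the substitutions $k_S\mapsto k_{S_{max}}u_1(t)$, $k_R\mapsto k_{R_{max}}u_2(t)$, $\eta_1\mapsto u_3(t)$, $\eta_2\mapsto u_4(t)$, $\epsilon\mapsto u_5(t)$ shows that $\mathcal{R}$ remains positively invariant and the total population stays bounded above by $\lambda/\mu$. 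Hence every state variable is non-negative, uniformly bounded, and the solution is global; in particular the admissible set is non-empty.

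\emph{Structural hypotheses.} The control set $\mathcal{U}=\{(u_1,\dots,u_5)\in L^\infty([0,t_f];\mathbb{R}^5):0\le u_i(t)\le 1\}$ is closed, bounded and convex. Writing the state system as $X'=f(t,X)+g(t,X)\,\mathcal{U}$, the dynamics are affine in $\mathcal{U}$, so on the invariant region $\mathcal{R}$ there is a constant $C$ with $|f(t,X)+g(t,X)\mathcal{U}|\le C(1+|X|+|\mathcal{U}|)$. The integrand
$$L=W_1 I_{SU}+W_2 I_{SD}+W_3 I_{RU}+W_4 I_{RD}-W_5 T_1-W_6 T_2+\tfrac12\sum_{j=1}^{5}w_j u_j^2$$
is convex in $(u_1,\dots,u_5)$, being a control-independent affine term plus a positive-definite quadratic form. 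For coercivity, the uniform bounds $T_1,T_2\le\lambda/\mu$ give $-W_5 T_1-W_6 T_2\ge -M$ for some constant $M$, and the remaining non-control terms are non-negative, so $L\ge \tfrac12(\min_j w_j)\,|\mathcal{U}|^2-M\ge c_1|\mathcal{U}|^{\beta}-c_2$ with $c_1>0$ and $\beta=2>1$. The convexity of the set $\{(f(t,X)+g(t,X)\mathcal{U},\,z):z\ge L(t,X,\mathcal{U}),\ \mathcal{U}\in[0,1]^5\}$ then follows from the affine dependence of the dynamics on $\mathcal{U}$ together with convexity of $L$ in $\mathcal{U}$. With all hypotheses in place, the theorem produces an optimal pair $(X^*,\mathcal{U}^*)$ with $\mathcal{U}^*\in\mathcal{U}$ and $J(\mathcal{U}^*)=\min_{\mathcal{U}\in\mathcal{U}}J(\mathcal{U})$.

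\emph{Main obstacle.} The running cost is not sign-definite, owing to the treatment-incentive terms $-W_5T_1-W_6T_2$, which a priori threatens both the boundedness below of $J$ and the coercivity estimate (v). What rescues the argument is the uniform bound on the state variables inherited from the positive invariance of $\mathcal{R}$; accordingly, the genuine work lies in carefully transferring Theorem 1 to the time-dependent controlled system (\ref{03:system:control}), after which the remaining verifications are routine.
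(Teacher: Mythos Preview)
Your proposal is correct and follows essentially the same route as the paper: both invoke the Fleming--Rishel existence theorem, verify non-emptiness via Picard--Lindel\"of together with the positive invariance of $\mathcal{R}$, note that the control set is closed and convex with dynamics affine in the controls, and handle the sign-indefinite terms $-W_5T_1-W_6T_2$ by using the uniform state bound $T_1,T_2\le\lambda/\mu$ to obtain the coercivity estimate $L\ge -M+w_m\sum_i u_i^2$. Your identification of the negative treatment-incentive terms as the only genuine obstacle, and its resolution via the invariance of $\mathcal{R}$, exactly mirrors the paper's argument.
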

\begin{proof}
The existence of an optimal control is ensured if the following three conditions, as stated in Theorem 4.1 (Chapter III ) of \cite{flem1975}, are satisfied.
\begin{itemize}
        \item[i.] The solution set of the system (\ref{03:system:control}) with control functions in $\mathcal{U}$ is non-empty. 
        \item[ii.] The control set $\mathcal{U}$ is closed and convex, and the state system can be written as a linear function of the control variables, with coefficients that depend on both time and state variables.
        \item[iii.]  Integrand $L$ of Eq. (\ref{03:eqn:control_functional}) is convex on the control set $\mathcal{U}$. Furthermore, there exists a continuous function $g$ such that $L(I_{SU}, I_{SD}, I_{RU}, I_{RD}, T_{1}, T_{2}, u_{1}, u_{2}, u_{3}, u_{4}, u_{5}) \geq g(u_{1}, u_{2}, u_{3}, u_{4}, u_{5})$ where $|(u_{1}, u_{2}, u_{3}, u_{4}, u_{5})|^{-1} g(u_{1}, u_{2}, u_{3}, u_{4}, u_{5}) \rightarrow \infty$ whenever $|(u_{1}, u_{2}, u_{3}, u_{4}, u_{5})| \rightarrow \infty$. Here $|.|$ represents the Euclidean norm.
\end{itemize}
The controlled state system (\ref{03:system:control}) with non-negative initial conditions, along with the cost functional (\ref{03:eqn:control_functional}) defined on the control set $\mathcal{U}$ satisfies all three conditions (see \ref{03:appendix:sec:optimal_control_analysis} for a detailed proof), which shows the existence of an optimal control.
\end{proof}

\subsection{Characterization of optimal control variables}{\label{03:subsec:characterization_optimal_control}}
We next analytically characterize the trajectories of the optimal control variables for the control system (\ref{03:eqn:control_functional})-(\ref{03:system:control}). Also, we apply the Pontryagin’s Maximum Principle \cite{pont2018} to establish the necessary conditions for optimal control. This reformulates the problem (\ref{03:eqn:control_functional})-(\ref{03:system:control}) into a pointwise minimization of the Hamiltonian $H$ over the control variables $(u_{1}, u_{2}, u_{3}, u_{4}, u_{5})$. We define the Hamiltonian from the control system (\ref{03:system:control}) and the cost functional (\ref{03:eqn:control_functional}) as follows: 
\begin{eqnarray}
    H &=& L+ \sigma_{1} S^{\prime} + \sigma_{2} I_{SU}^{\prime} + \sigma_{3} I_{SD}^{\prime} + \sigma_{4} T_{1}^{\prime} +\sigma_{5} I_{RU}^{\prime} + \sigma_{6} I_{RD}^{\prime} + \sigma_{7} T_{2}^{\prime} + \sigma_{8} A^{\prime} \nonumber \\
    &=& \left( W_{1} I_{SU} + W_{2} I_{SD} + W_{3} I_{RU} + W_{4} I_{RD}  - W_{5} T_{1} - W_{6} T_{2}+ \frac{1}{2} \sum_{i=1}^{5}w_{i} u_{i}^{2} \right) + \sigma_{1} [\lambda-\alpha_S S(I_{SU}+c I_{SD}) \nonumber \\
    && -\alpha_R S (I_{RU}+ c I_{RD}) -\mu S] 
    + \sigma_{2} [\alpha_S S (I_{SU}+c I_{SD})-k_{S_{max}}u_{1}(t) I_{SU}-\theta_1 I_{SU}-\mu I_{SU}] \nonumber \\
    && + \sigma_{3} [k_{S_{max}}u_{1}(t) I_{SU}-\beta_{S} u_{3}(t) I_{SD}-\theta_{1} (1-u_{3}(t)) I_{SD}-\mu I_{SD}] 
    + \sigma_{4} [\beta_{S} u_{3}(t) I_{SD}-\gamma (1-u_{5}(t)) T_{1} \nonumber \\
    && -\theta_{2} u_{5}(t) T_{1}-\mu T_{1}] 
    + \sigma_{5}[\alpha_R S (I_{RU}+ c I_{RD})+\gamma (1-u_{5}(t)) T_{1}-k_{R_{max}}u_{2}(t) I_{RU}-\theta_{1} I_{RU}-\mu I_{RU}] \nonumber \\
    && + \sigma_{6}[k_{R_{max}}u_{2}(t) I_{RU}-\beta_{R} u_{4}(t) I_{RD}-\theta_{1}(1-u_{4}(t)) I_{RD}-\mu I_{RD}] 
    + \sigma_{7} [\beta_{R} u_{4}(t) I_{RD}-\theta_{2} T_{2}-\mu T_{2}] \nonumber \\
    && + \sigma_{8} [\theta_1 I_{SU}+\theta_{1} (1-u_{3}(t)) I_{SD}+\theta_{2} u_{5}(t) T_{1}+\theta_{1} I_{RU}+\theta_{1}(1-u_{4}(t)) I_{RD}+\theta_{2} T_{2}-(\mu+\mu_d) A ],
\end{eqnarray}
where $\sigma_{i}(t) $'s (for $i=1,2,\ldots,8$) represent the adjoint functions corresponding to the state variables, which will be determined in the subsequent analysis.

\begin{theorem}{\label{03:thm:characterization_optimal_control}}
Let $(u^*_{1}, u^*_{2}, u^*_{3}, u^*_{4}, u^*_{5})\in \mathcal{U}$ be an optimal control profile with the corresponding optimal state solution $(S^{*}, I^{*}_{SU}, I^{*}_{SD}, T^{*}_{1}, I^{*}_{RU}, I^{*}_{RD}, T^{*}_{2}, A^{*})$ subject to the control problem (\ref{03:eqn:control_functional})-(\ref{03:system:control}). Then there exist adjoint functions $\sigma_{i} $'s (for $i=1,2,\ldots,8$) which satisfy,
    \begin{eqnarray}{\label{03:eqn:adjoint_system}}
        \sigma^{\prime}_{1} &=& \alpha_{S}(I_{SU}+c I_{SD})(\sigma_{1}-\sigma_{2})+ \alpha_{R}(I_{RU}+c I_{RD})(\sigma_{1}-\sigma_{5}) + \mu \sigma_{1}, \nonumber \\
        \sigma^{\prime}_{2} &=& -W_{1} + \alpha_{S} S(\sigma_{1}-\sigma_{2}) + k_{S_{max}} u_{1}(\sigma_{2}-\sigma_{3}) + \theta_{1} (\sigma_{2}-\sigma_{8}) + \mu \sigma_{2},  \nonumber \\
        \sigma^{\prime}_{3} &=& -W_{2} + c \alpha_S (\sigma_{1}-\sigma_{2}) + \beta_{S} u_{3} (\sigma_{3}-\sigma_{4}) + \theta_{1} (1-u_{3})(\sigma_{3}-\sigma_{8}) + \mu \sigma_{3},   \nonumber \\
        \sigma^{\prime}_{4} &=&  W_{5} + \gamma (1-u_{5}) (\sigma_{4}-\sigma_{5}) + \theta_{2} u_{5} (\sigma_{4}-\sigma_{8}) + \mu \sigma_{4} \nonumber \\
        \sigma^{\prime}_{5} &=& -W_{3} + \alpha_{R} S (\sigma_{1}-\sigma_{5}) + k_{R_{max}} u_{2} (\sigma_{5}-\sigma_{6}) + \theta_{1} (\sigma_{5}-\sigma_{8}) + \mu \sigma_{5}, \nonumber \\
        \sigma^{\prime}_{6} &=& -W_{4} + c \alpha_{R} S (\sigma_{1}-\sigma_{5}) + \beta_{R} u_{4} (\sigma_{6}-\sigma_{7}) + \theta_{1} (1-u_{4}) (\sigma_{6}-\sigma_{8}) + \mu \sigma_{6}, \nonumber \\
        \sigma^{\prime}_{7} &=& W_{6} + \theta_{2} (\sigma_{7}-\sigma_{8}) + \mu \sigma_{7}, \nonumber \\
        \sigma^{\prime}_{8} &=& (\mu + \mu_{d}) \sigma_{8},
    \end{eqnarray}
with transversality conditions $\sigma_{i}(t_f)=0 \quad \forall i=1,2,\ldots,8.$ Further, the solutions of the optimal control variables are given as,
    \begin{eqnarray}
    {\label{03:eqn:characterization_optimal_control_1}}
        u^{*}_{1} &=& \max \left\{ 0, \min \left\{\frac{k_{S_{max}} (\sigma_{2}-\sigma_{3}) I^{*}_{SU}}{w_{1}} , 1 \right\} \right\},  \\ 
    {\label{03:eqn:characterization_optimal_control_2}}
        u^{*}_{2} &=& \max \left\{ 0, \min \left\{\frac{k_{R_{max}}(\sigma_{5}-\sigma_{6}) I^{*}_{RU}}{w_{2}} , 1 \right\} \right\}, \\
    {\label{03:eqn:characterization_optimal_control_3}}
        u^{*}_{3} &=& \max \left\{ 0, \min \left\{\frac{\left((\sigma_{3}-\sigma_{4})\beta_{S} + (\sigma_{8}-\sigma_{3})\theta_{1} \right) I^{*}_{SD}}{w_{3}} , 1 \right\} \right\}, \\
    {\label{03:eqn:characterization_optimal_control_4}}
        u^{*}_{4} &=& \max \left\{ 0, \min \left\{\frac{\left((\sigma_{6}-\sigma_{7})\beta_{R} + (\sigma_{8}-\sigma_{6})\theta_{1} \right) I^{*}_{RD}}{w_{4}} , 1 \right\} \right\}, \\
    {\label{03:eqn:characterization_optimal_control_5}}
         u^{*}_{5} &=& \max \left\{ 0, \min \left\{\frac{\left((\sigma_{5}-\sigma_{4})\gamma + (\sigma_{4}-\sigma_{8})\theta_{2} \right) T^{*}_{1}}{w_{5}} , 1 \right\} \right\}.
    \end{eqnarray}
\end{theorem}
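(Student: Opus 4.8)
The plan is to invoke Pontryagin's Maximum Principle (PMP), whose hypotheses are in force here: Theorem \ref{03:thm:existence_optimal_control} already supplies an optimal pair $(\mathcal{U}^*, X^*)$, and both the controlled system (\ref{03:system:control}) and the integrand $L$ are continuously differentiable in every state and control argument. PMP then yields adjoint (costate) functions $\sigma_i$, $i = 1,\ldots,8$, satisfying $\sigma_i' = -\partial H/\partial x_i$ along the optimal trajectory, transversality conditions fixed by the (absent) terminal payoff, and the requirement that $\mathcal{U}^*$ minimize the Hamiltonian $H$ pointwise in $t$. So the proof decomposes into three routine pieces: derive the adjoint system, read off the transversality conditions, and solve the pointwise minimization for the controls.

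First I would derive (\ref{03:eqn:adjoint_system}). For each $x_i \in \{S, I_{SU}, I_{SD}, T_1, I_{RU}, I_{RD}, T_2, A\}$, I would collect every term of $H$ in which $x_i$ appears — it enters linearly through the drift terms $\sigma_j x_j'$ and, for the infected compartments, through the running cost $L$ — differentiate, and negate. This reproduces the eight equations exactly; the $W_k$ coefficients carry the sign dictated by whether $x_i$ is penalized ($I_{SU}, I_{SD}, I_{RU}, I_{RD}$) or rewarded ($T_1, T_2$) in (\ref{03:eqn:control_functional}), and the last equation decouples as $\sigma_8' = (\mu+\mu_d)\sigma_8$ since $A$ occurs only in its own drift. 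The transversality conditions $\sigma_i(t_f) = 0$ follow immediately because the cost functional contains no terminal term $\phi(X(t_f))$, so the free right endpoint forces the costates to vanish at $t_f$.

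Next I would characterize the controls. For fixed state and costate, $H$ is a strictly convex quadratic in each $u_i$ separately, since $\partial^2 H/\partial u_i^2 = w_i > 0$, so the interior stationarity condition $\partial H/\partial u_i = 0$ gives the unique unconstrained minimizer. Concretely, the $u_1$-dependent part of $H$ is $\tfrac12 w_1 u_1^2 + (\sigma_3 - \sigma_2)k_{S_{max}} I_{SU} u_1$, whence $\tilde u_1 = k_{S_{max}}(\sigma_2 - \sigma_3)I_{SU}^*/w_1$, and the analogous computation for $u_2,\ldots,u_5$ produces the bracketed expressions in (\ref{03:eqn:characterization_optimal_control_1})--(\ref{03:eqn:characterization_optimal_control_5}). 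Because the admissible set is $\mathcal{U} = \{0 \le u_i \le 1\}$, minimizing over $[0,1]$ amounts to projecting $\tilde u_i$ onto that interval, i.e. $u_i^* = \max\{0, \min\{\tilde u_i, 1\}\}$; I would make this precise with the standard three-case argument on the sign of $\partial H/\partial u_i$ at the endpoints, which is exactly the stated formula.

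The main obstacle is organizational rather than conceptual: $H$ has many summands, and the only real risk is bookkeeping — correctly identifying which state or control sits inside which $\sigma_j x_j'$ term, and tracking signs faithfully through the differentiation and the subsequent negation. A careful term-by-term tabulation of $\partial H/\partial x_i$ and $\partial H/\partial u_i$ is therefore the crux of the write-up. I would also remark that PMP furnishes only necessary conditions; obtaining uniqueness of $\mathcal{U}^*$ (and hence that the forward--backward optimality system pins it down) would require the usual smallness-of-$t_f$ Lipschitz estimate on the optimality system, which the statement does not claim and which I would omit.
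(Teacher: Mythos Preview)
Your proposal is correct and follows essentially the same route as the paper: apply Pontryagin's Maximum Principle, obtain the adjoint system from $\sigma_i' = -\partial H/\partial x_i$ with zero terminal conditions (no Mayer term), and characterize each $u_i^*$ by setting $\partial H/\partial u_i = 0$ and projecting onto $[0,1]$. Your write-up is in fact slightly more explicit than the paper's in justifying the transversality conditions and the convexity-based projection step, and your closing remark on uniqueness is a reasonable addition the paper omits.
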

\begin{proof}
Given the existence of an optimal solution, the Pontryagin's Maximum Principle provides the system of differential equations for the adjoint variables, derived by differentiating the Hamiltonian with respect to the state variables and evaluating it at the optimal control. Therefore, the adjoint system is determined by solving the following set of differential equations, subject to the specified boundary conditions:
    $$\frac{d \sigma_{1}}{dt}=-\frac{dH}{dS}, ~ \frac{d \sigma_{2}}{dt}=-\frac{dH}{dI_{SU}}, \ldots ,\frac{d \sigma_{8}}{dt}=-\frac{dH}{dA}, \quad \sigma_{i}(t_f)=0 \quad \forall i=1,2,\ldots,8,$$
which results in the adjoint system (\ref{03:eqn:adjoint_system}) for given control problem.

To characterize the optimal controls, we differentiate the Hamiltonian function with respect to the control variables within the interior of the control set $\mathcal{U}$. Applying Pontryagin's Maximum Principle, we derive the following optimality conditions:
\begin{eqnarray}{\label{03:eqn:optimality_conditions}}
        \dfrac{dH}{d u_{1}}&=&w_{1}u^{*}_{1}-k_{S_{max}}(\sigma_{2}-\sigma_{3}) I^{*}_{SU}=0, \nonumber \\
        \dfrac{dH}{d u_{2}}&=&w_{2}u^{*}_{2}-k_{R_{max}}(\sigma_{5}-\sigma_{6}) I^{*}_{RU}=0, \nonumber \\
        \dfrac{dH}{d u_{3}}&=&w_{3}u^{*}_{3}-\beta_{S}(\sigma_{3}-\sigma_{4}) I^{*}_{SD}-\theta_{1}(\sigma_{8}-\sigma_{3}) I^{*}_{SD}=0, \nonumber \\
        \dfrac{dH}{d u_{4}}&=&w_{4}u^{*}_{4}-\beta_{R}(\sigma_{6}-\sigma_{7}) I^{*}_{RD}-\theta_{1}(\sigma_{8}-\sigma_{6}) I^{*}_{RD}=0, \nonumber \\
        \dfrac{dH}{d u_{5}}&=&w_{5}u^{*}_{5}-\gamma(\sigma_{5}-\sigma_{4}) T^{*}_{1}-\theta_{2}(\sigma_{4}-\sigma_{8}) T^{*}_{1}=0.
\end{eqnarray}
Solving the system (\ref{03:eqn:optimality_conditions}) for $(u^*_{1}, u^*_{2}, u^*_{3}, u^*_{4}, u^*_{5})$ and using the bounds of control set $\mathcal{U}$ yield the optimal control characterizations given in (\ref{03:eqn:characterization_optimal_control_1})-(\ref{03:eqn:characterization_optimal_control_5}) for the control problem (\ref{03:eqn:control_functional})-(\ref{03:system:control}). 
\end{proof}

A comprehensive discussion on various control strategies, their epidemiological outcomes, and the corresponding cost-effectiveness analysis is presented in Section \ref{03:sec:numerical_simulation}. 

\section{Numerical simulations}{\label{03:sec:numerical_simulation}}
Dynamic modelling of complex systems depends on numerical simulations to effectively illustrate and support the theoretical findings of the model. In this Section, we first validate the analytical results on the existence and stability of equilibrium points through numerical simulations. We then present a comprehensive evaluation of various optimal control strategies, including a dynamic control approach, and their effects on the model outcomes. A detailed cost-effectiveness analysis is carried out, considering both infection reduction and treatment coverage expansion objectives. Additionally, we perform an adjoint-based sensitivity analysis to assess how supplementary resources can be optimally allocated to maximize public health benefits. Finally, to capture both the individual and synergistic contributions of control variables, we include a control contribution analysis using Shapley values, a concept from cooperative game theory.

\subsection{Existence and stability of equilibrium points}{\label{03:subsec:numerical_simulation:existence_stability_EP}}

To illustrate the existence and stability of all equilibrium points, we fix the majority of parameters in the system (\ref{3_system_reduced}) as outlined in Section \ref{03:sec:parameter_estimation}. To explore various scenarios, we vary the parameters $\alpha_{S}$ and $\alpha_{R}$ within their feasible ranges, as specified in Table \ref{03:tab:parameter_values}, since these parameters critically influence the basic reproduction numbers. The initial condition in each time series plot is set as suggested in Section  \ref{03:sec:parameter_estimation}. First, we set $\alpha_{S}=0.000047$ and $\alpha_{R}=0.00002$. For this set of parameters, the basic reproduction numbers are computed as $R_{0}^{(S)}=0.39 (<1), ~R_{0}^{(R)}=0.02(<1)$ and $R_{0}^{(SR)}=25.41(>1)$. According to Theorem \ref{03:thm:eqilibrium_point_0}, this configuration admits a unique disease-free equilibrium point, $E^{(0)} = (3615.7143, 0, 0, 0, 0, 0, 0)$. The eigenvalues of the Jacobian matrix $J(E^{(0)})$ are $-13.5276, -7.7319, -5.024, -0.3668, -0.2652, -0.037$ and $-0.007$. Since all eigenvalues have a negative real part, $E^{(0)}$ is locally asymptotically stable. This behavior is captured in the time series plot presented in Figure \ref{03:fig:numerical_simulation_time_series_E0}.

Next, we consider the parameter set with $\alpha_{S}=0.00005$ and $\alpha_{R}=0.003$. The corresponding basic reproduction numbers are $R_{0}^{(S)}=0.42 (<1), ~R_{0}^{(R)}=2.32(>1)$ and $R_{0}^{(SR)}=0.18(<1)$. These values satisfy the conditions for the existence and local stability of the drug-resistant strain endemic equilibrium point, as described in Theorem \ref{03:thm:eqilibrium_point_1}. The equilibrium point in this case is given by $E^{(1)} = (1558.2339, 0, 0, 0, 2.8201, 1.0432, 380.6183)$ and the eigenvalues of the Jacobian matrix $J(E^{(1)})$ are $-0.0008 \pm 0.2137i,$ $-13.9495, -7.7309, -0.3668, -0.3582$ and $-0.037$. Since all eigenvalues have negative real parts, the local asymptotic stability of $E^{(1)}$ is confirmed, which is further supported by the plot shown in Figure \ref{03:fig:numerical_simulation_time_series_E1}. Note that the disease-free equilibrium point also exists for this parameter set, but it is unstable. 

We further explore the system dynamics by considering a new set of parameters with $\alpha_{S}=0.0003$ and $\alpha_{R}=0.0001$. The basic reproduction numbers are $R_{0}^{(S)}=2.51 (>1), ~R_{0}^{(R)}=0.08(<1)$ and $R_{0}^{(SR)}=32.43(>1)$. Under these conditions, both the co-existence endemic equilibrium point $E^{(*)}$ and the DFE $E^{(1)}$ exist. The $E^{(*)}$ is given by $(1441.2443,\allowbreak 34.0749,\allowbreak 1.4547, 30.5372, 2.0977, 0.7759, 283.1167)$. To examine the local stability of this equilibrium, we compute the characteristic polynomial of the Jacobian matrix $J(E^{(*)})$, which is given by:
\begin{align*}
    p_{*}(x)&= (x+0.037)q_{*}(x) \\
            &= (x+0.037)(x^{6}+26.5988 x^5 + 219.924 x^4 + 598.411 x^3 + 201.237 x^2 + 6.0169 x + 0.8754)
\end{align*}
The Routh array for the polynomial $q_{*}(x)$ is:
\[
\begin{array}{c|cccc}
s^6 & 1 & 219.924 & 201.237 & 0.8754 \\
s^5 & 26.5988 & 598.411 & 6.0169 & 0 \\
s^4 & 197.4262 & 201.0108 & 0.8754 & 0 \\
s^3 & 571.3298 & 5.8989 & 0 & 0 \\
s^2 & 198.9724 & 0.8754 & 0 & 0 \\
s^1 & 3.3852 & 0 & 0 & 0 \\
s^0 & 0.8754 & 0 & 0 & 0 \\
\end{array}
\]
Note that all the elements in the first column of the Routh array are positive, thereby satisfying the local stability conditions outlined in Theorem \ref{03:thm:eqilibrium_point_2}. The roots of the characteristic equation $p_{*}(x)=0$ are $-0.0086 \pm 0.0673 i$, $-0.037,\allowbreak -13.5381, -7.7346, -4.9418$ and $-0.3671$. Therefore, the equilibrium point $E^{(*)}$ is locally asymptotically stable, as further supported by the time series plot in Figure \ref{03:fig:numerical_simulation_time_series_E2}, which illustrates the system's convergence to the co-existence state over time.

\begin{figure}[t!]
\begin{center}
\subfloat[$\alpha_{S}=0.000047$, $\alpha_{R}=0.00002$]{\includegraphics[height=6.4cm,width=9cm]{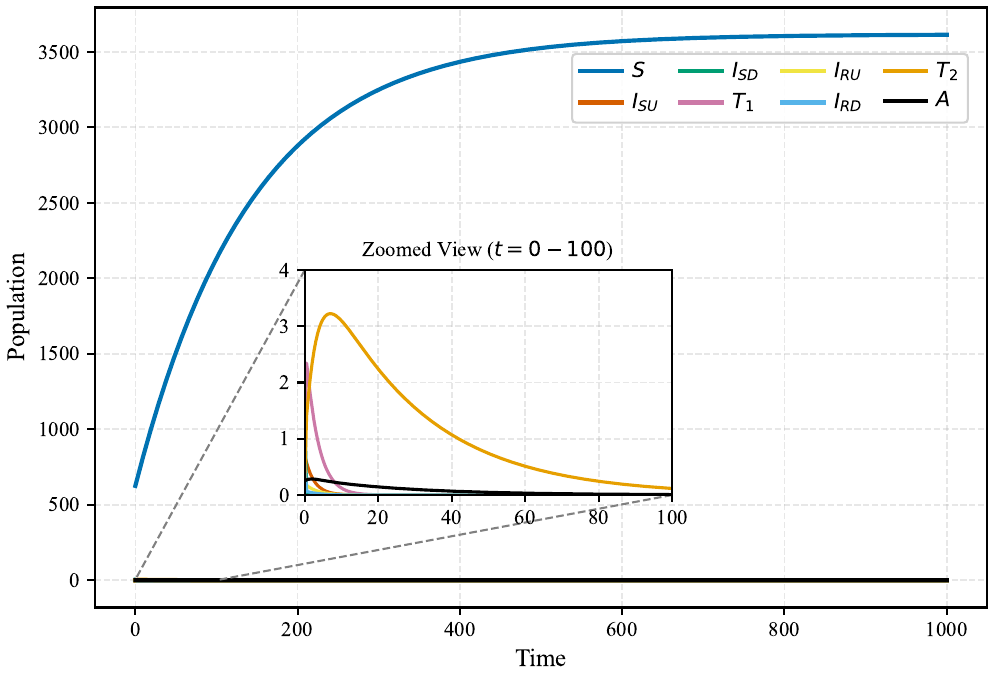}\label{03:fig:numerical_simulation_time_series_E0}}
\quad
\subfloat[$\alpha_{S}=0.00005$, $\alpha_{R}=0.003$]{\includegraphics[height=6.5cm,width=9.2cm]{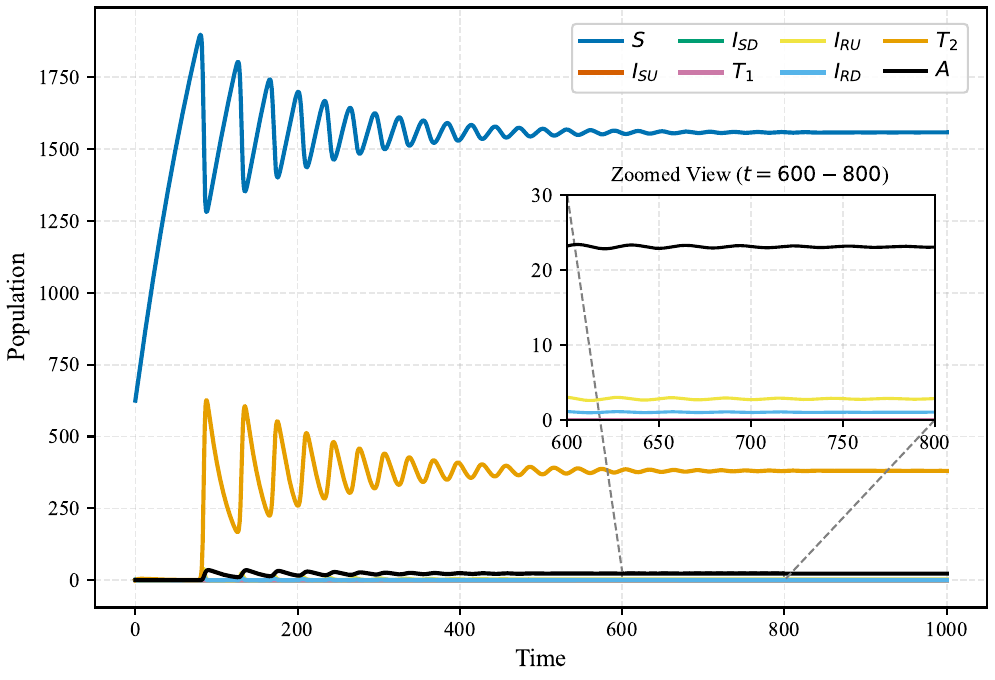}\label{03:fig:numerical_simulation_time_series_E1}}\\
\end{center}
\begin{center}
\subfloat[$\alpha_{S}=0.0003$, $\alpha_{R}=0.0001$]{\includegraphics[height=6.5cm,width=9.2cm]{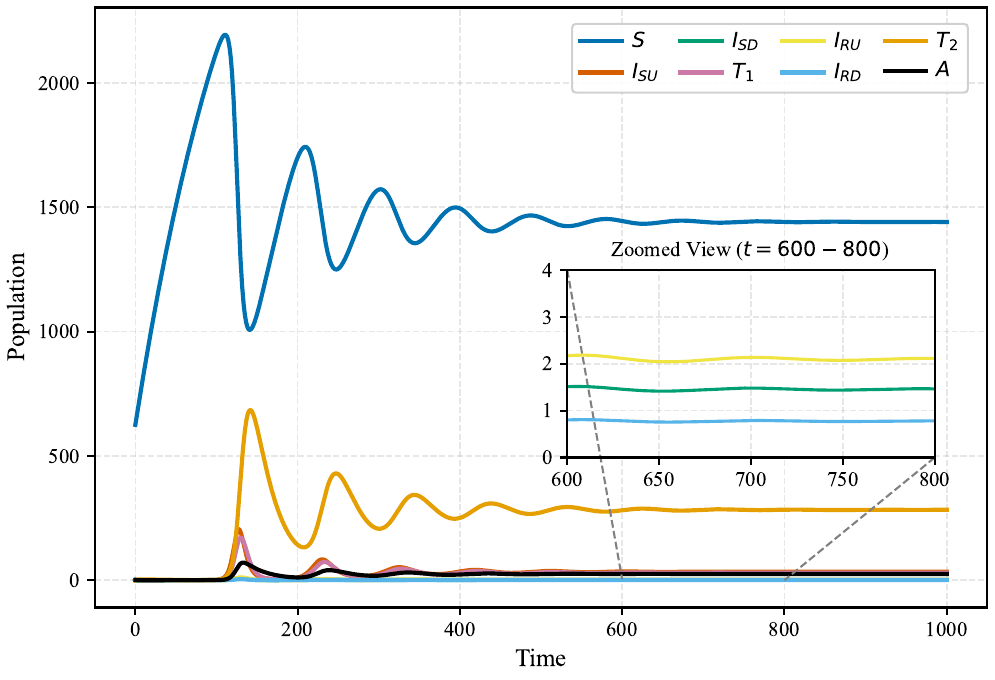}\label{03:fig:numerical_simulation_time_series_E2}}
\end{center}
\caption{Time series plots for the system (\ref{3_system_main}) showing convergence of solutions to (a) the disease-free equilibrium point, (b) the drug-resistant strain endemic equilibrium point, and (c) the coexistence endemic equilibrium point. }\label{03:fig:numerical_simulation_time_series}
\end{figure}	

\subsection{Control strategies}{\label{03:subsec:numerical_simulation:optimal_control}}
In this discussion, we examine the impact of various control strategies for the proposed optimal control problem (\ref{03:eqn:control_functional})-(\ref{03:system:control}), with the goal of achieving the 95-95-95 targets set by UNAIDS \cite{unai2014}. Numerical simulations are performed by solving the control system (\ref{03:eqn:control_functional})-(\ref{03:system:control}) along with the corresponding adjoint system (\ref{03:eqn:adjoint_system}) and the control characterization (\ref{03:eqn:characterization_optimal_control_1})-(\ref{03:eqn:characterization_optimal_control_5}) through an iterative scheme. We employ the forward-backward sweep method, starting with an initial estimate for the control variables, and solve the state system using the classical fourth-order Runge-Kutta method in forward direction over the simulated time. Subsequently, with the state trajectories determined, the adjoint system is solved backward in time using the transversality conditions $\sigma_{i}(t_{f})=0 ~ \forall i=1,2,\dots,8$, again applying the Runge-Kutta fourth-order integration. The control variables are updated at each iteration using a convex combination of the previous control values and those derived from the characterization equations to improve numerical stability. This iterative procedure is repeated until successive iterations give negligible differences in the computed values of control variables (for details see \cite{lenh2007}). 

In general, binary control activation has traditionally been employed in the literature to differentiate between intervention strategies, where a specific control or a combination of controls is activated while all non-focal controls are set to zero \cite{lenh2007, okos2013, yusu2023}. Although this method effectively isolates the individual or combined effects of interventions, it imposes artificial constraints that poorly reflect real-world public health policy implementation. In practice, interventions are rarely applied in isolation. To address this limitation, we introduce a weight-varying mechanism in the cost functional to distinguish among various strategies for practical advantages. A similar approach has been used in \cite{agus2019, peni2020, yusu2023}. This approach enables a more effective allocation of resources that accounts for budget and operational constraints. It also ensures that non-focal interventions are maintained at a reduced but non-zero baseline level, rather than being completely excluded. As a result, the complementary benefits of multiple concurrent interventions are preserved that are typically lost under the binary control activation framework. The resulting optimal control profiles represent more balanced strategies, prioritizing key interventions while maintaining essential supportive measures, thereby offering a more realistic application to real-world public health decision-making. 

We propose a set of control strategies that target key epidemiological components, including HIV infection diagnosis, initiation and adherence to treatment, and management of drug resistance. The descriptions of these strategies are as following:

\vspace{0.2cm}

\noindent
\textbf{Strategy A - Diagnosis focused:} This strategy prioritizes finding infected individuals before they can contribute to further transmission, making it particularly effective when early diagnosis substantially reduces onward spread. It is a suitable option in settings where maintaining diagnostic infrastructure is more feasible and cost-effective than providing treatment, particularly in low-income countries. The early diagnosis is prioritized by considering low weights for the costs associated with control $u_{1}$ and $u_{2}$ while treatment-related costs are maintained at a moderate level.

\vspace{0.2cm}

\noindent
\textbf{Strategy B - Treatment focused:} Prioritizing treatment-related interventions that enhance treatment availability can be particularly effective in the mature phase of the epidemic, where a substantial proportion of infected individuals are already diagnosed. In some cases, a high prevalence of drug-resistant infections also necessitates the adoption of treatment-focused strategies. We consider lower values for weight constants $w_{3}$ and $w_{4}$ to illustrate this strategy. 

\vspace{0.2cm}

\noindent
\textbf{Strategy C - Adherence focused:} Once a substantial proportion of the infected population initiates the treatment, it becomes crucial to shift the focus towards adherence-related interventions to prevent the emergence of drug resistance. Consequently, it reduces the prevalence of drug-resistant infections and might be effective when the second-line treatments are limited and costly. This emphasis is captured by assigning a lower weight to the cost constant $w_{5}$, which prioritizes adherence-enhancing measures.

\vspace{0.2cm}

\noindent
\textbf{Strategy D - Balanced:} In resource-rich settings, where multiple interventions can be simultaneously implemented to rapidly reduce the disease burden, a balanced strategy may be adopted. This involves assigning equal weights to all interventions, ensuring uniform emphasis across diagnosis, treatment, and adherence. This approach is also suitable in the final phase of the epidemic, when comprehensive efforts are needed to eliminate remaining transmission.

\vspace{0.2cm}

\noindent
\textbf{Strategy E - Dynamic control optimization: A framework for 95-95-95 target achievement:} For the optimal achievement of the UNAIDS 95-95-95 targets, we consider a dynamic strategy that combines the key epidemiological components using the model predictive control (MPC) approach, also known as receding horizon control (RHC) \cite{mayn2000, zura2006}. In this approach, we define three key metrics based on the values of control and state variables at a given time: the proportion of diagnosed individuals $M_{1}$, those under treatment $M_{2}$, and those who are virally suppressed $M_{3}$ among all HIV-infected individuals. Note that, earlier defined Strategy A improves $M_1$ by increasing diagnosis rates, Strategy B enhances $M_2$ by improving treatment initiation, and Strategy C raises $M_3$ by supporting treatment adherence. Once the 95-95-95 targets are achieved, a balanced strategy is applied, unless any of the metrics drop below the $95\%$ threshold. For this, the total simulation period is partitioned into $k$ sub-intervals, determined by selected decision points where these metrics are recalculated to get feedback from the model. At each decision point, an optimal control problem is solved over the remaining finite prediction horizon, using the current system state as the initial condition. This optimization generates a sequence of optimal control inputs, and only the first portion of the computed control sequence is applied to the system. The process is repeated at subsequent time steps, continuously updating the control strategy based on the evolving values of metrics $M_{1}$, $M_{2}$ and $M_{3}$, thus creating a feedback loop through the state variables. This modified MPC framework ensures the convergence of the optimal control solution by eliminating instabilities that may arise from the strategy-based variations in the cost functional at different phases of the simulation period. For a detailed description of the computational procedure, see Algorithm \ref{03:alg:numerical_simulation_dynamic_control_optimization}.

\begin{algorithm}
\caption{Dynamic Control Optimization}
\label{03:alg:numerical_simulation_dynamic_control_optimization}
\begin{algorithmic}[1]
\REQUIRE Simulation time $t_f$, decision points $D = \{0, 1, \dots, t_f\}$, model parameters $\theta$, initial conditions $y_0$, strategy weights $W$, control bounds $U_b = [0, 1]^5$, convergence tolerance $\epsilon$, maximum iterations $k_{\text{max}}$, relaxation parameter $\alpha$
\STATE \textbf{Initialize:} Set $t \gets 0$, $y \gets y_0$, and discretize time grid $t_{\text{total}} = \{t_0, t_1, \dots, t_N\}$ over $[0, t_f]$ with $N$ points
\FOR{each decision point $t_d \in D$}
    \STATE Calculate 95-95-95 metrics at $t_d$:
    \STATE \quad $M_1(t_d) \gets \frac{I_{SD} + I_{RD} + T_1 + T_2}{I_{SU} + I_{SD} + I_{RU} + I_{RD} + T_1 + T_2}$
    \STATE \quad $M_2(t_d) \gets \frac{T_1 + T_2}{I_{SD} + I_{RD} + T_1 + T_2 }$
    \STATE \quad $M_3(t_d) \gets \frac{u_5^* T_1 + T_2}{T_1 + T_2}$

    \IF{$M_1(t_d) < 0.95$}
        \STATE Select Strategy A (Diagnosis focused)
    \ELSIF{$M_1(t_d) \geq 0.95$ and $M_2(t_d) < 0.95$}
        \STATE Select Strategy B (Treatment focused)
    \ELSIF{$M_1(t_d) \geq 0.95$ and $M_2(t_d) \geq 0.95$ and $M_3(t_d) < 0.95$}
        \STATE Select Strategy C (Adherence focused)
    \ELSE
        \STATE Select Strategy D (Balanced)
    \ENDIF

    \STATE Set cost functional weights $W$ based on selected strategy

    \STATE Solve optimal control problem from $t_d$ to $t_f$ using backward-forward sweep:
    \FOR{$k = 1$ to $k_{\text{max}}$}
        \STATE Solve state system $\dot{y} = f(y, u^{(k-1)}, \theta)$ from $t_d$ to $t_f$ with $y(0)=y_{0}$
        \STATE Solve adjoint system $\dot{\sigma} = -\nabla_y H(y, u^{(k-1)}, \sigma, \theta, W)$ from $t_f$ to $t_d$ with $\sigma(t_f)=0$
        \STATE Compute $u^{\text{new}}$ using $\frac{\partial H}{\partial u_j} = 0$ for $j = 1, \dots, 5$
        \STATE Update $u^{(k)} \gets (1 - \alpha) u^{(k-1)} + \alpha u^{\text{new}}$
        \IF{$\max_j \| u_j^{(k)} - u_j^{(k-1)} \|_\infty < \epsilon$}
            \STATE Set $u^* \gets u^{(k)}$
            \STATE \textbf{break}
        \ENDIF
    \ENDFOR

    \STATE Implement controls until next decision point:
    \STATE \quad Set $t_{\text{next}} \gets t_d + 1$ if $t_d < t_f$ else $t_f$
    \STATE \quad Simulate system forward to $t_{\text{next}}$ using $u^*(t)$
    \STATE \quad Update $y \gets y(t_{\text{next}})$, $t \gets t_{\text{next}}$
\ENDFOR
\RETURN Dynamic optimal control profile $u^*(t)$, state trajectories $y(t)$.
\end{algorithmic}
\end{algorithm}

For the numerical simulations, we use a representative set of parameters as listed in Table (\ref{03:tab:parameter_values}). The simulation is carried out over a time horizon of $t_{f}=25$ years, which is essential for a chronic infection like HIV, where the infectious period is prolonged and the effects of interventions require several years to be observed. 
The initial population size is $S(0)=625, I_{SU}(0)=0.2, I_{SD}(0)=0.5, T_{1}(0)=1.8, I_{RU}(0)=0.2, I_{RD}(0)=0.1, T_{2}(0)=0.2, A(0)=0.25$. The weight constants associated with the state variables are fixed as $W_{1}=1$, $W_{2}=0.5$, $W_{3}=5$, $W_{4}=2$, $W_{5}=0.25$ and $W_{6}=0.05$. Note that, these weights are assigned based on the relative burden of infection of each state variable within the community. The values of $W_{5}$ and $W_{6}$ reflect the prioritization of treatment coverage, even though this may lead to increased costs. Further, the weight constants associated with the control functions, along with the corresponding strategies, are listed in the Table \ref{03:tab:numerical_simulation_weight_constants}. It is important to note that the weight constants used in the simulations are theoretical in nature and are primarily intended to illustrate the different control strategies proposed in this study. However, they are roughly scaled based on estimated intervention costs. Higher weights are assigned to the drug-resistant infected population compared to the drug-sensitive group, reflecting the increased cost of managing drug-resistant related interventions. Additionally, given the prolonged nature of HIV treatment, the weights associated with treatment-related interventions are chosen to be relatively higher to account for their long-term resource requirements.

\begin{table}[htbp]
  \centering 
  \begin{tabular}{*{6}{c}} 
    \hline
  Strategy & $w_{1}$ & $w_{2}$ & $w_{3}$ & $w_{4}$ & $w_{5}$ \\
    \hline
   Strategy A & 0.005 & 0.005 & 2 & 5 & 1 \\
   Strategy B & 0.5 & 0.5 & 0.02 & 0.05 & 1 \\
   Strategy C & 0.5 & 0.5 & 2 & 5 & 0.01 \\
   Strategy D & 0.5 & 0.5 & 2 & 5 & 0.6 \\
    \hline
  \end{tabular}
  \caption{The weight constants associated with the control variables in different control strategies}
  \label{03:tab:numerical_simulation_weight_constants}
\end{table}

The optimal control profiles for each strategy are shown in Figures \ref{03:fig:numerical_simulation_strategy_A}-\ref{03:fig:numerical_simulation_strategy_95}. These figures also illustrate the state trajectories of infected and treated populations, highlighting the differences in population dynamics under various control strategies. In addition, we included two key epidemiological indicators, the proportions of drug-resistant and undiagnosed individuals within the total infected population, to identify the effectiveness of proposed strategies in achieving their individual objectives. 
In Figure \ref{03:fig:numerical_simulation_strategy_A_1}, the optimal control profile for Strategy A emphasizes diagnosis-focused interventions. Consequently, the proportion of undiagnosed cases declines significantly from the beginning. As the initiation rate of second-line therapy increases to a certain level, the drug-resistant cases starts to fall (see Figure \ref{03:fig:numerical_simulation_strategy_A_2}). Note that, the number of diagnosed drug-resistant individuals rises sharply in the early phase (see Figure \ref{03:fig:numerical_simulation_strategy_A_3}). This is primarily due to an imbalance between the high diagnosis rate, driven by higher values of control $u_{2}$, and the limited availability of second-line treatment. 

\begin{figure}[t!]
\begin{center}
\subfloat[Optimal control profile]{\includegraphics[height=6.4cm,width=9cm]{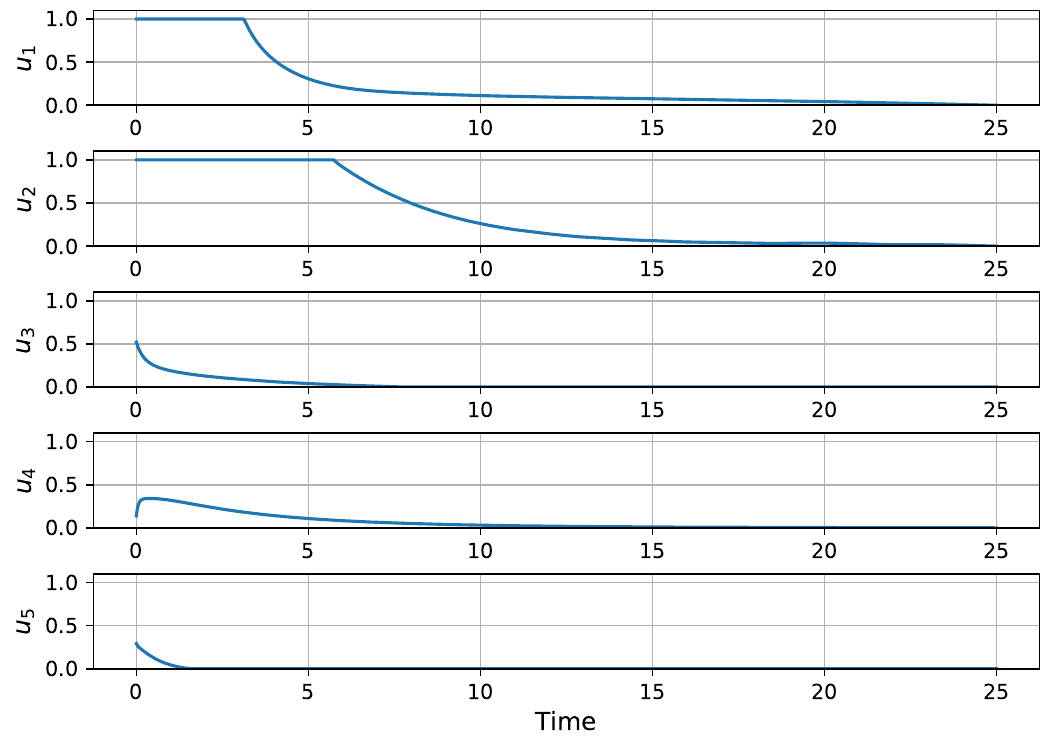}\label{03:fig:numerical_simulation_strategy_A_1}}
\quad
\subfloat[Epidemiological indicators]{\includegraphics[height=6.5cm,width=9.2cm]{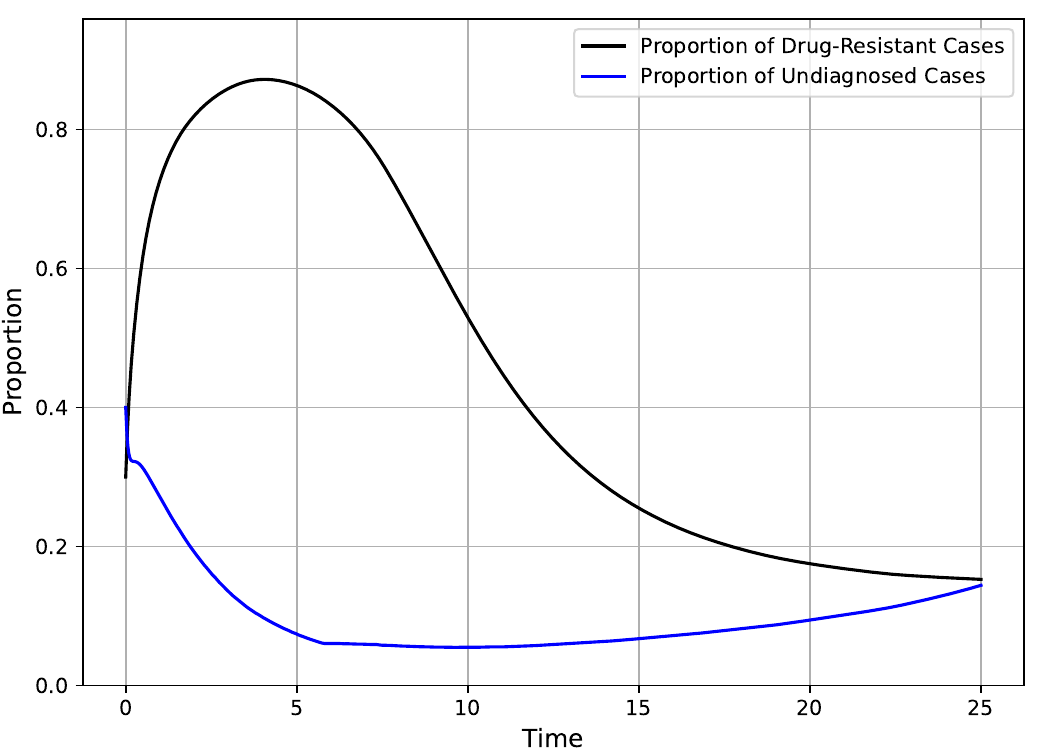}\label{03:fig:numerical_simulation_strategy_A_2}}\\
\end{center}
\begin{center}
\subfloat[Infected population dynamics]{\includegraphics[height=6.5cm,width=9.2cm]{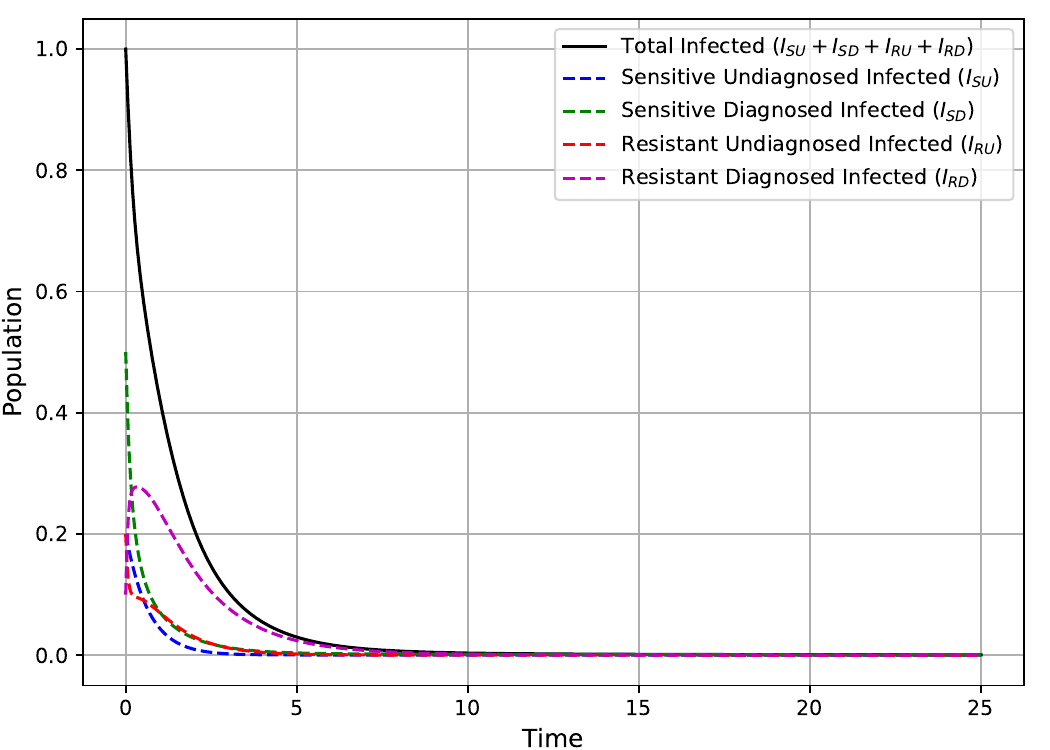}\label{03:fig:numerical_simulation_strategy_A_3}}
\quad 
\subfloat[Treatment coverage dynamics]{\includegraphics[height=6.5cm,width=9.2cm]{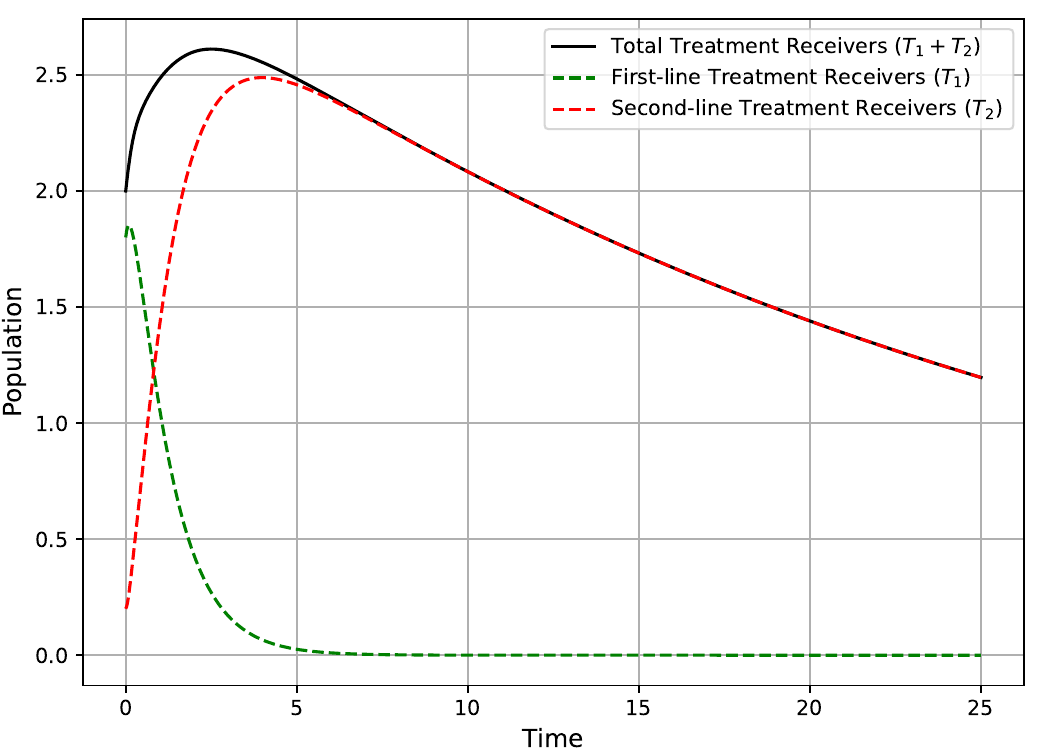}\label{03:fig:numerical_simulation_strategy_A_4}}\\
\end{center}
\caption{State and optimal control dynamics of the optimal control problem (\ref{03:eqn:control_functional})-(\ref{03:system:control}) for Strategy A.}\label{03:fig:numerical_simulation_strategy_A}
\end{figure}	

In Strategy B, the optimal control profile reflects a prioritized allocation of resources toward treatment-related interventions (see Figure \ref{03:fig:numerical_simulation_strategy_B}). Although this strategy emphasizes treatment, it also focuses on diagnosis in the early phase of the simulation, making the infected population eligible to receive treatment. The proportion of undiagnosed cases remains higher compared to Strategy A (see Figure \ref{03:fig:numerical_simulation_strategy_B_2}). As observed in Figure \ref{03:fig:numerical_simulation_strategy_B_3}, however, all infected populations continuously declines following the implementation of this control strategy, makes it more efficient in reducing disease burden relative to Strategy A. As in Strategy A (see Figure \ref{03:fig:numerical_simulation_strategy_A_4}), the limited efforts in adherence-enhancing interventions reduce efficacy of first-line treatment, leading to the development of drug resistance among some patients and their subsequent transition to second-line treatment (see Figure \ref{03:fig:numerical_simulation_strategy_B_4}). 

\begin{figure}[t!]
\begin{center}
\subfloat[Optimal control profile]{\includegraphics[height=6.4cm,width=9cm]{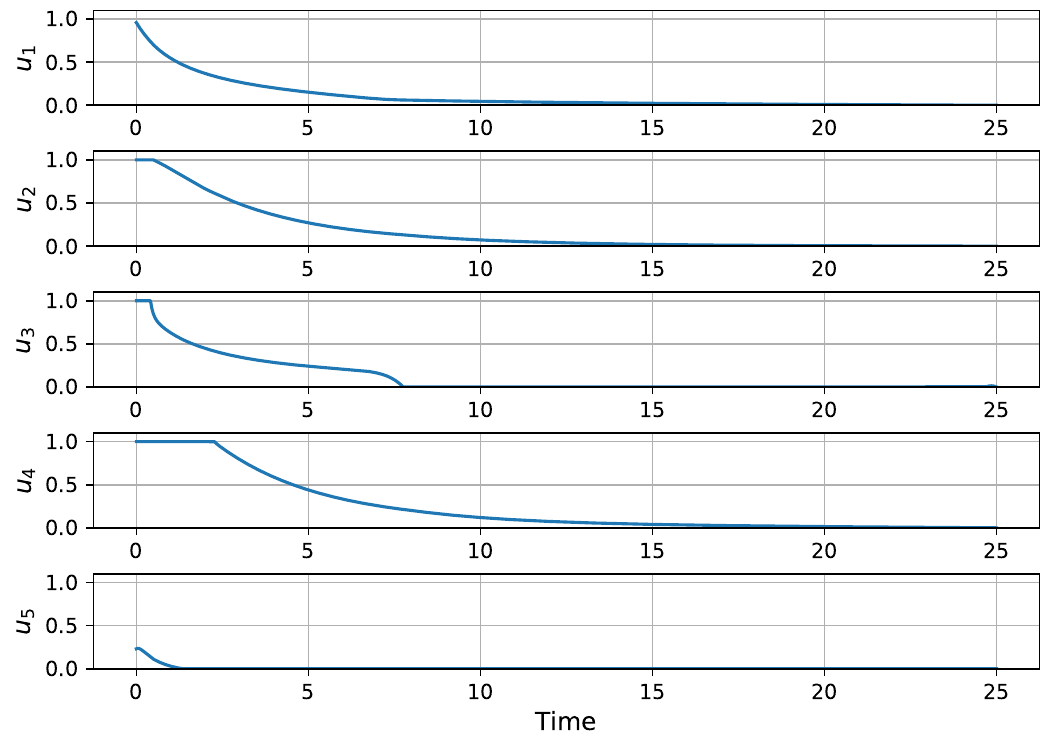}\label{03:fig:numerical_simulation_strategy_B_1}}
\quad
\subfloat[Epidemiological indicators]{\includegraphics[height=6.5cm,width=9.2cm]{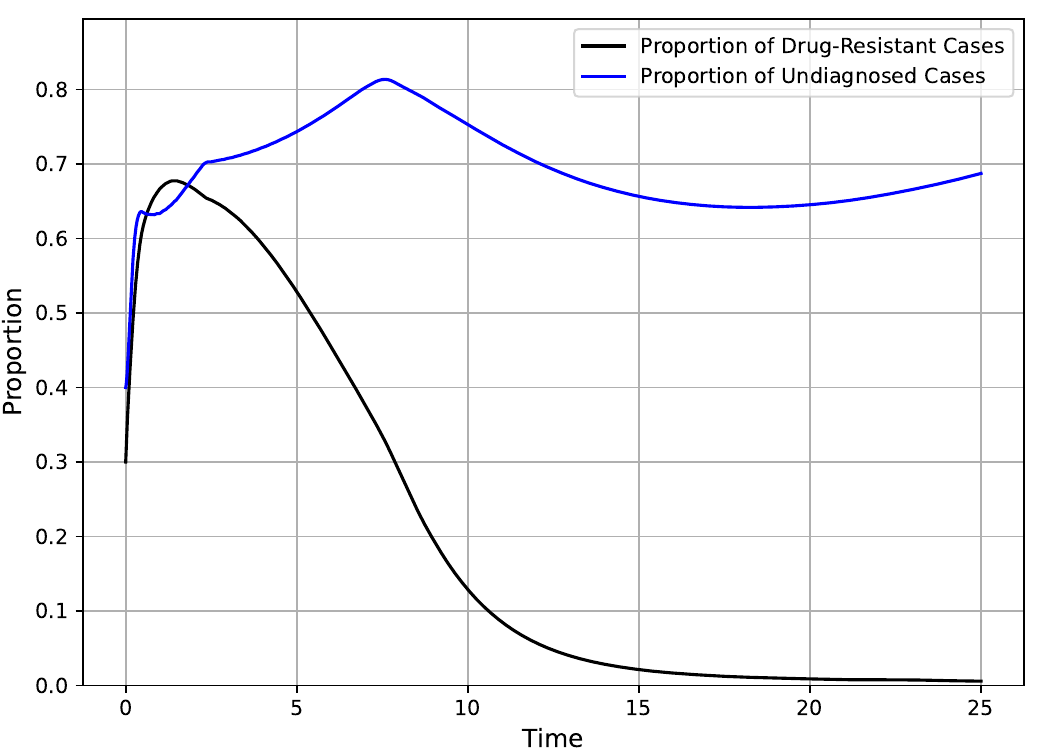}\label{03:fig:numerical_simulation_strategy_B_2}}\\
\end{center}
\begin{center}
\subfloat[Infected population dynamics]{\includegraphics[height=6.5cm,width=9.2cm]{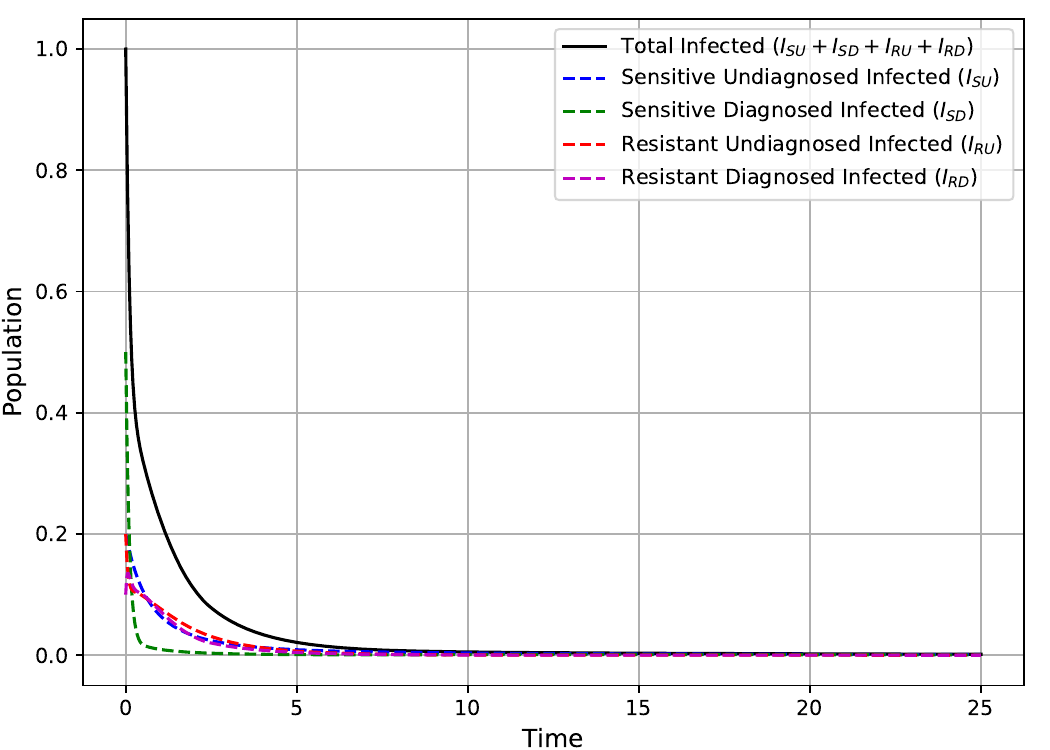}\label{03:fig:numerical_simulation_strategy_B_3}}
\quad 
\subfloat[Treatment coverage dynamics]{\includegraphics[height=6.5cm,width=9.2cm]{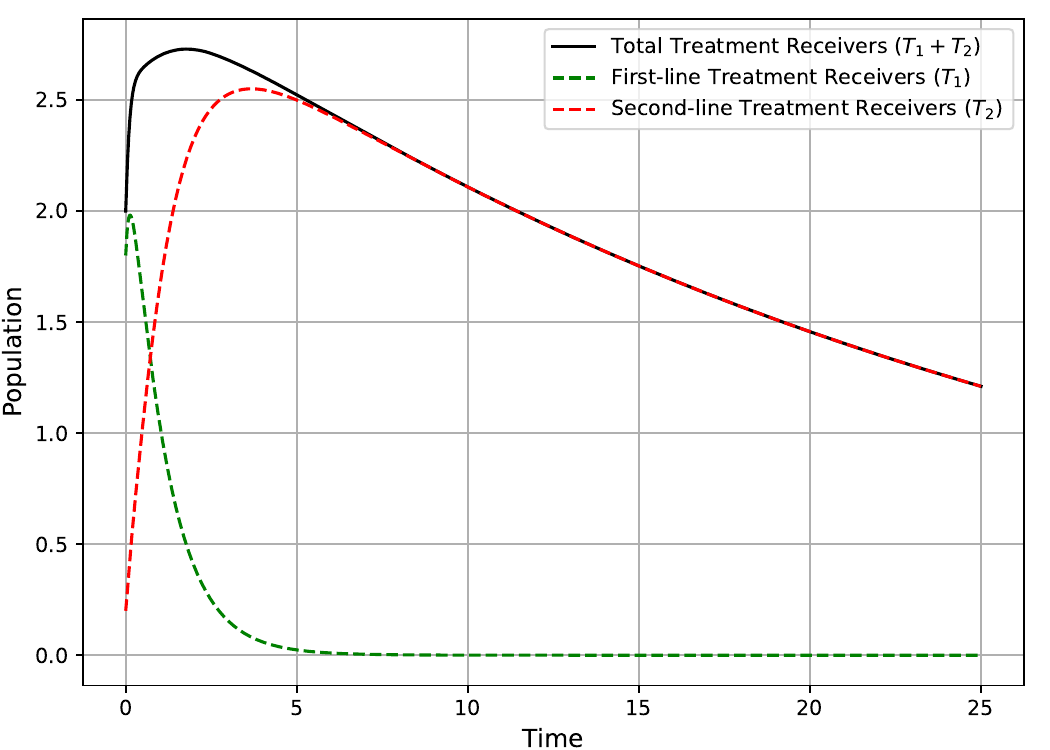}\label{03:fig:numerical_simulation_strategy_B_4}}\\
\end{center}
\caption{State and optimal control dynamics of the optimal control problem (\ref{03:eqn:control_functional})-(\ref{03:system:control}) for Strategy B.}\label{03:fig:numerical_simulation_strategy_B}
\end{figure}	

The optimal controls for Strategy C are presented in Figure \ref{03:fig:numerical_simulation_strategy_C_1} highlighting the priority to adherence-related control variable $u_{5}$. While this strategy leads to a sharp decline in all infected compartments, it does not reduce the proportion of undiagnosed population (see Figure \ref{03:fig:numerical_simulation_strategy_C_2}, Figure \ref{03:fig:numerical_simulation_strategy_C_3}). However, higher intensity of control $u_{5}$ reduces the development of drug resistance. As a result, after an initial short-term rise, the requirement for second-line therapy decreases. In contrast to Strategies A and B, this approach results in fewer second-line treatment initiations (see Figure \ref{03:fig:numerical_simulation_strategy_C_4}).  

\begin{figure}[t!]
\begin{center}
\subfloat[Optimal control profile]{\includegraphics[height=6.4cm,width=9cm]{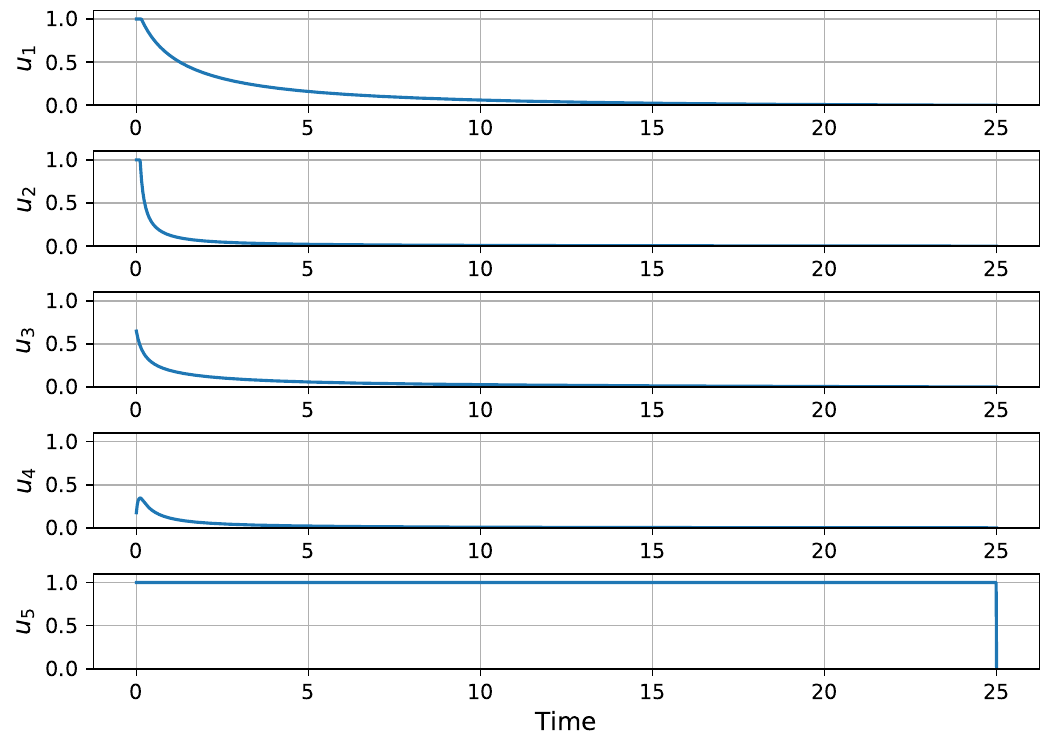}\label{03:fig:numerical_simulation_strategy_C_1}}
\quad
\subfloat[Epidemiological indicators]{\includegraphics[height=6.5cm,width=9.2cm]{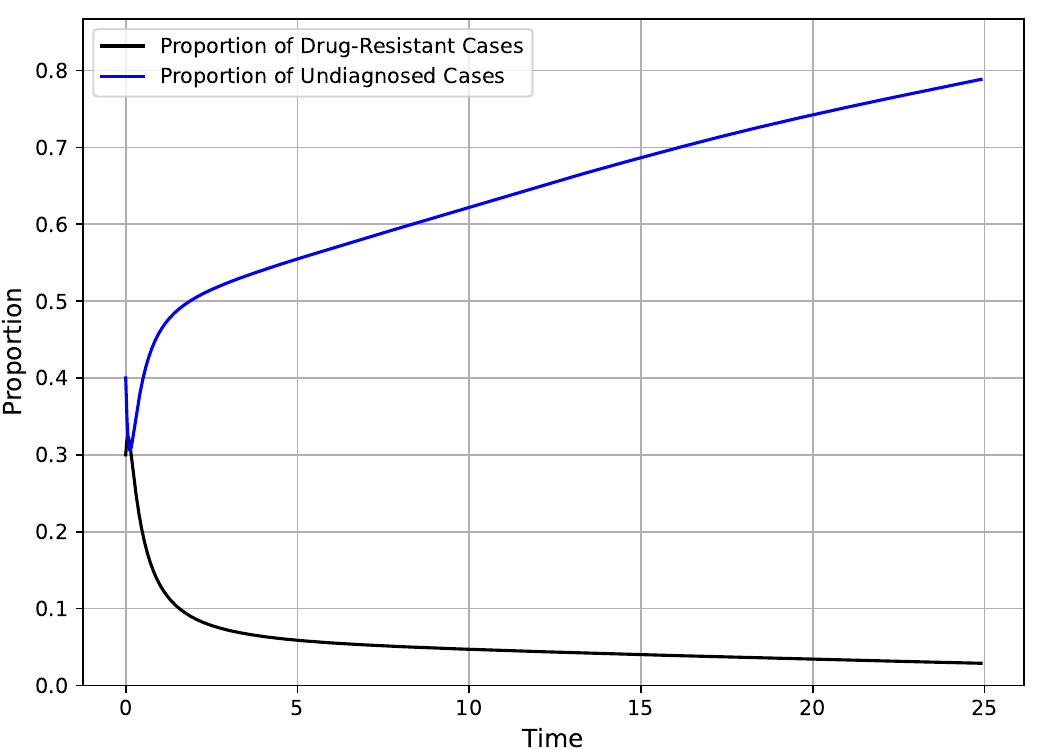}\label{03:fig:numerical_simulation_strategy_C_2}}\\
\end{center}
\begin{center}
\subfloat[Infected population dynamics]{\includegraphics[height=6.5cm,width=9.2cm]{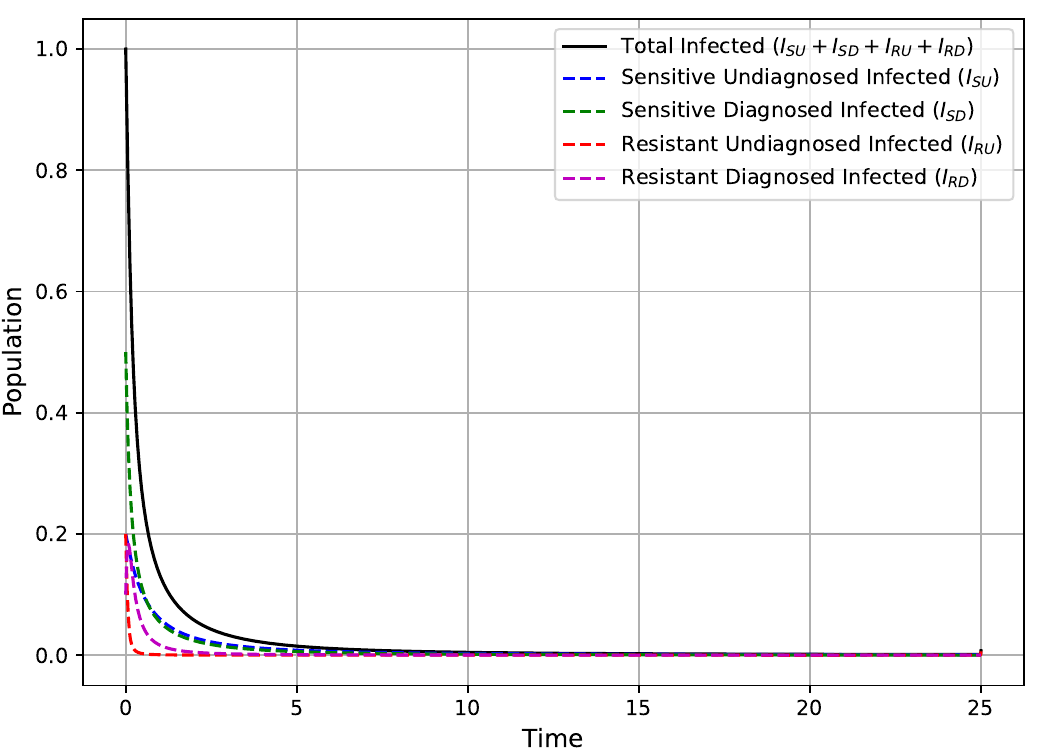}\label{03:fig:numerical_simulation_strategy_C_3}}
\quad 
\subfloat[Treatment coverage dynamics]{\includegraphics[height=6.5cm,width=9.2cm]{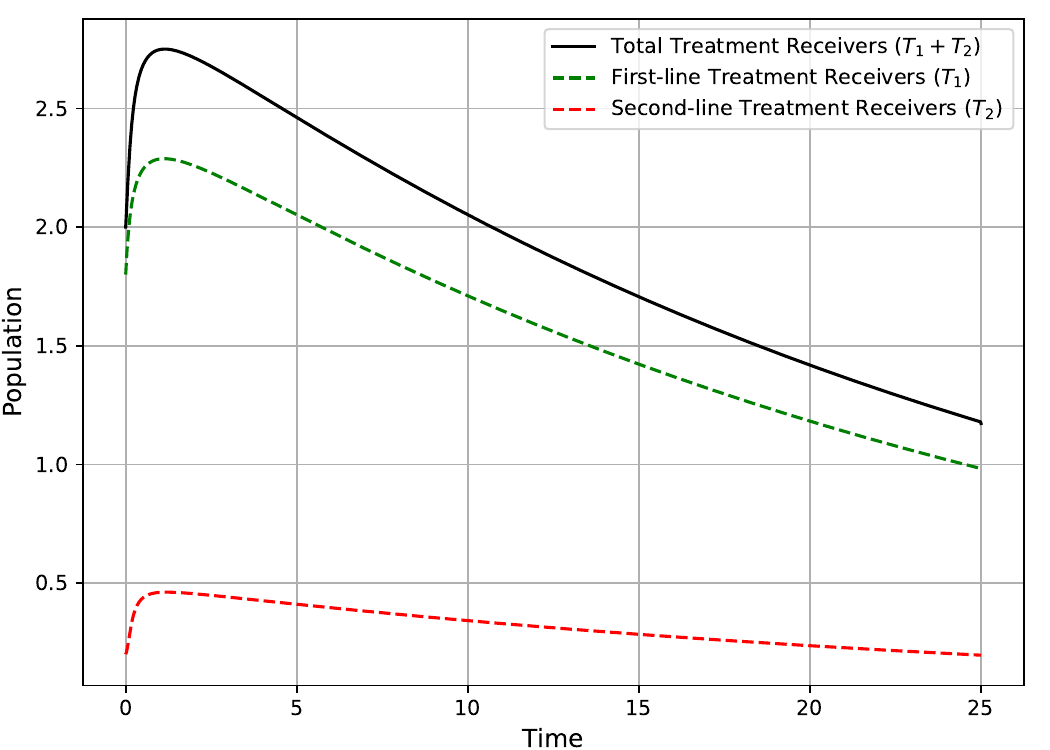}\label{03:fig:numerical_simulation_strategy_C_4}}\\
\end{center}
\caption{State and optimal control dynamics of the optimal control problem (\ref{03:eqn:control_functional})-(\ref{03:system:control}) for Strategy C.}\label{03:fig:numerical_simulation_strategy_C}
\end{figure}	

In Figure \ref{03:fig:numerical_simulation_strategy_D_1}, the optimal control profile shows a balanced approach by targeting all control variables at different stages of the simulation. In the early phase, all controls are applied simultaneously to maximize the intervention impact. As the number of individuals receiving first-line treatment falls below a certain threshold, the adherence-related control loses its relevance and reduces to zero to achieve the optimality. At the same time, these additional efforts are redirected toward improving the diagnosis of drug-resistant infections and expanding second-line treatment coverage, in response to the sudden increase in drug-resistant cases resulting from non-adherence to first-line treatment (see Figure \ref{03:fig:numerical_simulation_strategy_D_2}, Figure \ref{03:fig:numerical_simulation_strategy_D_3}). Accordingly, a significant change in treatment coverage dynamics is observed (see Figure \ref{03:fig:numerical_simulation_strategy_D_4}).

\begin{figure}[t!]
\begin{center}
\subfloat[Optimal control profile]{\includegraphics[height=6.4cm,width=9cm]{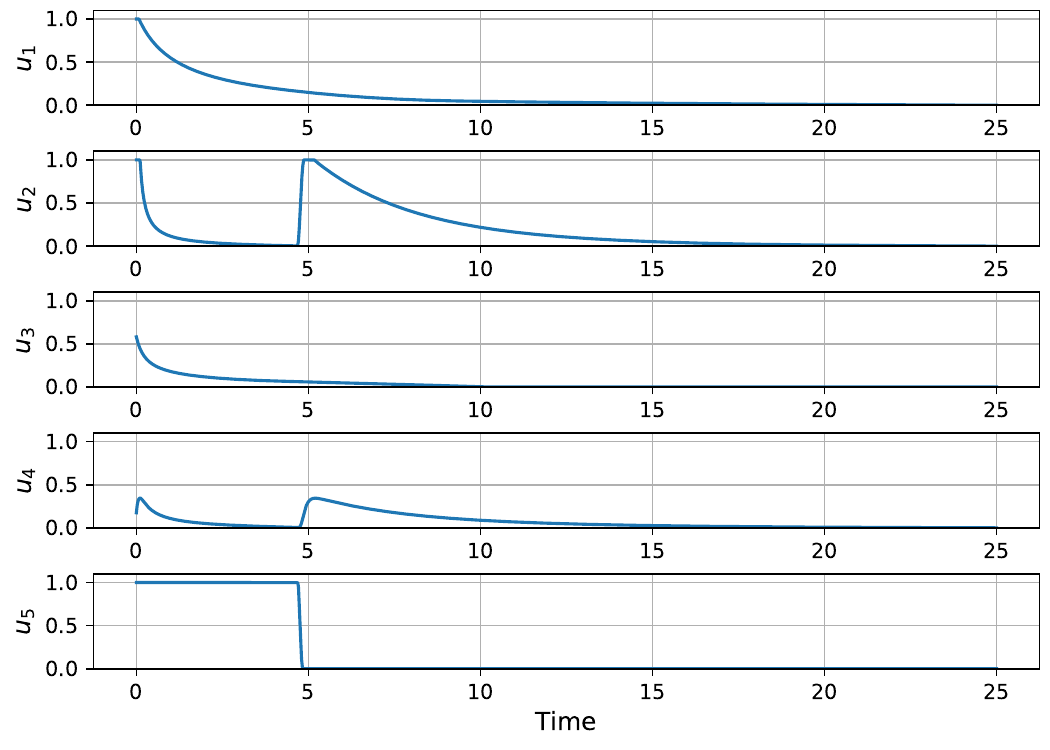}\label{03:fig:numerical_simulation_strategy_D_1}}
\quad
\subfloat[Epidemiological indicators]{\includegraphics[height=6.5cm,width=9.2cm]{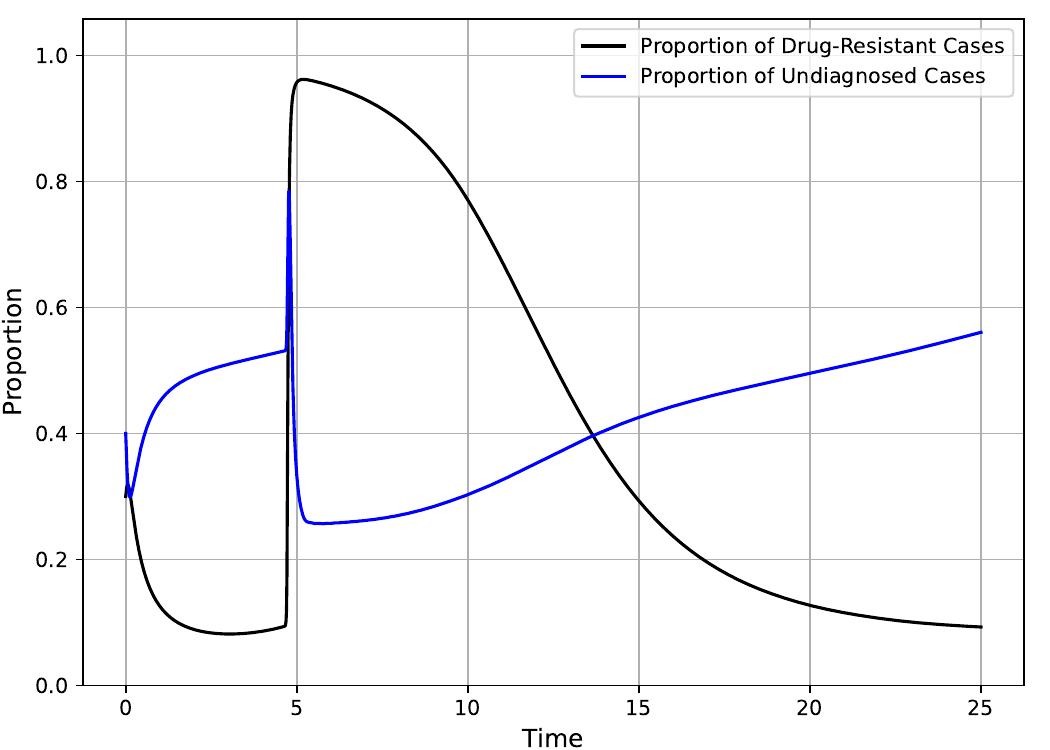}\label{03:fig:numerical_simulation_strategy_D_2}}\\
\end{center}
\begin{center}
\subfloat[Infected population dynamics]{\includegraphics[height=6.5cm,width=9.2cm]{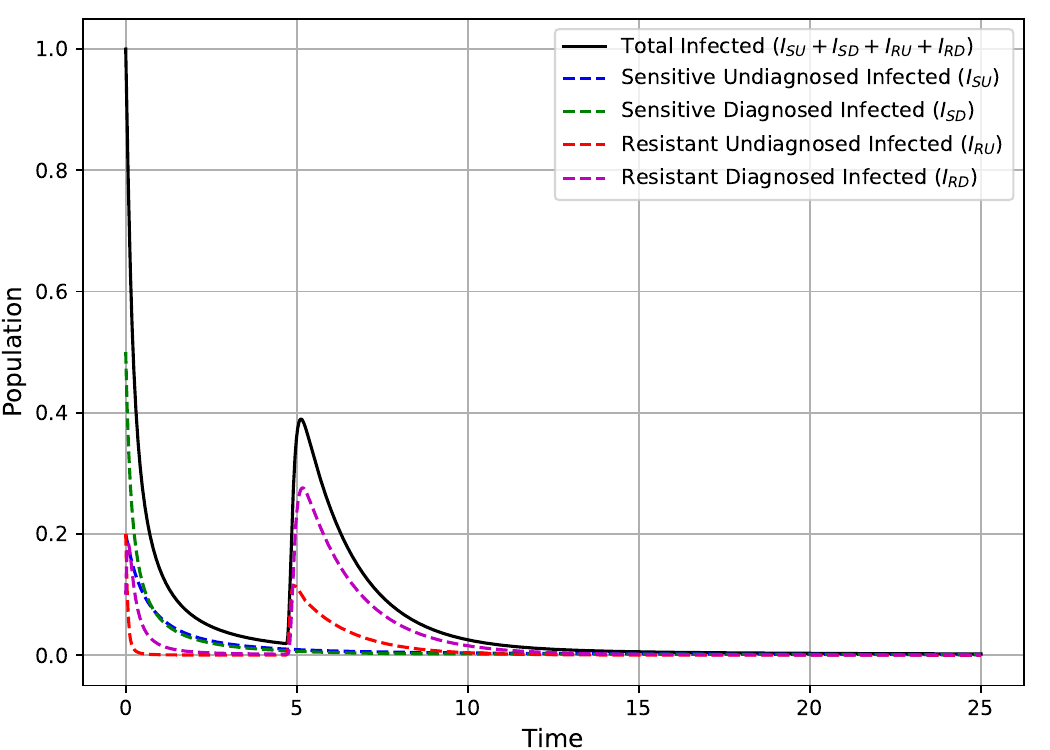}\label{03:fig:numerical_simulation_strategy_D_3}}
\quad 
\subfloat[Treatment coverage dynamics]{\includegraphics[height=6.5cm,width=9.2cm]{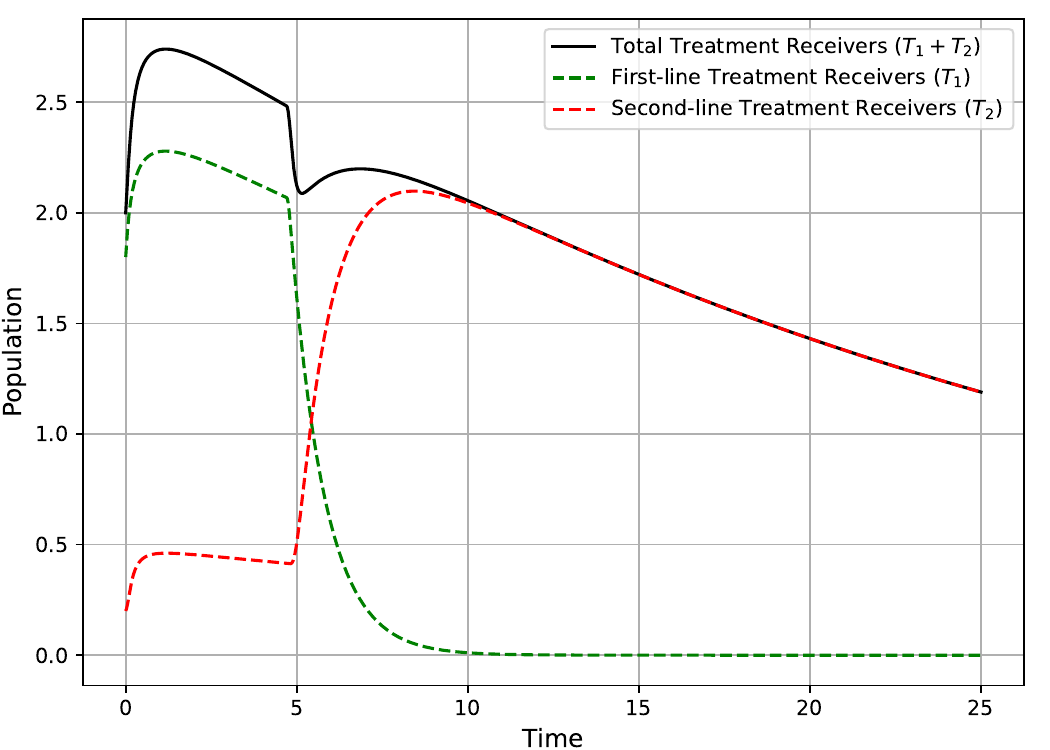}\label{03:fig:numerical_simulation_strategy_D_4}}\\
\end{center}
\caption{State and optimal control dynamics of the optimal control problem (\ref{03:eqn:control_functional})-(\ref{03:system:control}) for Strategy D.}\label{03:fig:numerical_simulation_strategy_E}
\end{figure}	

In the dynamic control optimization framework, decision points are set at one-year intervals considering the long-term dynamics of HIV infection. At each decision point, controls are updated based on feedback from the current system state. The optimal control profile in Figure \ref{03:fig:numerical_simulation_strategy_95_1} shows that the initial efforts starts from the diagnosis focused interventions to achieve the 95-95-95 targets. After a few strategy adjustments, all three targets are achieved within the first five years, followed by the implementation of a balanced strategy for the remaining simulation period. The color bar at the bottom indicates the specific strategy implemented during each corresponding time interval (see Figure \ref{03:fig:numerical_simulation_strategy_95_2}). Due to sudden changes in control inputs at each decision point, the state variables exhibit non-smooth trajectories. The results also suggest that while second-line treatment is more optimal than first-line treatment, the required coverage is lower compared to Strategies A, B, and D. (see Figure \ref{03:fig:numerical_simulation_strategy_95_2}). This dynamic approach suggests that the 95-95-95 targets can be achieved within five years through optimal resource allocation. However, although these targets are reached by the third year, sustaining them requires continuous and judicious allocation of resources across all aspects of HIV management.

\begin{figure}[t!]
\begin{center}
\subfloat[Optimal control profile]{\includegraphics[height=6.3cm,width=8.9cm]{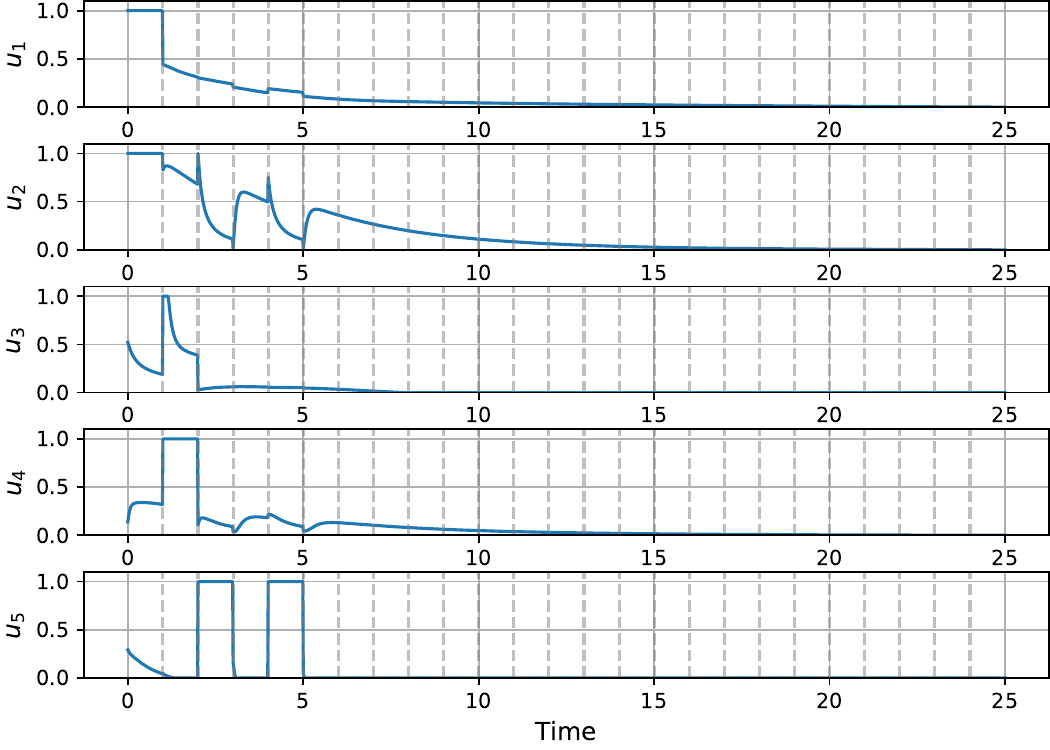}\label{03:fig:numerical_simulation_strategy_95_1}}
\quad
\subfloat[Epidemiological indicators]{\includegraphics[height=6.5cm,width=9.2cm]{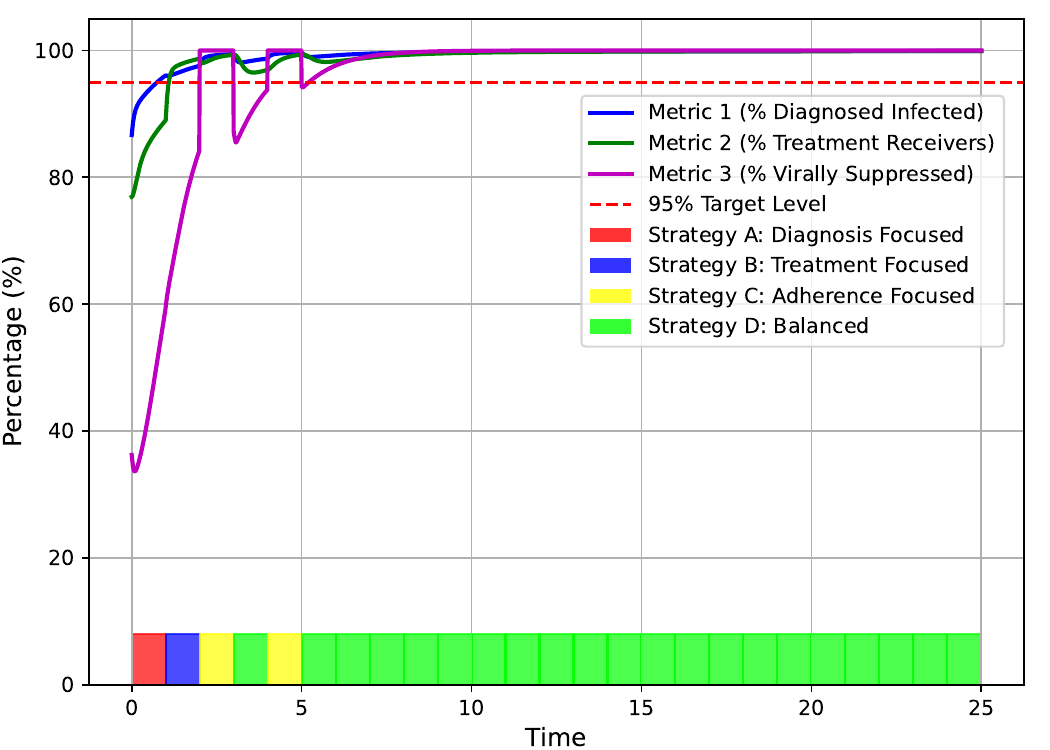}\label{03:fig:numerical_simulation_strategy_95_2}}\\
\end{center}
\begin{center}
\subfloat[Infected population dynamics]{\includegraphics[height=6.5cm,width=9.2cm]{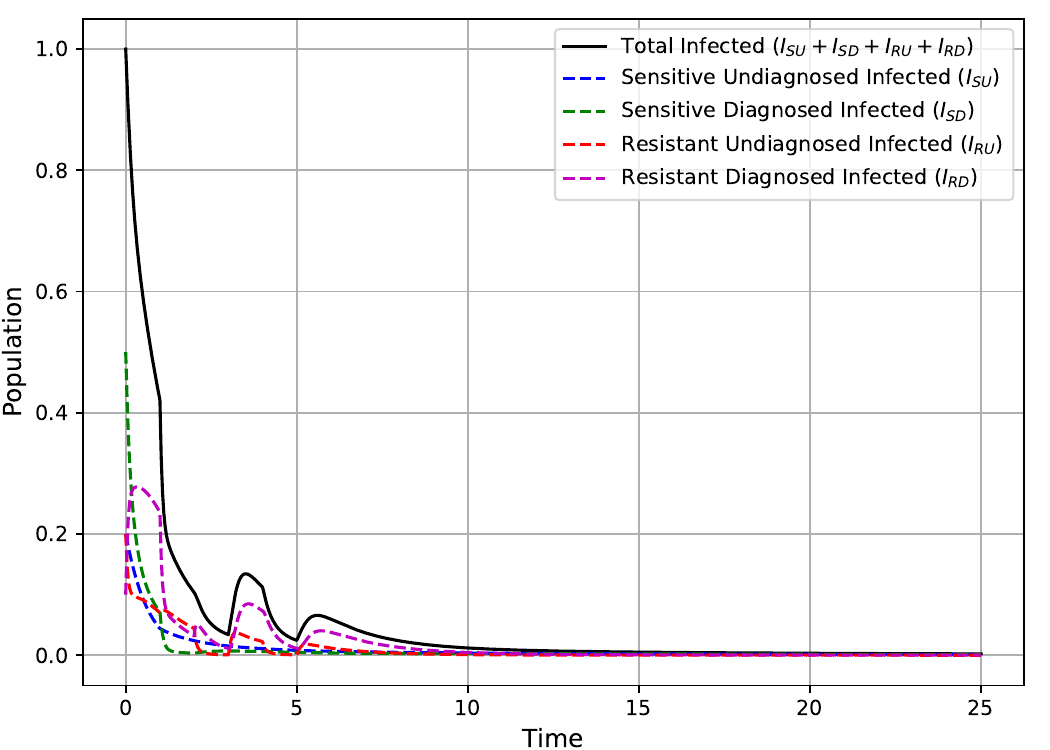}\label{03:fig:numerical_simulation_strategy_95_3}}
\quad 
\subfloat[Treatment coverage dynamics]{\includegraphics[height=6.5cm,width=9.2cm]{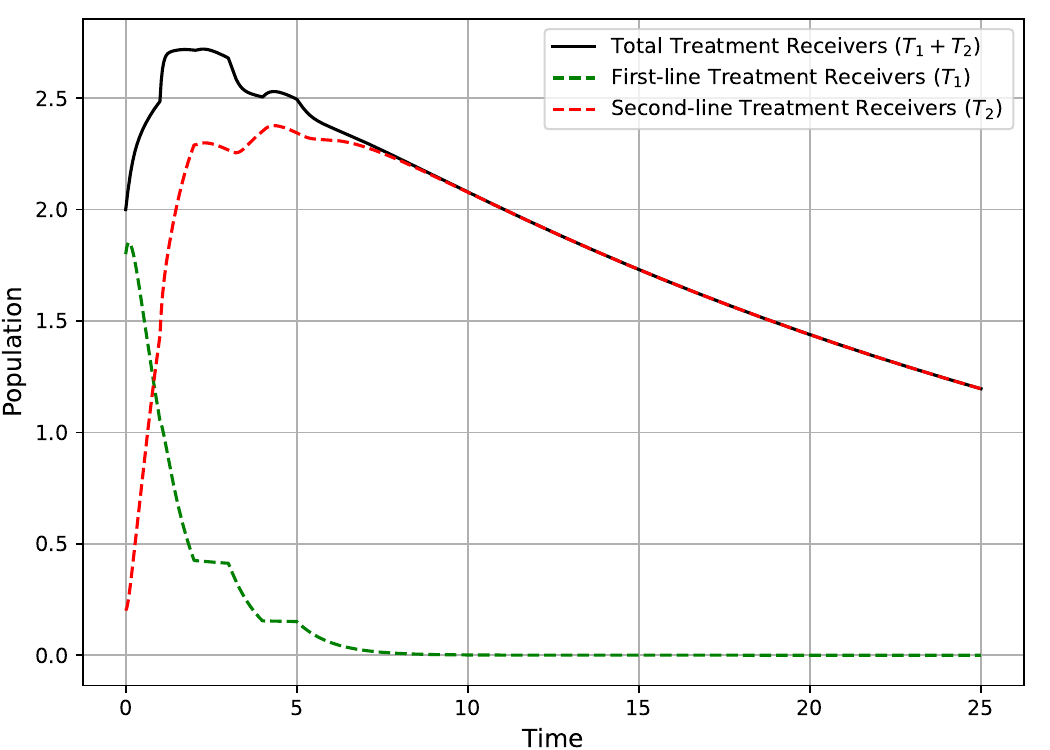}\label{03:fig:numerical_simulation_strategy_95_4}}\\
\end{center}
\caption{State and optimal control dynamics of the optimal control problem (\ref{03:eqn:control_functional})-(\ref{03:system:control}) for Strategy E.}\label{03:fig:numerical_simulation_strategy_95}
\end{figure}	

We further compared the total number of infected individuals and those receiving treatment under each control strategy with the baseline scenario of no intervention. The corresponding population dynamics for each strategy and the no-control case are illustrated in Figure \ref{03:fig:numerical_simulation_strategy_comparison}. Among all strategies, Strategy C shows highest reduction in infection burden, averting approximately 26.69 million person-years of infection, whereas Strategy D is the least effective, with 25.9 million person-years averted. In terms of treatment coverage, Strategy B leads with 44.02 million person-years of additional treatment, while Strategy D again shows the lowest impact, contributing 42.54 million person-years of treatment. A detailed comparison of infection aversion and treatment additions across strategies is provided in Table \ref{03:tab:numerical_simulation_ACER}. 

\begin{figure}[t!]
\begin{center}
\subfloat[Total infected population under different strategies and no control]{\includegraphics[height=6.3cm,width=8.9cm]{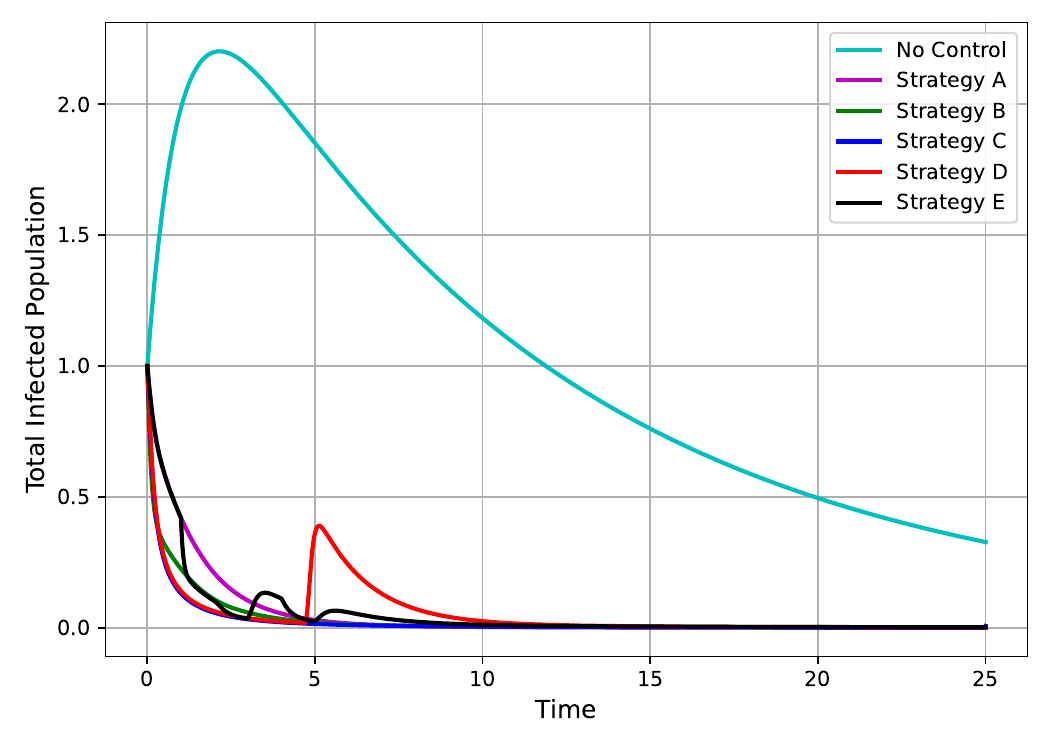}\label{03:fig:numerical_simulation_strategy_comparison_infected}}
\quad
\subfloat[Total treated population under different strategies and no control]{\includegraphics[height=6.5cm,width=9.2cm]{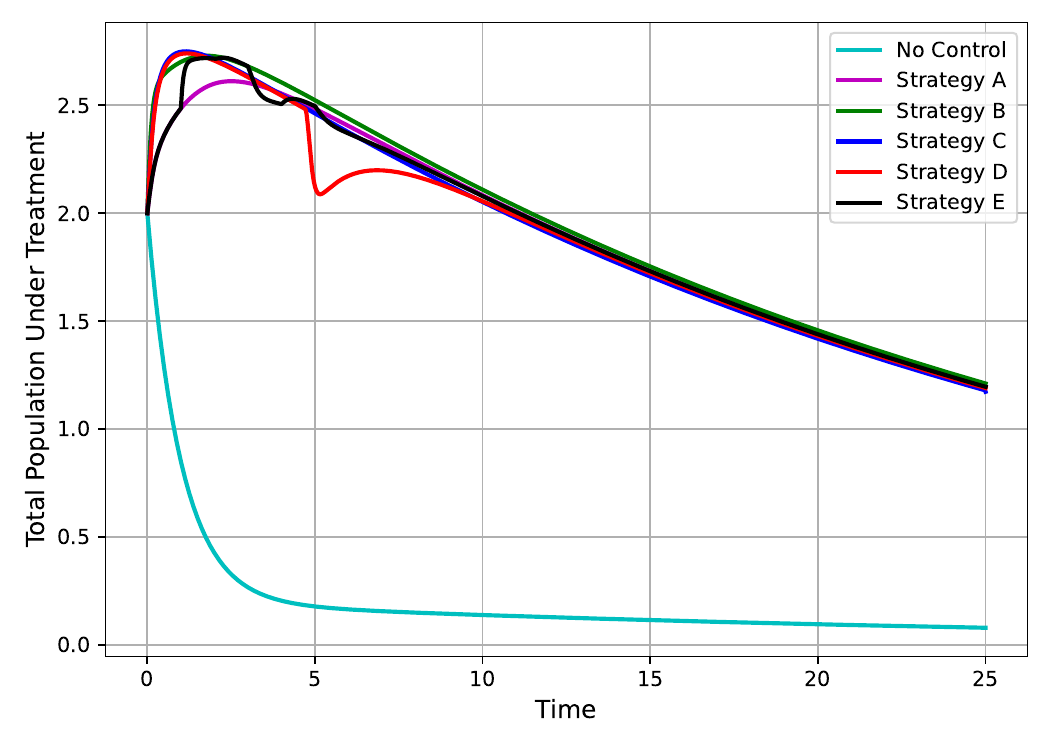}\label{03:fig:numerical_simulation_strategy_comparison_treatment}}\\
\end{center}
\caption{Comparative population dynamics of infected and treated individuals under different control strategies and the no-intervention scenario over the simulation period.}\label{03:fig:numerical_simulation_strategy_comparison}
\end{figure}	

\subsection{Cost-effectiveness analysis}{\label{03:subsec:numerical_simulation:costeffectiveness_analysis}}
Efficient allocation of resources is crucial for the effective control and elimination of a disease, particularly in resource-limited settings and low-income countries. Cost-effectiveness analysis is a valuable tool for identifying relatively low-cost interventions with substantial potential to reduce the burden of disease \cite{okos2013, agus2019}. In this section, we evaluate the cost-effectiveness of the control strategies proposed in Section \ref{03:subsec:numerical_simulation:optimal_control} and rank them according to their effectiveness in reducing infection levels with optimal resource utilization. For this purpose, we consider the average cost-effectiveness ratio (ACER) method \cite{agus2019}. The strategy with the lowest ACER value is ranked as the most cost-effective strategy.

In general, the ACER is defined as the ratio of the total cost factor associated with a control strategy to its corresponding impact on disease reduction. However, accurately estimating the exact cost and impact of implementing a strategy, particularly over a large population and long time horizon, is often impractical. Therefore, we employ proxy metrics to represent these estimates that are most suitable for our modeling framework. The impact of a control strategy can be measured by metrics such as averted total infections, averted person-years of infection, or additional person-year of treatment. We use the total averted person-year of infection and additional person-year of treatment, as these are more suitable metrics for the proposed model, which account for the duration individuals remain in a state. It is calculated as the difference in the cumulative number of infected or treated individual with and without control implementation during the simulation period. Further, the cost of a control strategy is calculated as the sum of the integrals of the squared optimal control values (control intensity), each weighted by its respective relative cost factor (RCF), over the simulation period.

The cost estimates for various control components depend on regional economic conditions and the mode of implementation. Using a relative scale for these estimates reduces this variance. The costs associated with $u_{1}$ and $u_{2}$ include expenses for testing, infrastructure development, and public awareness campaigns \cite{sund2022}. Among these, diagnosis of drug-resistant cases is significantly more expensive due to the need for specialized tests and advanced equipments. HIV drug-resistance testing at Metropolis Healthcare Limited, one of India's leading diagnostic providers, can cost 5–20 times more than standard HIV testing, depending on the method used. However, this disparity decreases when considering the cost of all components of efforts combined. Accordingly, we set the $RIF(u_{1})=1$ as the base unit and $RIF(u_{2})=5$. Treatment-related efforts, represented by $u_{3}$ and $u_{4}$, involve training healthcare professionals, expanding treatment facilities, and offering subsidized therapy \cite{muga2011}. In particular, second-line ART incurs higher costs due to more complex protocols and specialized care needs. Based on this, we assign $RIF(u_{3})=10$ and $RIF(u_{4})=15$. Further, efforts to improve drug adherence, such as individual counseling, educational campaigns, reminder systems via SMS or mobile apps, and strengthening medicine supply chains, are major contributors for the cost of $u_{5}$ \cite{mbua2015}. We assume their relative cost as $RCF(u_{5})=2$. Note that these values are not exact but serve as approximations to reflect the relative cost burden of each control measure. 

Based on the control intensities and RCF of each control variable,  and the control impact of each strategy, the ACER for a strategy is given by:
$$\text{ACER} = \frac{\text{Total Cost of Interventions}}{\text{Total Strategy Impact}} = \frac{\sum_{i=1}^{5}\text{(Control Intensity)}_i \times \text{(RCF}_i)}{\text{Total Strategy Impact}}$$

For each strategy, Table \ref{03:tab:numerical_simulation_ACER} presents the control intensity for each control, the total cost of interventions, and the respective impact on each infection and treatment classes. Based on these values, the ACER is computed in terms of $(i)$ total person-years of infection averted and $(ii)$ total person-years of treatment added. The total infection-related impact of a strategy is calculated by adding the averted person-years across all infected classes. To reflect the real-world cost implications in treatment impact, particularly in resource-limited settings, we assign a higher weight to first-line treatment in the ACER calculation for treatment coverage. This extra weight accounts for the significant cost difference between first- and second-line ART, where second-line ART can be 3 to 9 times more expensive depending on the regional economic state \cite{matt2022}. Consequently, the total impact of a strategy on added person-years of treatment is computed as the sum of 5 times the added person-years for $T_1$ (assumed for low- and lower-middle-income countries) and the added person-years for $T_2$. Based on this, the ACER for person-year of averted infection and added treatment are listed in Table \ref{03:tab:numerical_simulation_ACER}. 

The ACER values indicate that Strategy D is the most cost-efficient in reducing the infection burden, despite averting the fewest person-years of infection. The lower total implementation cost in this strategy contributes to its high cost-effectiveness. In contrast, Strategy B ranks lowest in cost-efficiency due to the high expenses associated with treatment-focused interventions. While Strategy B significantly reduces the infection burden, this high cost limits its cost-effectiveness. Also, Strategies E, A, and C rank second, third, and fourth, respectively, in terms of cost-effectiveness for infection reduction.

However, when the goal is to increase treatment coverage, Strategy C becomes the most cost-effective. This is because a large share of resources in Strategy C is allocated to adherence-related interventions, enhancing the effectiveness of first-line treatment and reducing the need for more expensive second-line treatment. Similarly, Strategy D, which also prioritizes early focus on in adherence improving interventions, proves cost-effective for improving treatment outcomes, though it is slightly costly then Strategy C. The dynamic control approach, which emphasizes first-line treatment more than Strategies A and B, also demonstrates better cost-efficiency compared to those strategies. Although Strategy B achieves the highest overall treatment coverage, the lack of adherence-enhancing interventions makes it the least cost-effective. Overall, these ACER values suggest that in settings where expanding treatment coverage is feasible, particularly in middle- and upper-middle-income countries, investing in adherence-improving interventions is a more cost-effective approach. Otherwise, a balanced allocation of resources across all intervention components results in a more sustainable and efficient approach.

\begin{table}[htbp] 
    \centering
    \footnotesize 
    \setlength{\tabcolsep}{3pt} 

    \begin{tabular}{@{}l *{5}{c} c *{6}{c} *{2}{c}@{}}
        \toprule
        \multirowthead{2}{Strategies} & 
        \multicolumn{5}{c}{\thead{Control Intensity}} &
        \multirowthead{2}{\makecell{Total \\ Cost}} &
        \multicolumn{6}{c}{\thead{Strategy Impact}} &
        \multicolumn{2}{c}{\thead{ACER}} \\ 

        \cmidrule(lr){2-6} \cmidrule(lr){8-13} \cmidrule(lr){14-15} 
        & \makecell{$u_{1}$} & \makecell{$u_{2}$} & \makecell{$u_{3}$} & \makecell{$u_{4}$} & \makecell{$u_{5}$} 
        & 
        & \makecell{$I_{SU}$} & \makecell{$I_{SD}$} & \makecell{$I_{RU}$} & \makecell{$I_{RD}$} & \makecell{$T_{1}$} & \makecell{$T_{2}$} 
        & \makecell{Infection} & \makecell{Treatment} \\ 
        \midrule

        Strategy A & 3.9842 & 7.3509 & 0.1322 & 0.3003 & 0.0247 & 46.6137 & 2.5841 & 4.0793 & 19.1606 & 0.1837 & 0.8931 & 42.0576 & 1.7923 & 1.0019\\
        Strategy B & 0.9636 & 2.3083 & 1.4431 & 3.9535 & 0.0203 & 86.2788 & 2.4519 & 4.2578 & 19.1076 & 0.6748 & 0.8828 & 43.1396 & 3.2568 & 1.8143\\
        Strategy C & 1.1014 & 0.2416 & 0.1663 & 0.0600 & 25.0000 & 54.8731 & 2.4760 & 4.1091 & 19.3202 & 0.7818 & 38.2447 & 4.7120 & 2.0562 & 0.2801\\
        Strategy D & 1.0069 & 2.1653 & 0.1434 & 0.2972 & 4.7432 & 27.2111 & 2.4556 & 4.0749 & 19.1341 & 0.2335 & 10.6799 & 31.8639 & 0.7089 & 0.3191\\
        Strategy E & 1.3043 & 2.4865 & 0.4899 & 1.2210 & 2.0251 & 41.0008 & 2.4878 & 4.1075 & 19.1107 & 0.3205 & 1.4350 & 41.6223 & 1.5754 & 0.8402\\
        \bottomrule
    \end{tabular}
\caption{Summary of control intensities, total costs, impacts on infection and treatment classes, and ACER values for all control strategies. 
ACER values are computed considering both infection reduction and treatment coverage expansion objectives.}\label{03:tab:numerical_simulation_ACER}
\end{table}

\subsection{Adjoint-based sensitivity analysis of state variables}{\label{03:subsec:numerical_simulation:adjoint_based_sensitivity_analysis}}
After identifying the most cost-effective control strategy, both in terms of averted infection burden and enhanced treatment coverage, the logical next step is to determine how to optimally allocate any additional resources to maximize public health benefits. This is relevant for particularly middle- and upper-middle income countries, where budgets may permit allocations of resources beyond the cost-effectiveness goals. To address this, we conducted an adjoint-based sensitivity analysis of each state variable with respect to each control variable. This analysis provides forward-looking insights into how small perturbations in control variables influence the cumulative population in each compartment. Controls with higher sensitivity indices, positive for the susceptible and treatment classes and negative for the infected and AIDS classes, are identified as potential priorities for additional investment. A detailed discussion of adjoint-based sensitivity analysis for differential-algebraic equations can be found in \cite{cao2003}.  

In this section, we analyze the sensitivities of state variables under Strategy C and Strategy D, as these strategies are the most cost-effective for enhancing treatment coverage and reducing infection burden, respectively. In this approach, we begin with solving the optimal control problem (\ref{03:eqn:control_functional})-(\ref{03:system:control}) for the chosen strategy to obtain the optimal control profile $u^{*}_{i}(t)$ and the corresponding state trajectories $x^{*}_{i}(t)$ over the time interval $[0,t_{f}]$. Subsequently, we formulate a new adjoint system for the state variable of interest as follows:
\begin{equation} \label{03:eqn:numerical_analysis_adjoint_sensitivity_analysis}
    \dfrac{d \sigma_{i}}{dt}=-\sum_{j=1}^{8} \sigma_{j} \dfrac{\partial f_{j}}{\partial x_i}-\dfrac{\partial \mathcal{X}}{\partial x_{i}}, \quad \sigma_{i}(t_f)=0
\end{equation}
Here, $f_{j}$ denotes the $j^{th}$ component of the right-hand side of system (\ref{03:system:control}) and $\sigma_{i}(t)$ is the adjoint variable that captures how small perturbations in a state variable $x_{i}(t)$ propagate through time to affect the state of interest $\mathcal{X}(t)$. The adjoint system is solved backward in time using the previously obtained optimal control profiles $u^{*}_{i}(t)$  and the corresponding state trajectories $x^{*}_{i}(t)$. The sensitivities of $\mathcal{X}(t)$ to a control variable $u_{k}$ is then computed using the adjoint variables $\sigma_{i}(t)$ and the system equations, as follows:
\begin{eqnarray} \label{03:eqn:numerical_analysis_sensitivities}
     \dfrac{\partial \mathcal{X}}{\partial u_{k}}&=&\int_{0}^{t_{f}} \sum_{i=1}^{8} \dfrac{\partial \mathcal{X}}{\partial x_{i}} \dfrac{\partial f_{i}}{\partial u_{k}} dt \nonumber \\
     &=& \int_{0}^{t_{f}} \sum_{i=1}^{8} \sigma_{i} \dfrac{\partial f_{i}}{\partial u_{k}} dt
\end{eqnarray}
Note that these sensitivities indicate the cumulative effect of small perturbations in the control $u_{k}(t)$ on the state of interest $\mathcal{X}(t)$.  This effect is mediated through the state variables $x_{i}(t)$, whose contributions are captured by the corresponding adjoint variables $\sigma_{i}(t)$ for $i=1,2,...,8.$ 

The adjoint-based sensitivity indices of each state variable with respect to the control variables for Strategy C and Strategy D are presented in Table \ref{03:tab:numerical_simulation_Adjoint_SA}. The relative significance of the control variables is almost similar across these two strategies. All control variables exhibit a positive impact on the susceptible population and a negative impact on the AIDS-infected population, with $u_{5}$ showing the highest influence compared to the other controls. The undiagnosed infected population with the sensitive strain is primarily negatively influenced by $u_{1}$, which governs the diagnosis rate. Further, $u_{3}$ also shows a small negative effect on $I_{SU}$, due to its strong negative impact on $I_{SD}$ population, thereby indirectly reducing the prevalence of the sensitive strain. Although $u_{1}$ tends to increase $I_{SD}$ by converting $I_{SU}$ to $I_{SD}$, this is overpowered by the strong negative influence of $u_{3}$ on $I_{SD}$. The controls $u_{2}$, $u_{4}$, and $u_{5}$ do not show any impact on both $I_{SU}$ and $I_{SD}$. Overall, after achieving the cost-effectiveness objective, control $u_{3}$ emerges as the most suitable target for further resource allocation to reduce the drug-sensitive infected population in both strategies

The first-line treatment coverage is most positively influenced by control $u_{5}$, which enhances adherence to treatment in both strategies. An Improved adherence helps infected individuals to continue with first-line treatment and reduces the emergence of drug resistance. Consequently, $u_{5}$ depicts a strong negative impact on the compartments $I_{RU}$, $I_{RD}$, and $T_{2}$. In contrast, $u_{2}$ and $u_{4}$ have negligible influence on these compartments compared to $u_{5}$. In Strategy C, controls $u_{1}$ and $u_{3}$ have no impact on the drug-resistant and second-line treatment compartments, as adherence-improving interventions (through $u_{5}$) are applied at their full capacity, and the resistant infections primarily arise through transmission. However, in Strategy D, adherence interventions are focused only in the early phase of the simulation. As a result, drug resistance can also emerge during treatment, making the compartments $I_{RU}$, $I_{RD}$, and $T_{2}$ sensitive to changes in $u_{1}$ and $u_{3}$. Nevertheless, this sensitivity remains insignificant when compared to the dominant influence of $u_{5}$. Overall, the drug-resistant infected and treatment-receiving populations are predominantly sensitive to control $u_{5}$.

\begin{table}[htbp] 
    \centering
    \footnotesize 
    \setlength{\tabcolsep}{3pt} 
    \begin{tabular}{@{}l *{5}{c} *{5}{c} *{5}{c} *{5}{c}@{}}
        \toprule
        \multirowthead{2}{State \\ Variables} & 
        \multicolumn{5}{c}{\thead{Strategy C}} &
        \multicolumn{5}{c}{\thead{Strategy D}} \\ 

        \cmidrule(lr){2-6} \cmidrule(lr){7-11} 
        & \makecell{$u_{1}$} & \makecell{$u_{2}$} & \makecell{$u_{3}$} & \makecell{$u_{4}$} & \makecell{$u_{5}$}  
        & \makecell{$u_{1}$} & \makecell{$u_{2}$} & \makecell{$u_{3}$} & \makecell{$u_{4}$} & \makecell{$u_{5}$}  \\ 
        \midrule
        $S$ & 0.5359 & 0.0179 & 0.3353 & 0.0731 & 27.1014 & 0.6158 & 0.0998 & 0.4373 & 0.2438 & 3.0096\\
        $I_{SU}$ & -1.0808 & 0 & -0.0844 & 0 & 0 & -1.3015 & 0 & -0.1706 & 0 & 0 \\
        $I_{SD}$ & 0.0825 & 0 & -2.0983 & 0 & 0 & 0.1934 & 0 & -3.6801 & 0 & 0 \\
        $T_{1}$ & 1.3701 & 0 & 3.9012 & 0 & 409.6203 & 0.4349 & 0 & 1.9751 & 0 & 34.0293 \\
        $I_{RU}$ & 0 & -0.0679 & 0 & -0.0125 & -156.1592 & 0.0073 & -0.4012 & 0.4274 & -0.0104 & -9.7815 \\
        $I_{RD}$ & 0 & -0.0143 & 0 & -1.1285 & -63.9923 & 0.0215 & -0.0223 & 0.4016 & -4.0039 & -3.6434 & \\
        $T_{2}$ & 0 & 0.1489 & 0 & 2.1617 & -104.9142 & 1.0426 & 0.7183 & 3.3622 & 7.3432 & -9.5087 \\
        $A$ & -0.1136 & -0.0076 & -0.2043 & -0.1005 & -20.6713 & -0.1283 & -0.0424 & -0.3142 & -0.3717 & -1.1643 \\
        \bottomrule
    \end{tabular}
\caption{Adjoint-based sensitivity indices of state variables with respect to each control variable under Strategy C and Strategy D. }\label{03:tab:numerical_simulation_Adjoint_SA}
\end{table}

\begin{remark}
Note that we have used $\dfrac{\partial f_{i}}{\partial u_{k}}$ as an approximation of $\dfrac{\partial x_{i}}{\partial u_{k}}$ while applying the chain rule in (\ref{03:eqn:numerical_analysis_sensitivities}). This is because the control $u_{k}$ directly modifies the rate of change of the state $x_{i}$. Thus, $\dfrac{\partial f_{i}}{\partial u_{k}}$ captures the direct and instantaneous effect of $u_{k}$ on $x_{i}$.
\end{remark}

\subsection{Control contribution analysis}{\label{03:subsec:numerical_simulation:control_contribution}}
In the previous sections, cost-effectiveness and adjoint-based sensitivity analyses helped identify the most efficient strategies and key controls under cost-constrained settings. Although these analyses offer critical insights into the relative efficiency and influence of different strategies, they do not fully capture the individual and synergistic contributions of each control variable. In this section, we quantify the contribution of each control within a given strategy to better understand its specific role, both direct and indirect, in achieving the overall public health objectives, particularly in scenarios where the cost of intervention is not a limiting factor. This analysis provides a broader and more actionable perspective for public health decision-makers, allowing informed allocation of resources.

We use the Shapley Value analysis \cite{lipo2001, lund2017, cava2021}, a tool from cooperative game theory, to achieve this objective. In game theory, the Shapley value analysis provides a systematic approach to fairly distribute the total outcome of a cooperative game among its players based on their individual marginal impacts. In the context of this study, the `players' are the five control variables, and the `outcome of the game' is the change observed in the state variables with and without the application of these controls. Shapley values quantify the average marginal contribution of each control to a given state variable, calculated across all possible combinations of the remaining controls. This ensures that each control's contribution is evaluated not only in isolation but also in the context of its interactions with all possible subsets of the remaining controls, providing a comprehensive idea about how the effectiveness of one control depends on others.

The calculation of Shapley values involves a combinatorial process that considers all possible coalitions of control variables. For a specific control $u_i$ within a set of $n$ controls $(U)$, its Shapley value $\phi(u_i)$, corresponding to the given state variable, is computed as the weighted average of its marginal contributions across all subsets $S$ of $U$ that exclude $u_i$. Mathematically, this is expressed as: 
$$\phi(u_i) = \sum_{S \subseteq U \setminus {u_i}} \frac{|S|! (n - |S| - 1)!}{n!} \left( v(S \cup {u_i}) - v(S) \right),$$  
where $v(S)$ denotes the value related to the given state variable under the coalition $S$, and the term $v(S \cup {u_i}) - v(S)$ represents the marginal contribution of control $u_i$ when added to coalition $S$. The weighting factor $\displaystyle{\frac{|S|! (n - |S| - 1)!}{n!}}$ ensures fairness by accounting for all possible positions that control $u_{i}$ could occupy in a permutation of the full set of controls. This normalizes the contributions based on coalition size and ensures that each subset is weighted appropriately.  In this study, with five control variables, the process involves evaluating all $2^5 = 32$ possible control combinations, making it computationally feasible while providing a robust and precise attribution of each control’s impact on the state variables. 

We conducted the Shapley value analysis for Strategy C and Strategy D, given their significance in terms of cost-effectiveness. The objective of this analysis is to quantify the contribution of each control variable to the observed person-year differences in each state variable, calculated with and without the implementation of the respective control strategies. As described in Section \ref{03:subsec:numerical_simulation:costeffectiveness_analysis}, these person-year differences represent the overall impact of a control strategy on disease dynamics. The person-years for each state variables under different control combinations are summarized in Table \ref{03:tab:appendix:person_years_strategy_C} for strategy C and in Table \ref{03:tab:appendix:person_years_strategy_D} for strategy D (see \ref{03:appendix:sec:control_contribution}). Tables \ref{03:tab:numerical_simulation_shapley_values_strategy_C} and \ref{03:tab:numerical_simulation_shapley_values_strategy_D} present the Shapley value-based decomposition of control contributions to person-years in each state variable under Strategy C and Strategy D, respectively. The sign of the Shapley value indicates the direction of influence, while the percentage contribution, calculated using the absolute values, reflects the relative magnitude of each control’s contribution. The detailed procedure for computing these Shapley values is outlined in Algorithm \ref{03:alg:numerical_simulation_control_contribution_shapley_values}.

\begin{table}[htbp]
\centering
\small
\begin{tabular}{@{}l@{\hspace{3pt}}c@{\hspace{3pt}}c@{\hspace{3pt}}c@{\hspace{3pt}}c@{\hspace{3pt}}c@{\hspace{3pt}}c@{\hspace{3pt}}c@{\hspace{3pt}}c@{\hspace{3pt}}c@{\hspace{3pt}}c@{}}
\toprule
\multirowthead{2}{\makecell{\textbf{State} \\ \textbf{Variables}}} & \multicolumn{2}{c}{\textbf{$u_1$}} & \multicolumn{2}{c}{\textbf{$u_2$}} & \multicolumn{2}{c}{\textbf{$u_3$}} & \multicolumn{2}{c}{\textbf{$u_4$}} & \multicolumn{2}{c}{\textbf{$u_5$}} \\
\cmidrule(lr){2-3} \cmidrule(lr){4-5} \cmidrule(lr){6-7} \cmidrule(lr){8-9} \cmidrule(lr){10-11}
 & \makecell{\textbf{Shapley} \\ \textbf{Value}} & \makecell{\textbf{Contribution} \\ \textbf{(\%)}} & \makecell{\textbf{Shapley} \\ \textbf{Value}} & \makecell{\textbf{Contribution} \\ \textbf{(\%)}} & \makecell{\textbf{Shapley} \\ \textbf{Value}}  & \makecell{\textbf{Contribution} \\ \textbf{(\%)}} & \makecell{\textbf{Shapley} \\ \textbf{Value}}  & \makecell{\textbf{Contribution} \\ \textbf{(\%)}} & \makecell{\textbf{Shapley} \\ \textbf{Value}}  & \makecell{\textbf{Contribution} \\ \textbf{(\%)}} \\
\midrule
$S$ & 0.9267 & 14.78 & 2.1131 & 33.69 & 0.5175 & 8.25 & 0.3234 & 5.16 & 2.3913 & 38.13 \\
$I_{SU}$ & -2.1323 & 86.12 & -0.0272 & 1.10 & -0.1977 & 7.99 & -0.0273 & 1.10 & -0.0915 & 3.70 \\
$I_{SD}$ & 1.1018 & 17.45 & -0.4277 & 6.78 & -3.0734 & 48.69 & -0.4280 & 6.78 & -1.2818 & 20.30 \\
$T_1$ & 0.9134 & 2.39 & 0.0490 & 0.13 & 4.9313 & 12.89 & 0.0486 & 0.13 & 32.3024 & 84.46 \\
$I_{RU}$ & 0.0004 & 0.00 & -10.5257 & 54.45 & 0.0044 & 0.02 & -0.0237 & 0.12 & -8.7754 & 45.40 \\
$I_{RD}$ & 0.0021 & 0.02 & 6.2274 & 46.97 & 0.0097 & 0.07 & -4.2249 & 31.86 & -2.7950 & 21.08 \\
$T_2$ & 0.1218 & 0.44 & 6.8865 & 24.89 & 0.7438 & 2.69 & 8.4366 & 30.49 & -11.4794 & 41.49 \\
$A$ & -0.1352 & 5.16 & -0.5265 & 20.10 & -0.3185 & 12.16 & -0.4418 & 16.87 & -1.1976 & 45.71 \\
\bottomrule
\end{tabular}
\caption{Shapley values for control contribution to person-years in each state variable for Strategy C}
\label{03:tab:numerical_simulation_shapley_values_strategy_C}
\end{table}

\begin{table}[htbp]
\centering
\small
\begin{tabular}{@{}l@{\hspace{3pt}}c@{\hspace{3pt}}c@{\hspace{3pt}}c@{\hspace{3pt}}c@{\hspace{3pt}}c@{\hspace{3pt}}c@{\hspace{3pt}}c@{\hspace{3pt}}c@{\hspace{3pt}}c@{\hspace{3pt}}c@{}}
\toprule
\multirowthead{2}{\makecell{\textbf{State} \\ \textbf{Variables}}} & \multicolumn{2}{c}{\textbf{$u_1$}} & \multicolumn{2}{c}{\textbf{$u_2$}} & \multicolumn{2}{c}{\textbf{$u_3$}} & \multicolumn{2}{c}{\textbf{$u_4$}} & \multicolumn{2}{c}{\textbf{$u_5$}} \\
\cmidrule(lr){2-3} \cmidrule(lr){4-5} \cmidrule(lr){6-7} \cmidrule(lr){8-9} \cmidrule(lr){10-11}
 & \makecell{\textbf{Shapley} \\ \textbf{Value}} & \makecell{\textbf{Contribution} \\ \textbf{(\%)}} & \makecell{\textbf{Shapley} \\ \textbf{Value}} & \makecell{\textbf{Contribution} \\ \textbf{(\%)}} & \makecell{\textbf{Shapley} \\ \textbf{Value}}  & \makecell{\textbf{Contribution} \\ \textbf{(\%)}} & \makecell{\textbf{Shapley} \\ \textbf{Value}}  & \makecell{\textbf{Contribution} \\ \textbf{(\%)}} & \makecell{\textbf{Shapley} \\ \textbf{Value}}  & \makecell{\textbf{Contribution} \\ \textbf{(\%)}} \\
\midrule
$S$ & 0.9241 & 14.97 & 2.0822 & 33.74 & 0.5127 & 8.31 & 0.2925 & 4.74 & 2.3605 & 38.25 \\
$I_{SU}$ & -2.1285 & 86.68 & -0.0231 & 0.94 & -0.1936 & 7.88 & -0.0231 & 0.94 & -0.0873 & 3.56 \\
$I_{SD}$ & 1.1048 & 17.58 & -0.4199 & 6.68 & -3.0656 & 48.78 & -0.4202 & 6.69 & -1.2740 & 20.27 \\
$T_1$ & 0.7969 & 1.69 & -9.1105 & 19.33 & 4.9630 & 10.53 & -9.1103 & 19.33 & 23.1408 & 49.11 \\
$I_{RU}$ & 0.0017 & 0.01 & -10.4666 & 54.44 & 0.0097 & 0.05 & 0.0350 & 0.18 & -8.7139 & 45.32 \\
$I_{RD}$ & 0.0055 & 0.04 & 6.4014 & 48.83 & 0.0316 & 0.24 & -4.0510 & 30.90 & -2.6210 & 19.99 \\
$T_2$ & 0.2274 & 0.62 & 15.9301 & 43.36 & 0.6621 & 1.80 & 17.4802 & 47.58 & -2.4358 & 6.63 \\
$A$ & -0.1350 & 5.12 & -0.5328 & 20.19 & -0.3184 & 12.07 & -0.4482 & 16.99 & -1.2040 & 45.64 \\
\bottomrule
\end{tabular}
\caption{Shapley values for control contribution to person-years in each state variable for Strategy D}
\label{03:tab:numerical_simulation_shapley_values_strategy_D}
\end{table}

For Strategy C, which prioritizes adherence-focused interventions, Table \ref{03:tab:numerical_simulation_shapley_values_strategy_C} highlights major contributions from control $u_{5}$ across multiple state variables, especially in the treatment compartments.  The Shapley values for $T_1$ and $T_2$ associated with $u_5$ are significantly higher (32.30 and –11.48, respectively) compared to other controls. This indicates that strong adherence support significantly influences the accumulation of person-years in treatment stages, enhancing retention in first-line therapy while reducing person-years in second-line treatment due to fewer treatment failures or transitions. In contrast, Strategy D, the balanced control approach, distributes the contribution more evenly across multiple controls. Here, although $u_5$ still plays a key role, particularly for $T_1$ (49.11\%) in negative direction, the influence of $u_2$ becomes significantly more pronounced for $T_2$ (43.36\%) and $I_{RU}$ (54.44\%), as compared to Strategy C. In addition, the contribution of $u_4$ also increases significantly for both $T_{1}$ and $T_{2}$. This redistribution indicates that under a balanced strategy, interventions targeting both adherence support to first-line treatment and resistance management collectively shape the epidemiological results. The switch from a high negative contribution of $u_5$ in $T_2$ under Strategy C (–11.48) to a comparatively small negative value (–2.44) under Strategy D further supports the idea that spreading the control effort reduces the intensity of adherence-related interventions,  consequently increasing the demand for second-line treatment. Furthermore, in both strategies, the primary control affecting $I_{SU}$ is $u_{1}$, while $I_{SD}$ is mainly influenced by $u_{3}$. Whereas, control $u_{2}$ plays a major role in shaping the resistant compartments, $I_{RU}$ and $I_{RD}$.

The synergetic effects arising from the interaction of multiple control variables across state compartments are also effectively captured by these Shapley values. For instance, controls $u_2$ and $u_4$, which target the diagnosis and treatment of drug-resistant infections, also contribute to reducing the burden of sensitive infections. This occurs because effective management of the resistant pool indirectly suppresses the overall transmission dynamics. As resistant and sensitive strains often compete for the same susceptible population, a reduction in the resistant strain can lower the overall force of infection, which benefits the control of sensitive infections. In contrast, controls $u_1$ and $u_3$ exhibit a positive contribution to the resistant infected populations, suggesting that the forward flow of individuals through the compartments, such as progression from undiagnosed to diagnosed and then to treatment failure, dominates over any competitive suppression between sensitive and resistant strains. Although these synergetic effects are relatively small, they can significantly enhance the overall effectiveness of intervention strategies. In addition, the combined implementation of $u_2$ and $u_4$ significantly reduces the drug resistance burden, compared to their individual effect. Overall, both strategies highlight the dominant influence of control $u_5$, which remains the primary driver in shaping infection dynamics.

\begin{algorithm}
\caption{Control Contribution Analysis using Shapley Values}
\label{03:alg:numerical_simulation_control_contribution_shapley_values}
\begin{algorithmic}[1]
\REQUIRE Simulation time $t_f$, model parameters $\theta$, initial conditions $y_0$, strategy weights $W$, control bounds $U_b = [0, 1]^5$, convergence tolerance $\epsilon$, maximum iterations $k_{\text{max}}$, relaxation parameter $\alpha$
\STATE Initialize: Generate all subsets $S$ of the control set $U = \{u_{1},u_{2}, u_{3},u_{4}, u_{5}\}$, i.e., the power set $\mathcal{P}(U)$.
\FOR{each subset $S \in \mathcal{P}(N)$}
    \STATE Set controls in $U \setminus S$ to 0.
    \STATE Solve the optimal control problem from 0 to $t_f$ using backward-forward sweep:
    \FOR{$k = 1$ to $k_{\text{max}}$}
        \STATE Solve state system $\dot{y} = f(y, u^{(k-1)}, \theta)$ from $0$ to $t_f$ with $y(0)=y_{0}$
        \STATE Solve adjoint system $\dot{\sigma} = -\nabla_y H(y, u^{(k-1)}, \sigma, \theta, W)$ from $t_f$ to $0$ with $\sigma(t_f)=0$
        \STATE Compute $u^{\text{new}}$ using $\frac{\partial H}{\partial u_i} = 0$ for $i = 1, \dots, 5$
        \STATE Update $u^{(k)} \gets (1 - \alpha) u^{(k-1)} + \alpha u^{\text{new}}$
        \IF{$\max_i \| u_i^{(k)} - u_i^{(k-1)} \|_\infty < \epsilon$}
            \STATE Set $u^* \gets u^{(k)}$
            \STATE \textbf{break}
        \ENDIF
    \ENDFOR
    \STATE Simulate the model over $[0, t_f]$ to obtain state trajectories $y^{(j)}_S(t)$ for $j=1, \dots ,8$.
    \STATE Compute person-years for each state variable $y^{(j)}_S(t)$:
    \[
    v_S^{(j)} = \int_{0}^{t_f} y^{(j)}_S(t) dt, \quad j = 1, \ldots, 8.
    \]
\ENDFOR
\FOR{each control $u_{i} \in U$ and state variable $y^{(j)}$}
    \STATE Calculate the Shapley value:
    \[
    \phi^{(j)}(u_{i}) = \sum_{S \subseteq U \setminus \{u_{i}\}} \frac{|S|! (n - |S| - 1)!}{n!} \left( v_{S \cup \{u_{i}\}}^{(j)} - v_S^{(j)} \right), \quad j=1, \dots ,8; ~n = 5. 
    \]
\ENDFOR
\RETURN The Shapley values $\phi^{(j)}(u_{i})$ for all controls $u_i \in U$ and state variables $y^{(j)} \in y$.
\end{algorithmic}
\end{algorithm}

\section{Conclusion}{\label{03:sec:conclusion}}

Early diagnosis of HIV followed by timely initiation of the treatment is critical for reducing infection burden and its onward transmission. Over the past few decades, ART has played a pivotal role in controlling HIV by reducing viral load and improving the quality of life of infected individuals. However, the full benefits of ART are contingent on patient's adherence level to treatment. A well-established paradox in the public health domain is: should treatment be universally provided despite concerns about sub-optimal adherence, or should resources be strategically allocated to achieve a balance between expanding treatment coverage and strengthening adherence support? Sub-optimal adherence to ART not only compromises individual health outcomes but also accelerates the emergence and transmission of drug-resistant HIV strains. In such cases, a switch to second-line treatment becomes necessary after the diagnosis of resistance development. However, this transition is often challenged by diagnostic delays, limited availability, and higher associated costs. 

In this study, we developed a compartmental model to explore the transmission dynamics of HIV, incorporating both drug-sensitive and drug-resistant strains, diagnosis status, and treatment switching from first-line to second-line therapy upon resistance detection. The proposed model admits one disease-free equilibrium and two endemic equilibria, where the drug-sensitive strain may be eliminated, while the drug-resistant strain persists at a positive level. Using the next-generation matrix method, we derived the basic reproduction numbers corresponding to both the drug-sensitive and drug-resistant strains. The existence and stability conditions of these equilibrium points highlight the critical role of these basic reproduction numbers in deciding the long-term dynamics of the system. Additionally, the relative transmission dynamics of the two strains significantly influences the outcome: higher transmission rates of the sensitive strain or relatively lower rates for the resistant strain can create the possibility for coexistence of both strains, even at a higher treatment coverage. 

A detailed sensitivity analysis has been conducted for the proposed model in order to identify the most influential parameters in the short-term and long-term dynamics. The PRCC based sensitivity analysis of basic reproduction numbers suggests that the transmission rates of both strains are main governing factors in the early stage of the infection. Time-varying sensitivity analysis was performed to identify the key parameters influencing model dynamics across different phases of disease progression, offering valuable insights for designing targeted intervention strategies. We employed the Sobol method to compute both first-order and total-order sensitivity indices. A significant difference observed between the first- and total-order indices for most parameters highlights the strong influence of parameter interactions on the total number of infections. The direct impact of individual parameters becomes minimal after a certain time, and the variance in model outcomes becomes mainly driven by complex interactions among parameters.

We extended the proposed modeling framework by incorporating optimal control theory to evaluate a set of intervention strategies aimed at improving diagnosis, treatment coverage, and adherence to ART optimally. Instead of using binary control activation, we introduced a weight-varying mechanism in the cost functional to differentiate between various intervention strategies in a more realistic manner. We proposed five different control strategies targeting key epidemiological components: (A) diagnosis focused, (B) treatment focused, (C) adherence focused, (D) a balanced approach, and (E) a dynamic optimization based approach. Among these, Strategy C, centered on improving adherence levels, produced the highest reduction in infection burden, particularly in the drug-resistant population, showing the crucial role of optimal adherence in controlling HIV transmission. Notably, a higher initiation rate of second-line treatment was supported across all strategies except Strategy C. In this case, the first-line treatment remained more effective due to the optimal allocation of resources toward adherence improvement, reducing the need to switch to second-line therapy. Despite this, Strategy C outperformed all other strategies in averting total infection cases. In contrast, Strategy D with the balanced approach proved to be the least effective in both infection reduction and increasing treatment coverage, suggesting that distributing resources evenly may reduce the potential impact of targeted interventions. Strategy B demonstrated the highest improvement in treatment coverage, although it was less effective in reducing overall infections compared to adherence-focused approach. This indicates that if expanding treatment access is the primary goal, Strategy B offers a more effective pathway, provided sufficient resources are available. Also, the dynamic control strategy shows that the 95-95-95 global targets can be achieved within five years through optimally timed and weighted resource allocation. Although these targets are met by the end of the third year, maintaining them requires sustained and judicious investment in all three aspects, which are diagnosis, treatment initiation, and adherence. This highlights the need for dynamic, adaptable control policies to ensure long-term success in HIV management.

Efficient resource allocation is essential for the effective control and elimination of infectious diseases, particularly in resource-constrained settings. To evaluate the economic viability of the proposed intervention strategies, we performed a cost-effectiveness analysis using two key metrics: total averted person-years of infection and additional person-years of treatment provided. Strategy D emerged as the most cost-efficient in reducing infection burden, despite averting the fewest person-years of infection, while Strategy B ranked lowest due to the high costs associated with treatment-focused interventions. Strategies E, A, and C rank second, third and fourth, respectively, in terms of cost-effectiveness for infection reduction. When the objective shifts to expanding treatment coverage, Strategy C emerged as the most cost-effective option. This is attributed to its strategic allocation of resources towards adherence-focused interventions, which improves the efficacy of first-line therapy and reduces the need for more expensive second-line treatment. Strategy D is cost-effective in improving treatment outcomes, although it is slightly more expensive than Strategy C. The dynamic control strategy, which favors first-line treatment more than Strategies A and B, also demonstrated superior cost-efficiency compared to those strategies. In contrast, despite achieving the highest treatment coverage, Strategy B remained the least cost-effective due to its lack of adherence-enhancing components.

Beyond cost-effectiveness goals, we conducted an adjoint-based sensitivity analysis to identify optimal directions for additional resource allocation, with a focus on middle- and upper-middle-income countries. This analysis was applied to Strategies C and D, as these were the most cost-effective strategies, and examined the sensitivity of each state variable with respect to each control variable. Results indicate that increasing first-line treatment coverage is the most impactful intervention to reduce the drug-sensitive infected population. Additionally, the dynamics of drug-resistant infections and treatment-receiving populations were found to be highly sensitive to adherence-enhancing interventions. These findings suggest that, when considering broader public health objectives beyond cost constraints, prioritizing first-line treatment and adherence support produces the highest returns for improving public health outcomes. To better understand the individual and combined contributions of each control within a given strategy, particularly in settings where intervention costs are not a constraint, we used Shapley value analysis from cooperative game theory. This approach quantified both direct and synergistic effects of each control on achieving public health goals. The analysis revealed that controls targeting the diagnosis and treatment of drug-resistant infections also contributed to reducing the burden of drug-sensitive cases by indirectly suppressing overall transmission dynamics. However, some controls exhibited unintended amplification of resistance due to forward flow through compartments. The joint implementation of targeted interventions proved more effective than isolated efforts. Among all controls, adherence-focused intervention consistently emerged as the most influential in shaping long-term epidemic outcomes.

In conclusion, our findings provide valuable insights into the intricate relationship between treatment adherence, the emergence of drug resistance, and the effectiveness of control interventions in shaping the dynamics of the HIV epidemic. While multiple targeted interventions can collectively reduce the burden of both drug-sensitive and drug-resistant HIV strains, the long-term success of epidemic control critically depends on prioritizing adherence-focused strategies alongside efforts to expand first-line treatment coverage. In settings where expanding treatment coverage is feasible, particularly in middle- and upper-middle-income countries, investing in adherence-enhancing interventions emerges as a more cost-effective strategy. Where such expansion is limited, a balanced allocation of resources across all intervention components provides a more sustainable and efficient alternative. 

\appendix
\renewcommand{\thesection}{Appendix \Alph{section}}
\renewcommand{\theequation}{\Alph{section}.\arabic{equation}}

\section{Convergence of PRCC values}
\label{03:appendix:sec:convergence_analysis}
\setcounter{equation}{0}
Figure \ref{03:fig:gsa_prcc_convergence} presents the convergence analysis of PRCC values for the global sensitivity analysis of the basic reproduction numbers $R_{0}^{(S)}$, $R_{0}^{(R)}$, and $R_{0}^{(SR)}$. For each outcome, the figure shows the three most influential parameters based on their absolute PRCC values across increasing sample sizes. Solid lines with circular markers represent the PRCC estimates, while dashed lines with square markers represent the mean Sequential Relative Change Index (RCI), which quantifies the relative change in PRCC values between consecutive sample sizes. The sample size increases in increments of 500. The RCI is computed as
\begin{equation*}
\text{RCI} = 
\displaystyle \left| \frac{\text{PRCC}_{\text{current}} - \text{PRCC}_{\text{previous}}}{\text{PRCC}_{\text{previous}}} \right| 
\end{equation*}
This metric serves as a convergence criterion, with RCI values below 0.05 typically indicating stabilization of sensitivity estimates. Shaded regions represent $95\%$ confidence intervals, obtained from bootstrap resampling $(n = 2,000)$. This analysis demonstrates that a sample size of approximately 5,000 or more is required to achieve convergence, beyond which both PRCC values and their confidence intervals stabilize. 

\begin{figure}
    \centering
    \includegraphics[width=1\linewidth]{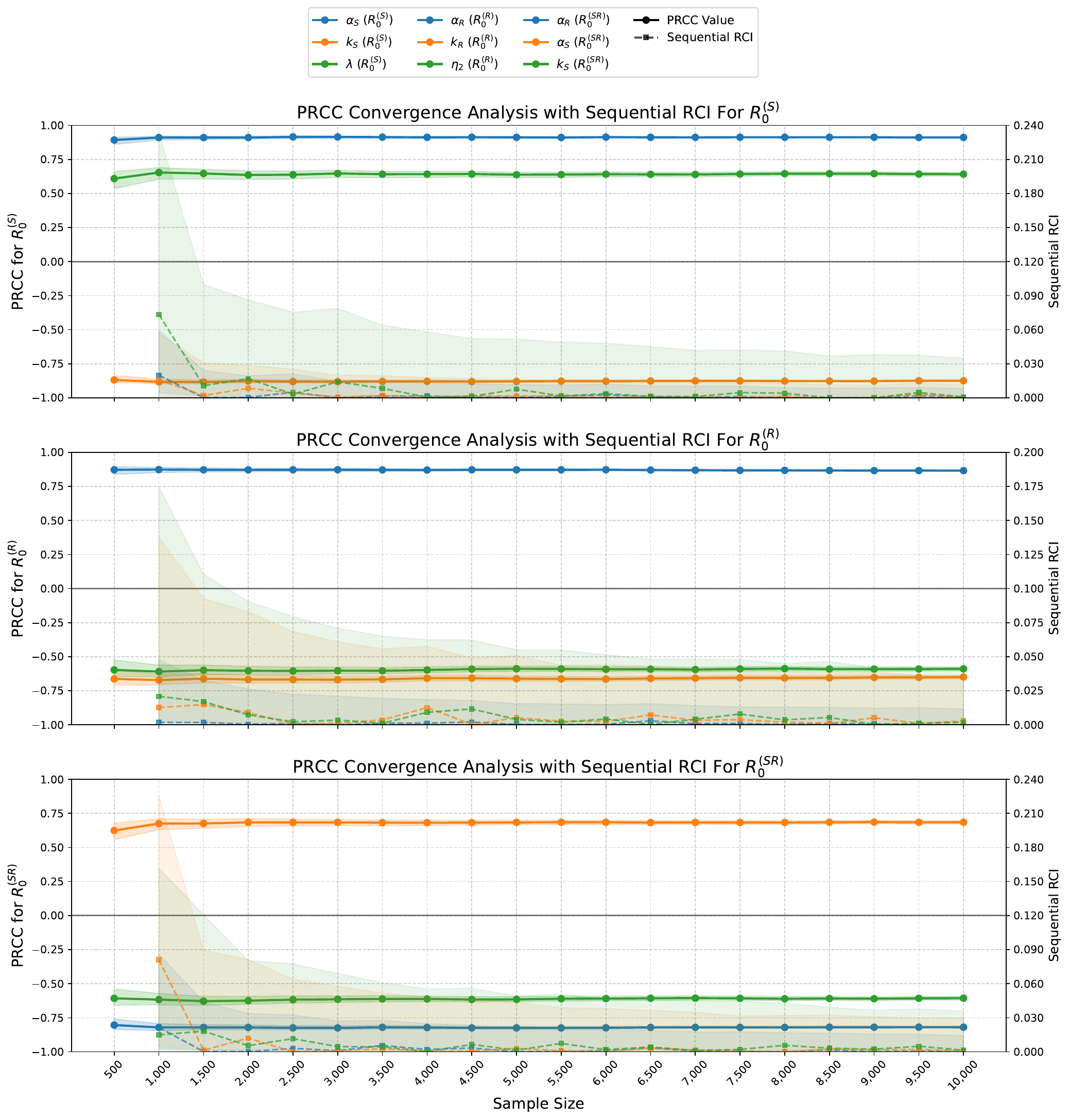}
    \caption{Convergence analysis illustrating the stability of PRCC values with increasing sample size for the three most influential parameters affecting each reproduction number. Solid lines with circle represent mean PRCC values, while dashed lines with squares show mean Sequential RCI, quantifying the relative change in PRCC estimates between consecutive sample sizes. The shaded region indicates 95\% CI derived from 2000 bootstrap resamples at each sample size. }
    \label{03:fig:gsa_prcc_convergence}
\end{figure}

\section{Optimal control analysis}{\label{03:appendix:sec:optimal_control_analysis}}
\textbf{Proof of Theorem \ref{03:thm:existence_optimal_control}:}
\begin{proof}
We rewrite the controlled state system (\ref{03:system:control}) as follows:
\begin{equation}
    G(X)=CX+F(X)
\end{equation}
where $G(X)=\begin{bmatrix}
    S^{\prime}\\
    I^{\prime}_{SU}\\
    I^{\prime}_{SD}\\
    T^{\prime}_{1}\\
    I^{\prime}_{RU}\\
    I^{\prime}_{RD}\\
    T^{\prime}_{2}\\
    A^{\prime}
\end{bmatrix}, $ 
$X=\begin{bmatrix}
    S\\
    I_{SU}\\
    I_{SD}\\
    T_{1}\\
    I_{RU}\\
    I_{RD}\\
    T_{2}\\
    A
\end{bmatrix},$
$F(X)=\begin{bmatrix}
    \lambda-\alpha_S S(I_{SU}+c I_{SD})-\alpha_R S (I_{RU}+ c I_{RD})\\
    \alpha_S S(I_{SU}+c I_{SD})\\
    0\\
    0\\
    \alpha_R S (I_{RU}+ c I_{RD})\\
    0\\
    0\\
    0
\end{bmatrix},$\\
and 
$$\resizebox{\textwidth}{!}{$C= \begin{bmatrix}
    -\mu & 0 & 0 & 0 & 0 & 0 & 0 & 0\\
     0   & -(k_{S_{max}} u_{1}+\theta_{1}+\mu) & 0 & 0 & 0 & 0 & 0 & 0\\
     0   & k_{S_{max}} u_{1} & -(\beta_{S} u_{3}+\theta_{1}(1-u_3)+\mu) & 0 & 0 & 0 & 0 & 0\\
     0   & 0 & \beta_{S} u_{3} & -(\gamma(1-u_5)+\theta_2 u_5+\mu) & 0 & 0 & 0 & 0\\
     0   & 0 & 0 & \gamma(1-u_5) & -(k_{R_{max}}u_2+\theta_1+\mu) & 0 & 0 & 0\\
     0   & 0 & 0 & 0 & k_{R_{max}} u_2 & -(\beta_{R} u_4+\theta_1(1-u_4)+\mu) & 0 & 0\\
     0   & 0 & 0 & 0 & 0 & \beta_{R} u_4 & -(\theta_2+\mu) & 0\\
     0   & \theta_1 & \theta_{1}(1-u_3) & \theta_2 u_5 & \theta_1 & \theta_1(1-u_4) & \theta_2 & -(\mu+\mu_d)
\end{bmatrix}$}.$$
Since the solutions of the system (\ref{03:system:control}) remain bounded for each bounded control variable in $\mathcal{U}$, applying the triangular inequality will give us
$$|F(X_{1})-F(X_{2})|\leq p_1 |S_1-S_2|+p_2|I_{SU_{1}}-I_{SU_{2}}| + p_3|I_{SD_{1}}-I_{SD_{2}}|
+ p_4|I_{RU_{1}}-I_{RU_{2}}| + p_5|I_{RD_{1}}-I_{RD_{2}}|,$$
where $p_{i}>0~(i=1,2,3,4,5)$ are constants. Consequently,
\begin{eqnarray}
     |G(X_{1})-G(X_{2})| & \leq & C|X_{1}-X_{2}|+|F(X_{1})-F(X_{2})| \nonumber \\
    & \leq & p|X_{1}-X_{2}|< \infty, \nonumber
\end{eqnarray}
where $p=||C||+\sum_{i=1}^{5}p_{i} < \infty.$ Therefore, the right-hand side functions of the control system (\ref{03:system:control}) are uniformly Lipschitz continuous. Consequently, by applying the Picard-Lindel\"of theorem  \cite{codd1956}, the system (\ref{03:system:control}) admits a unique solution, ensuring that condition (i) is satisfied.

Condition (ii) holds trivially, as the control set $\mathcal{U}$ is defined to be closed and convex. Also, the system (\ref{03:system:control}) is linear with respect to each control variable, with coefficients that depend only on the state variables. 

For condition (iii), the integrand L is quadratic in nature with respect to each control variable, and hence, it is convex. Further, since $T_{1}$ and $T_{2}$ are bounded above, there exists a maximum value $M$ such that $0\leq W_{5} T_{1}+ W_{6} T_{2} \leq M.$ Therefore, 
\begin{eqnarray}
    L(I_{SU}, I_{SD}, I_{RU}, I_{RD}, T_{1}, T_{2}, u_{1}, u_{2}, u_{3}, u_{4}, u_{5}) &=& W_{1} I_{SU} + W_{2} I_{SD} + W_{3} I_{RU} + W_{4} I_{RD} - W_{5} T_{1} - W_{6} T_{2} \nonumber \\&& + \frac{1}{2} \sum_{i=1}^{5}w_{i} u_{i}^{2}\nonumber \\
    &\geq & -M + \frac{1}{2} \sum_{i=1}^{5}w_{i} u_{i}^{2} \nonumber \\
    & \geq & -M + w_{m}\left(\sum_{i=1}^{5} u_{i}^{2}  \right), \nonumber
\end{eqnarray}
where $\displaystyle{w_{m}=\frac{1}{2}\min\{w_{1}, w_{2}, w_{3}, w_{4}, w_{5}\}}$. Let $\displaystyle{g(u_{1}, u_{2}, u_{3}, u_{4}, u_{5})=-M + w_{m}\left(\sum_{i=1}^{5} u_{i}^{2} \right)}$, which is continuous and satisfies $|(u_{1}, u_{2}, u_{3}, u_{4}, u_{5})|^{-1} g(u_{1}, u_{2}, u_{3}, u_{4}, u_{5}) \rightarrow \infty$ whenever $|(u_{1}, u_{2}, u_{3}, u_{4}, u_{5})| \rightarrow \infty$, since the negative term consisting $M$ vanishes while the positive term increases without bound. Therefore, the condition (iii) is also holds, ensuring the existence of an optimal control solution for the control system (\ref{03:system:control}).

\end{proof}

\section{Control contribution analysis}{\label{03:appendix:sec:control_contribution}}
Tables \ref{03:tab:appendix:person_years_strategy_C} and \ref{03:tab:appendix:person_years_strategy_D} present the cumulative person-years spent in each state variable under all control combinations for Strategy C and Strategy D, respectively. These person-year distributions serve as the basis for computing the marginal contributions of each control variable. These contributions are subsequently used to calculate the Shapley values, which provide a measure of the individual and synergistic impact of each control on the model outcomes.

\begin{table}[htbp]
\centering
\small
\begin{tabular}{@{}l@{\hspace{5pt}}c@{\hspace{9pt}}c@{\hspace{9pt}}c@{\hspace{9pt}}c@{\hspace{9pt}}c@{\hspace{9pt}}c@{\hspace{9pt}}c@{\hspace{9pt}}c@{}}
\toprule
\textbf{Control Combination} & \textbf{S} & \textbf{I$_{SU}$} & \textbf{I$_{SD}$} & \textbf{T$_1$} & \textbf{I$_{RU}$} & \textbf{I$_{RD}$} & \textbf{T$_2$} & \textbf{A} \\
\midrule
$(0,0,0,0,0)$ & 21806.0240 & 2.7202 & 4.3542 & 1.7988 & 19.3376 & 0.8708 & 3.2637 & 5.9055 \\
$({u_1,0,0,0,0})$ & 21806.8647 & 0.5731 & 6.0694 & 1.7988 & 19.3377 & 0.8708 & 3.2637 & 5.8290 \\
$({0,u_2,0,0,0})$ & 21809.3619 & 2.7204 & 4.3542 & 1.7988 & 0.2398 & 18.3332 & 3.2637 & 5.6092 \\
$({0,0,u_3,0,0})$ & 21806.0238 & 2.7199 & 4.3506 & 1.7993 & 19.3414 & 0.8708 & 3.2637 & 5.9055 \\
$({0,0,0,u_4,0})$ & 21806.0778 & 2.7202 & 4.3542 & 1.7988 & 19.3117 & 0.0442 & 4.8138 & 5.8268 \\
$({0,0,0,0,u_5})$ & 21809.9808 & 2.7204 & 4.3542 & 29.3721 & 1.9883 & 0.8708 & 3.2637 & 4.2294 \\
$({u_1, u_2, 0,0,0})$ & 21810.2029 & 0.5731 & 6.0695 & 1.7988 & 0.2398 & 18.3332 & 3.2637 & 5.5327 \\
$({u_1,0, u_3,0,0}$ & 21806.8647 & 0.5731 & 6.0691 & 1.7990 & 19.3377 & 0.8708 & 3.2637 & 5.8290 \\
$({u_1,0,0, u_4,0})$ & 21806.9186 & 0.5731 & 6.0694 & 1.7988 & 19.3118 & 0.0442 & 4.8138 & 5.7503 \\
$({u_1,0,0,0, u_5})$ & 21810.8219 & 0.5731 & 6.0695 & 29.3721 & 1.9883 & 0.8708 & 3.2637 & 4.1528 \\
$({0,u_2, u_3,0,0})$ & 21809.3619 & 2.7204 & 4.3541 & 1.7989 & 0.2398 & 18.3332 & 3.2637 & 5.6092 \\
$({0, u_2,0, u_4,0})$ & 21810.4071 & 2.7204 & 4.3542 & 1.7988 & 0.1803 & 0.5459 & 35.3498 & 3.9411 \\
$({0,u_2,0,0, u_5})$ & 21810.3301 & 2.7204 & 4.3542 & 29.3736 & 0.0306 & 2.6599 & 3.2637 & 4.1987 \\
$({0,0,u_3, u_4,0})$ & 21806.0776 & 2.7200 & 4.3506 & 1.7993 & 19.3155 & 0.0442 & 4.8138 & 5.8267 \\
$({0,0,u_3,0, u_5})$ & 21810.7239 & 2.3556 & 0.1365 & 37.3270 & 1.9877 & 0.8708 & 3.2637 & 3.7849 \\
$({0,0,0,u_4, u_5})$ & 21810.0347 & 2.7204 & 4.3542 & 29.3721 & 1.9624 & 0.0442 & 4.8138 & 4.1506 \\
$({u_1, u_2, u_3,0,0})$ & 21810.2029 & 0.5731 & 6.0692 & 1.7991 & 0.2398 & 18.3332 & 3.2637 & 5.5327 \\
$({u_1, u_2,0, u_4,0})$ & 21811.2482 & 0.5731 & 6.0695 & 1.7988 & 0.1803 & 0.5459 & 35.3498 & 3.8646 \\
$({u_1, u_2,0,0, u_5})$ & 21811.1712 & 0.5731 & 6.0695 & 29.3736 & 0.0306 & 2.6599 & 3.2637 & 4.1222 \\
$({u_1,0, u_3, u_4,0})$ & 21806.9186 & 0.5731 & 6.0691 & 1.7990 & 19.3118 & 0.0442 & 4.8138 & 5.7503 \\
$({u_1,0, u_3,0, u_5})$ & 21811.7916 & 0.2443 & 0.2451 & 40.0431 & 1.9874 & 0.8708 & 3.2637 & 3.5545 \\
$({u_1,0,0, u_4, u_5})$ & 21810.8757 & 0.5731 & 6.0695 & 29.3721 & 1.9624 & 0.0442 & 4.8138 & 4.0741 \\
$({0,u_2, u_3, u_4,0})$ & 21811.1308 & 2.3573 & 0.1562 & 2.2836 & 0.2135 & 0.6366 & 42.8143 & 3.4896 \\
$({0,u_2, u_3,0, u_5})$ & 21811.0731 & 2.3557 & 0.1365 & 37.3278 & 0.0306 & 2.6599 & 3.2637 & 3.7543 \\
$({0,u_2,0, u_4, u_5})$ & 21810.4854 & 2.7204 & 4.3542 & 29.3736 & 0.0173 & 0.0890 & 7.9757 & 3.9606 \\
$({0,0,u_3, u_4, u_5})$ & 21810.7778 & 2.3556 & 0.1365 & 37.3271 & 1.9618 & 0.0442 & 4.8138 & 3.7061 \\
$({u_1, u_2, u_3, u_4,0})$ & 21812.1756 & 0.2697 & 0.3075 & 2.4480 & 0.2278 & 0.6773 & 45.2534 & 3.2645 \\
$({u_1, u_2, u_3,0, u_5})$ & 21812.1408 & 0.2443 & 0.2451 & 40.0434 & 0.0306 & 2.6599 & 3.2637 & 3.5240 \\
$({u_1, u_2,0, u_4, u_5})$ & 21811.3264 & 0.5731 & 6.0695 & 29.3736 & 0.0176 & 0.0900 & 7.9733 & 3.8841 \\
$({u_1,0, u_3, u_4, u_5})$ & 21811.8455 & 0.2443 & 0.2451 & 40.0429 & 1.9616 & 0.0442 & 4.8138 & 3.4758 \\
$({0, u_2, u_3, u_4, u_5})$ & 21811.2283 & 2.3557 & 0.1365 & 37.3278 & 0.0176 & 0.0901 & 7.9731 & 3.5162 \\
$({u_1, u_2, u_3, u_4, u_5})$ & 21812.2960 & 0.2443 & 0.2451 & 40.0434 & 0.0176 & 0.0901 & 7.9731 & 3.2859 \\
\bottomrule
\end{tabular}
\caption{Person-years in state variables for all control combinations for Strategy C}
\label{03:tab:appendix:person_years_strategy_C}
\end{table}

\begin{table}[htbp]
\centering
\small
\begin{tabular}{@{}l@{\hspace{5pt}}c@{\hspace{9pt}}c@{\hspace{9pt}}c@{\hspace{9pt}}c@{\hspace{9pt}}c@{\hspace{9pt}}c@{\hspace{9pt}}c@{\hspace{9pt}}c@{}}
\toprule
\textbf{Control Combination} & \textbf{S} & \textbf{I$_{SU}$} & \textbf{I$_{SD}$} & \textbf{T$_1$} & \textbf{I$_{RU}$} & \textbf{I$_{RD}$} & \textbf{T$_2$} & \textbf{A} \\
\midrule
$(0,0,0,0,0)$ & 21806.0240 & 2.7202 & 4.3542 & 1.7988 & 19.3376 & 0.8708 & 3.2637 & 5.9055 \\
$({u_1,0,0,0,0})$ & 21806.8647 & 0.5731 & 6.0694 & 1.7988 & 19.3377 & 0.8708 & 3.2637 & 5.8290 \\
$({0,u_2,0,0,0})$ & 21809.3619 & 2.7204 & 4.3542 & 1.7988 & 0.2398 & 18.3332 & 3.2637 & 5.6092 \\
$({0,0,u_3,0,0})$ & 21806.0238 & 2.7199 & 4.3506 & 1.7993 & 19.3414 & 0.8708 & 3.2637 & 5.9055 \\
$({0,0,0,u_4,0})$ & 21806.0778 & 2.7202 & 4.3542 & 1.7988 & 19.3117 & 0.0442 & 4.8138 & 5.8268 \\
$({0,0,0,0,u_5})$ & 21809.9810 & 2.7204 & 4.3542 & 29.3701 & 1.9907 & 0.8708 & 3.2637 & 4.2292 \\
$({u_1, u_2, 0,0,0})$ & 21810.2029 & 0.5731 & 6.0695 & 1.7988 & 0.2398 & 18.3332 & 3.2637 & 5.5327 \\
$({u_1,0, u_3,0,0}$ & 21806.8647 & 0.5731 & 6.0691 & 1.7990 & 19.3377 & 0.8708 & 3.2637 & 5.8290 \\
$({u_1,0,0, u_4,0})$ & 21806.9186 & 0.5731 & 6.0694 & 1.7988 & 19.3118 & 0.0442 & 4.8138 & 5.7503 \\
$({u_1,0,0,0, u_5})$ & 21810.8220 & 0.5731 & 6.0695 & 29.3701 & 1.9907 & 0.8708 & 3.2637 & 4.1526 \\
$({0,u_2, u_3,0,0})$ & 21809.3619 & 2.7204 & 4.3541 & 1.7989 & 0.2398 & 18.3332 & 3.2637 & 5.6092 \\
$({0, u_2,0, u_4,0})$ & 21810.4071 & 2.7204 & 4.3542 & 1.7988 & 0.1803 & 0.5459 & 35.3498 & 3.9411 \\
$({0,u_2,0,0, u_5})$ & 21810.3301 & 2.7204 & 4.3542 & 29.3698 & 0.0342 & 2.6602 & 3.2637 & 4.1986 \\
$({0,0,u_3, u_4,0})$ & 21806.0776 & 2.7200 & 4.3506 & 1.7993 & 19.3155 & 0.0442 & 4.8138 & 5.8267 \\
$({0,0,u_3,0, u_5})$ & 21810.7240 & 2.3556 & 0.1365 & 37.3250 & 1.9900 & 0.8708 & 3.2637 & 3.7848 \\
$({0,0,0,u_4, u_5})$ & 21810.0348 & 2.7204 & 4.3542 & 29.3702 & 1.9648 & 0.0442 & 4.8138 & 4.1504 \\
$({u_1, u_2, u_3,0,0})$ & 21810.2029 & 0.5731 & 6.0692 & 1.7991 & 0.2398 & 18.3332 & 3.2637 & 5.5327 \\
$({u_1, u_2,0, u_4,0})$ & 21811.2482 & 0.5731 & 6.0695 & 1.7988 & 0.1803 & 0.5459 & 35.3498 & 3.8646 \\
$({u_1, u_2,0,0, u_5})$ & 21811.1712 & 0.5731 & 6.0695 & 29.3698 & 0.0342 & 2.6602 & 3.2637 & 4.1221 \\
$({u_1,0, u_3, u_4,0})$ & 21806.9186 & 0.5731 & 6.0691 & 1.7990 & 19.3118 & 0.0442 & 4.8138 & 5.7503 \\
$({u_1,0, u_3,0, u_5})$ & 21811.7916 & 0.2443 & 0.2452 & 40.0407 & 1.9898 & 0.8708 & 3.2637 & 3.5544 \\
$({u_1,0,0, u_4, u_5})$ & 21810.8759 & 0.5731 & 6.0695 & 29.3701 & 1.9648 & 0.0442 & 4.8138 & 4.0739 \\
$({0,u_2, u_3, u_4,0})$ & 21811.1308 & 2.3573 & 0.1562 & 2.2836 & 0.2135 & 0.6366 & 42.8143 & 3.4896 \\
$({0,u_2, u_3,0, u_5})$ & 21811.0731 & 2.3557 & 0.1365 & 37.3249 & 0.0334 & 2.6601 & 3.2637 & 3.7542 \\
$({0,u_2,0, u_4, u_5})$ & 21810.4071 & 2.7204 & 4.3542 & 1.7988 & 0.1803 & 0.5459 & 35.3497 & 3.9411 \\
$({0,0,u_3, u_4, u_5})$ & 21810.7779 & 2.3556 & 0.1365 & 37.3251 & 1.9641 & 0.0442 & 4.8138 & 3.7060 \\
$({u_1, u_2, u_3, u_4,0})$ & 21812.1756 & 0.2697 & 0.3075 & 2.4480 & 0.2278 & 0.6773 & 45.2534 & 3.2645 \\
$({u_1, u_2, u_3,0, u_5})$ & 21812.1408 & 0.2443 & 0.2452 & 40.0406 & 0.0332 & 2.6600 & 3.2637 & 3.5239 \\
$({u_1, u_2,0, u_4, u_5})$ & 21811.2482 & 0.5731 & 6.0695 & 1.7988 & 0.1803 & 0.5459 & 35.3497 & 3.8646 \\
$({u_1,0, u_3, u_4, u_5})$ & 21811.8455 & 0.2443 & 0.2452 & 40.0407 & 1.9639 & 0.0442 & 4.8138 & 3.4757 \\
$({0, u_2, u_3, u_4, u_5})$ & 21811.1414 & 2.3573 & 0.1554 & 10.3456 & 0.1972 & 0.6198 & 34.6006 & 3.4964 \\
$({u_1, u_2, u_3, u_4, u_5})$ & 21812.1960 & 0.2646 & 0.2793 & 12.4787 & 0.2035 & 0.6374 & 35.1276 & 3.2672 \\
\bottomrule
\end{tabular}
\caption{Person-years in state variables for all control combinations for Strategy D}
\label{03:tab:appendix:person_years_strategy_D}
\end{table}

\section*{Acknowledgment}
\label{2_Section_Acknowledgment}

The research work of Ashish Poonia was supported by the Council of Scientific and Industrial Research (CSIR), India (File Number 09/731(0175)/2019-EMR-1).

\section*{Statements and Declarations}
The authors have no conflicts of interest to declare that are relevant to the content of this article.

\section*{Data Availability Statement}

The data will be made available on reasonable request.

\bibliographystyle{unsrt}

\bibliography{bibfile_03}

\end{document}